\definecolor{darkgreen}{rgb}{0,0.5,0}
\newcolumntype{L}[1]{>{\raggedright\arraybackslash}p{#1}} %
\newcolumntype{C}[1]{>{\centering\arraybackslash}p{#1}} %
\newcolumntype{R}[1]{>{\raggedleft\arraybackslash}p{#1}} %
\newtheorem{theorem}{Theorem}
\newtheorem{definition}[theorem]{Definition}
\newtheorem{lemma}[theorem]{Lemma}
\newtheorem{problem}[theorem]{Problem}
\newtheorem{remark}[theorem]{Remark}
\newtheorem{proposition}[theorem]{Proposition} 
\newtheorem{example}[theorem]{Example}
\newcommand{\Z}[1]{\ensuremath{\mathbb{Z}_{#1}}} %
\newcommand{\Fqm}{\ensuremath{\mathbb F_{q^m}}}
\newcommand{\Fq}{\ensuremath{\mathbb F_{q}}}
\newcommand{\K}{\ensuremath{\mathbb K}}
\newcommand{\module}[1]{\ensuremath{\mathfrak{M}(#1)}}
\newcommand{\spannedBy}[1]{\ensuremath{\left\langle #1\right\rangle_q}}
\newcommand{\Pade}{Pad\'e\xspace}
\newcommand{\degConstraint}{\ensuremath{D}}
\newcommand{\SkewPolys}{\Fqm[x;\sigma]}
\renewcommand{\bar}{\overline}
\def\vec#1{{\mathchoice{\mbox{\boldmath$\displaystyle #1$}}%
		{\mbox{\boldmath$\textstyle #1$}}%
		{\mbox{\boldmath$\scriptstyle #1$}}%
		{\mbox{\boldmath$\scriptscriptstyle #1$}}}}
\newcommand{\mat}[1]{\vec{#1}}
\newcommand{\Mat}[1]{\vec{#1}}
\renewcommand{\a}{\vec{a}}
\renewcommand{\b}{\vec{b}}
\renewcommand{\c}{\vec{c}}
\newcommand{\e}{\vec{e}}
\newcommand{\f}{\vec{f}}
\newcommand{\g}{\vec{g}}
\newcommand{\h}{\vec{h}}
\renewcommand{\r}{\vec{r}}
\newcommand{\s}{\vec{s}}
\newcommand{\p}{\vec{p}}
\newcommand{\q}{\vec{q}}
\renewcommand{\t}{\vec{t}}
\renewcommand{\u}{\vec{u}}
\renewcommand{\v}{\vec{v}} %
\newcommand{\w}{\vec{w}}
\newcommand{\x}{\vec{x}}
\newcommand{\y}{\vec{y}}
\newcommand{\z}{\vec{z}}
\newcommand{\A}{\Mat{A}}
\newcommand{\B}{\Mat{B}}
\newcommand{\C}{\Mat{C}}
\newcommand{\D}{\Mat{D}}
\newcommand{\E}{\Mat{E}}
\renewcommand{\P}{\Mat{P}}
\newcommand{\I}{\Mat{I}}
\newcommand{\Q}{\Mat{Q}}
\newcommand{\M}{\Mat{M}}
\newcommand{\R}{\Mat{R}}
\newcommand{\U}{\Mat{U}}
\renewcommand{\Z}{\Mat{Z}}
\newcommand{\0}{\Mat{0}}
\newcommand{\Subspacedist}[1]{d_s(#1)}
\newcommand{\myspace}[1]{\mathcal{#1}}
\newcommand{\oh}[1]{\bnd{O}{#1}}
\newcommand{\bnd}[2]{\ensuremath{#1\mathopen{}\left(#2\right)\mathclose{}}}
\newcommand{\nTransmit}{\ensuremath{{n_{t}}}}
\newcommand{\nReceive}{\ensuremath{n_{r}}}
\newcommand{\insertions}{\ensuremath{\gamma}}
\newcommand{\deletions}{\ensuremath{\delta}}
\newcommand{\intOrder}{\ensuremath{\ell}}
\newcommand{\modl}{\; \mathrm{mod}_\mathrm{l} \;}
\newcommand{\modr}{\; \mathrm{mod}_\mathrm{r} \;}
\newcommand{\reml}{\; \mathrm{rem}_\mathrm{l} \;}
\newcommand{\remr}{\; \mathrm{rem}_\mathrm{r} \;}
\newcommand{\rem}{\; \mathrm{rem} \;}
\newcommand{\myref}[2]{\hyperref[#2]{#1~\ref{#2}}}
\newcommand{\ZZ}{\mathbb{Z}}
\newcommand{\khat}{\hat{k}}
\newcommand{\CIGab}{\mathcal{IC}_\mathrm{Gab}}
\newcommand{\LCIGab}{\mathcal{LIC}_\mathrm{Gab}}
\renewcommand{\l}{\vec{l}}
\newcommand{\removelatexerror}{\let\@latex@error\@gobble}
\definecolor{darkred}{rgb}{0.5,0,0}
\DeclareMathOperator{\cdeg}{cdeg}
\newcommand{\OMul}[1]{\mathcal{M}_{q,m}(#1)}
\newcommand{\Module}{\mathcal{R}}
\newcommand{\IntParam}{\ell'}
\newcommand{\lambdaVec}{\vec{\lambda}}
\newcommand{\vhat}{\hat{\v}}
\newcommand{\lambdaVechat}{\hat{\lambdaVec}}
\newcommand{\smallsum}{{\textstyle\sum}}
\newcommand{\remev}[2]{{#1}\!\left[#2\right]}
\newcommand{\Bset}{\mathcal{B}}
\newcommand{\Prk}{\mathrm{Prk}}
\newcommand{\IPop}[1]{\mathcal{I}_{#1}^{\mathrm{op}}}
\newcommand{\IPrem}[1]{\mathcal{I}_{#1}^{\mathrm{rem}}}
\newcommand{\wtB}{\mathrm{wt}_{\Bset}}
\newcommand{\dB}{\mathrm{d}_{\Bset}}
\newcommand{\MSPop}{\mathcal{M}^{\mathrm{op}}}
\newcommand{\MSPrem}{\mathcal{M}^{\mathrm{rem}}}
\newcommand{\rdeg}{\mathrm{rdeg}}
\newcommand{\llcm}{\mathrm{llcm}}
\newcommand{\OEF}{\mathcal{F}}
\newcommand{\Gal}{\mathrm{Gal}}
\renewcommand{\k}{\vec{k}}
\newcommand{\Qspace}{\mathcal{Q}}
\newcommand{\wtR}{\mathrm{wt}_\mathrm{R}}
\newcommand{\dR}{\mathrm{d}_\mathrm{R}}
\newcommand{\wtSR}{\mathrm{wt}_{\mathrm{SR},\vec{n}}}
\newcommand{\dSR}{\mathrm{d}_{\mathrm{SR},\vec{n}}}
\newcommand{\Jset}{\mathcal{J}}
\newcommand{\Iset}{\mathcal{I}}
\newcommand{\softO}{\tilde{O}} %
\newcommand{\MABin}{\A}
\newcommand{\MABinhat}{\hat{\A}}
\newcommand{\MABout}{\B}
\newcommand{\MABouthat}{\hat{\B}}
\newcommand{\MABoutentry}{B}
\renewcommand{\L}{\vec{L}}
\newcommand{\MABnameFull}[3]{$#2$-ordered weak-Popov approximant basis of $#1$ of order $#3$}
\newcommand{\MABnameFullStandard}{\MABnameFull{\MABin}{\s}{d}}
\newcommand{\RMABnameShort}[3]{\mathsf{owPopovApprox}_{\mathsf{R}}(#1,#2,#3)}
\newcommand{\RMABnameShortStandard}{\RMABnameShort{\MABin}{\s}{d}}
\newcommand{\LMABnameShort}[3]{\mathsf{owPopovApprox}_{\mathsf{L}}(#1,#2,#3)}
\newcommand{\LMABnameShortStandard}{\LMABnameShort{\MABin}{\s}{d}}
\newcommand{\myAlgoComment}[1]{\Comment{\normalfont // #1}}
\newcommand{\Frob}[1]{\phi_{#1}}
\newcommand{\Frobq}{\Frob{q}}
\begin{document} 
\title{Fast Decoding of Codes in the Rank, Subspace, and Sum-Rank Metric}
\author{%
  Hannes Bartz,~\IEEEmembership{Member,~IEEE,}
  Thomas Jerkovits,~\IEEEmembership{Member,~IEEE,}
  Sven Puchinger,~\IEEEmembership{Member,~IEEE,}
  Johan Rosenkilde
  \thanks{Parts of this paper have been presented at the \emph{2019 IEEE Information Theory Workshop (ITW)} \cite{bartz2019fast}.}
  \thanks{H.~Bartz and T.~Jerkovits are with the Institute of Communications and Navigation, German Aerospace Center (DLR), Germany (e-mail: \{hannes.bartz, thomas.jerkovits\}@dlr.de). T.~Jerkovits is also with the Institute for Communications Engineering, Technical University of Munich (TUM), Germany.}
  \thanks{S.~Puchinger is with the Department of Electrical and Computer Engineering, Technical University of Munich, 80333 Munich, Germany (e-mail: sven.puchinger@tum.de). This work was partly done while he was with the Department of Applied Mathematics and Computer Science, Technical University of Denmark (DTU), 2800 Kongens Lyngby, Denmark.}
  \thanks{J.~Rosenkilde is with GitHub Denmark Aps, 2100 Copenhagen (email: jsrn@jsrn.dk). This work was
done while he was with the Department of Applied Mathematics and
Computer Science, Technical University of Denmark (DTU), 2800 Kongens Lyngby, Denmark.}
  \thanks{S.~Puchinger has received funding from the European Union's Horizon 2020 research and innovation program under the Marie Sklodowska-Curie grant agreement no.~713683, and from the German Israeli Project Cooperation (DIP) grant no.~KR3517/9-1.}
  \thanks{©2021 IEEE. Personal use of this material is permitted. Permission from IEEE must be obtained for all other uses, in any current or future media, including reprinting/republishing this material for advertising or promotional purposes, creating new collective works, for resale or redistribution to servers or lists, or reuse of any copyrighted component of this work in other works.}
}

\maketitle

\begin{abstract}
We speed up existing decoding algorithms for three code classes in different metrics: interleaved Gabidulin codes in the rank metric, lifted interleaved Gabidulin codes in the subspace metric, and linearized Reed--Solomon codes in the sum-rank metric.
The speed-ups are achieved by new algorithms that reduce the cores of the underlying computational problems of the decoders to one common tool: computing left and right approximant bases of matrices over skew polynomial rings.
To accomplish this, we describe a skew-analogue of the existing \textsf{PM-Basis} algorithm for matrices over ordinary polynomials.
This captures the bulk of the work in multiplication of skew polynomials, and the complexity benefit comes from existing algorithms performing this faster than in classical quadratic complexity.
The new algorithms for the various decoding-related computational problems are interesting in their own and have further applications, in particular parts of decoders of several other codes and foundational problems related to the remainder-evaluation of skew polynomials.

\end{abstract}

\begin{IEEEkeywords}
	Rank Metric, Subspace Metric, Sum-Rank Metric, Interleaved Gabidulin Codes, Lifted Interleaved Gabidulin Codes, Linearized Reed--Solomon Codes, Fast Decoding, (Minimal) Approximant Basis, Interpolation-Based Decoding
\end{IEEEkeywords}

\IEEEpeerreviewmaketitle

\section{Introduction}

\noindent
We consider algorithms for decoding certain codes in three different metrics -- rank, subspace and sum-rank metric -- all of which arise as evaluation-like codes of skew polynomials.
Skew polynomials are non-commutative polynomials, where the right multiplication of a scalar $\alpha \in \Fqm$ and the indeterminate $x$ is given as $x \alpha = \sigma(\alpha)x$, where $\sigma$ is an automorphism of $\Fqm$.
The ring of these polynomials is denoted $\SkewPolys$; see Section~\ref{ssec:skew_polys} for the formal definition.

We consider existing decoding principles for the codes and show for each how to speed it up by reducing the core computation to an \emph{approximant basis} computation of matrices over the relevant skew polynomial ring.
A reduction to a similar problem for matrices over ordinary polynomial rings has proved beneficial in speeding up decoding of a number of evaluation codes in the Hamming metric and its soft relaxations \cite{jeannerod_computing_2017}.
Given a matrix $\MABin \in \SkewPolys^{a \times b}$ and an ``order'' $d \in \ZZ_{\geq 0}$, a left approximant basis is a matrix $\MABout \in \SkewPolys^{a \times a}$ such that $\MABout \MABin \equiv 0 \modr x^d$ (congruence right-modulo $x^d$, cf.~Section~\ref{ssec:skew_polys}), and such that $\MABout$ is in a certain normal form while satisfying that any vector $\vec b \in \SkewPolys^{1 \times a}$ such that $\vec b \MABin \equiv 0 \modr x^d$ is in the left $\SkewPolys$-row space of $\MABout$, see Section~\ref{sec:order_bases}.
An analogous definition is given for right approximant bases.
Approximant bases for skew polynomials (more generally, for Ore polynomials) were introduced in \cite{beckermann_fraction-free_2002} (under the name ``order basis'').

\subsection{Main Results}

\begin{table*}
  \caption{%
    Overview of new decoding speeds. Parameters: code length $n$, interleaving parameter $\ell$ (usually $\ell \ll n$).
  For subspace codes, $\nTransmit$ resp.~$\nReceive$ is the dimension of the transmitted resp.~received subspace.
  $\OMul{n}$ is the cost (in operations in $\Fqm$ or $\Fq$) of multiplying two skew-polynomials of degree at most $n$ and $\omega$ is the matrix multiplication exponent, see Sections~\ref{ssec:cost_model} and \ref{ssec:cost_skew_poly_operations}.
}\label{tab:overview_decoders}
\begin{minipage}{\textwidth}
\begin{center}
\newcommand{\specialcell}[2][c]{{\def\arraystretch{1.2}\begin{tabular}[#1]{@{}l@{}}#2\end{tabular}}}
\newcommand{\specialcellsmalldistleftalign}[2][c]{{\def\arraystretch{1}\begin{tabular}[#1]{@{}l@{}}#2\end{tabular}}}
\newcommand{\specialcellcenter}[2][c]{{\def\arraystretch{1}\begin{tabular}[#1]{@{}c@{}}#2\end{tabular}}}
\def\arraystretch{2.0}
\begin{tabular}{l|l|l|l|l|l}
Metric & Code Class & \specialcellsmalldistleftalign{Previously Fastest \\ Decoder (over $\Fqm$)} & \specialcell{Considered Decoder \& \\ Complexity (over $\Fqm$)} & \specialcellsmalldistleftalign{Our Complexity (over the base field\\ of the cost bound $\OMul{n}$)} & Reference \\
\hline \hline
Rank & \specialcellsmalldistleftalign{Interleaved \\ Gabidulin} & $\softO(\ell^\omega \OMul{n})$ \hfill \cite{sidorenko2014fast} & $O(\ell^2 n^2)$ \cite{wachter2014list} & $\softO(\ell^\omega \OMul{n})$ & \specialcellsmalldistleftalign{Theorem~\ref{thm:rank_complexity_summary} \\ Section~\ref{sec:rank_and_subspace}} \\ 
\hline
Subspace & \specialcellsmalldistleftalign{Lifted Interleaved \\ Gabidulin} & $O(\ell^2 \max\{\nTransmit,\nReceive\}^2)$ \hfill \cite{BartzWachterZeh_ISubAMC} & see previously fastest 
& \specialcellsmalldistleftalign{$\softO\!\left(\intOrder^\omega\OMul{\max\{\nTransmit,\nReceive\}} \right)$ \\ plus $O(\intOrder m \nReceive^{\omega-1})$ operations in $\Fq$}
& \specialcellsmalldistleftalign{Theorem~\ref{thm:subspace_complexity_summary} \\ Section~\ref{sec:rank_and_subspace}} \\
\hline
\specialcellsmalldistleftalign{Sum-Rank/ \\ Skew} & \specialcellsmalldistleftalign{Linearized/Skew \\ Reed--Solomon} & $O(n^2)$ \hfill \cite{martinez2019reliable} & see previously fastest &  $\softO(\OMul{n})$ & \specialcellsmalldistleftalign{Theorem~\ref{thm:sum-rank_summary} \\ Section~\ref{sec:sum-rank}} %
\end{tabular}
\end{center}
\end{minipage}
\end{table*}

\begin{table*}
  \caption{%
    Overview of computational tools used to achieve faster decoding algorithms with the complexity of existing algorithms and the proposed ones.
    We indicate the metric which the computational problem is a priori relevant for (R$=$rank, S$=$subspace, and Sr$=$sum-rank metric), and indicate other potential applications discussed in Section~\ref{ssec:further_applications}.
    For $\OMul{n}$ and $\omega$, see Table~\ref{tab:overview_decoders} above.
}\label{tab:overview_tools}
\begin{minipage}{\textwidth}
\begin{center}
\def\arraystretch{1.5}
\setlength{\tabcolsep}{3pt}
\newcommand{\specialcell}[2][c]{{\def\arraystretch{1.2}\begin{tabular}[#1]{@{}l@{}}#2\end{tabular}}}
\newcommand{\specialcellsmalldistleftalign}[2][c]{{\def\arraystretch{1}\begin{tabular}[#1]{@{}l@{}}#2\end{tabular}}}
\newcommand{\specialcellcenter}[2][c]{{\def\arraystretch{1}\begin{tabular}[#1]{@{}c@{}}#2\end{tabular}}}
\begin{tabular}{L{5.6cm}|L{3.4cm}|L{3.7cm}|L{0.2cm}|L{0.2cm}|L{0.2cm}|L{3cm}}
Computational Problem
	& Previous Complexity (over $\Fqm$)
	& Our Complexity (over the base field of $\OMul{n}$)
	& R
	& S
	& Sr
	& Further Applications \\
	\hline \hline
Computation of a right/left $\s$-ordered weak-Popov approximant basis of order $d$ of an $a \times b$ skew-polynomial matrix (Definition~\ref{def:minimal_approximant_basis})
	& $O(a^3 b^2 d^2)$ \cite{beckermann_fraction-free_2002} (left case only)
	& Left/right case, respectively: $\softO\big(a^{\omega-1}\max\{a,b\} \OMul{d}\big)$, $\softO\big(\max\{a,b\}b^{\omega-1} \OMul{d}\big)$
(Theorem~\ref{thm:correctness_DaCApp} in Section~\ref{ssec:fast_order_bases_computation})
	& \textsf{X}
	& \textsf{X}
	& \textsf{X}
	&  \\
	\hline
Vector Operator Interpolation (Problem~\ref{prob:general_interpolation_problem}) with $n$ interpolation points (vectors in $\Fqm^{\ell+1}$) and degree constraint $D$ \hspace{2cm} (complexities given for $D \in \Theta(n)$).
	& $O(\ell^2 n^2)$ \cite{xie_linearized_2013}, $\softO(\ell^3 \OMul{\ell n})$ on special input \cite{alekhnovich_linear_2005} %
	& $\softO(\ell^\omega \OMul{n})$ \hspace{1cm} plus, under some conditions, $O(\intOrder m n^{\omega-1})$ operations in $\Fq$ (Theorem~\ref{thm:fast_interpolation_correctness_complexity} in Section~\ref{ssec:interpolation_speed-up}) %
	& \textsf{X}
	& \textsf{X}
	& 
	& Interpolation step of decoding Mahdavifar-- Vardy and (lifted) folded Gabidulin.\\
	\hline
Vector Root Finding (Problem~\ref{prob:general_root-finding_problem}) for a set of $\ell' \leq \ell+1$ skew polynomial vectors of dimension $\ell+1$, degree at most $n$, with degree constraints $k^{(1)},\dots,k^{(\ell)}$ \hspace{2cm} (complexities given for $\max_i k^{(i)} \in \Theta(n)$)
	& $O(\ell^3 n^2)$ %
	\cite{wachter2014list}, \hspace{1cm} $O(\ell^2 n^2)$ %
	on special input \cite{BartzWachterZeh_ISubAMC}
	& $\softO\!\left( \ell^\omega \OMul{n} \right)$ \hspace{0.5cm} %
	(Theorem~\ref{thm:fast_root_finding_correctness_complexity} in Section~\ref{ssec:root_finding_speed-up})
	& \textsf{X}
	& \textsf{X}
	& 
	&  \\
	\hline
Remainder-Evaluation Operations (Problem~\ref{prob:remainder_arithmetic}):
annihilator polynomial computation, multi-point evaluation, and interpolation (number of points and polynomial degrees $\leq n$) of skew polynomials w.r.t.\ the remainder evaluation.
	& $O(n^2)$ \cite{martinez2019reliable}
	& $\softO(\OMul{n})$ (Theorems~\ref{thm:fast_remainder_MPE}--\ref{thm:fast_remainder_interpolation}, Section~\ref{ssec:fast_remainder_ev_arith})
	& 
	& 
	& \textsf{X}
	& Encoding linearized/skew Reed--Solomon codes. Repair in the locally repairable / PMDS codes in \cite{martinez2019universal}. \\
	\hline
$2$D Vector Remainder Interpolation (Problem~\ref{prob:bivariate_remainder_interpolation}) with $n$ interpolation points (vectors in $\Fqm^{2}$).
	& $O(n^2)$ \cite{martinez2019reliable}
	& $\softO(\OMul{n})$ \hspace{0.5cm} (Theorem~\ref{thm:bivariate_remainder_interpolation_summary} in Section~\ref{ssec:bivariate_remainder_interpolation})
	& 
	& 
	& \textsf{X}
	&  \\

\end{tabular}
\end{center}
\end{minipage}
\end{table*}

\noindent
Our central computational result (Theorem~\ref{thm:correctness_DaCApp}) is an algorithm for computing a right or left minimal approximant basis of an $a \times b$ matrix of order $d$, whose complexity's dependency on the order $d$ is only $\OMul{d}$ (see Table~\ref{tab:overview_tools} for more details), where $\OMul{d}$ is the cost of multiplying two skew polynomials of degree at most $d$ (see Section~\ref{ssec:cost_model}), given in operations in some field (e.g., $\Fqm$ or $\Fq$).
The algorithm is a right (resp.~left) adaptation of the \textsf{PM-Basis} algorithm for computing minimal approximants over ordinary polynomial rings \cite{giorgi_complexity_2003}.

In Sections~\ref{sec:rank_and_subspace} and \ref{sec:sum-rank}, we provide new speed records for decoding certain codes in the rank, subspace, and sum-rank metric; see Table~\vref{tab:overview_decoders} for a summary.

Each of these speed records are achieved by replacing the bottleneck computations in an existing decoding principle with a left or right minimal approximant basis.
To enable these results, we give fast algorithms for a number of decoding-related computational problems which we believe may be interesting in their own right.
Most of these new algorithms rely on fast computation of approximant bases.
See Table~\vref{tab:overview_tools} for an overview of these problems.

During the revision of this paper, we became aware of the preprint \cite{caruso2019residues} by Caruso. Using different techniques, he obtains a decoding algorithm for linearized Reed--Solomon codes and algorithms for remainder-evaluation operations with the same complexity as ours.

\subsection{The Studied Codes and Their History}\label{ssec:intro_codes}
\noindent
\emph{Rank-metric codes} are sets of matrices whose distance is measured by the rank of their difference.
These codes and their most famous subclass, Gabidulin codes, were independently introduced in \cite{Delsarte_1978,Gabidulin_TheoryOfCodes_1985,Roth_RankCodes_1991}.
By now, applications of rank-metric codes abound and include criss-cross error correction in memory chips, space-time codes for MIMO systems, code-based cryptography, network coding, distributed data storage, and digital watermarking.

Interleaved Gabidulin codes are direct sums of $\ell$ Gabidulin codes of the same length over an extension field $\Fqm$: codewords can be represented as an $\Fq^{\ell m \times n}$ matrix by stacking Gabidulin codewords as $\Fq^{m \times n}$ matrices.
If such a matrix is subjected to a random error with a low $\Fq$-rank, we can correct that error with high probability even if the rank exceeds half the minimum distance of the constituent Gabidulin code.
The downside is the rectangular shape of the codewords (since $n \leq m$).
Besides being suitable for any application of rank-metric codes with such a rectangular codeword shape, interleaved Gabidulin codes have been explicitly used in works on network coding \cite{silva2008rank,sidorenko2010decoding} and code-based cryptography \cite{faure2006new,overbeck2007public}.

There are several known polynomial-time decoding algorithms for $\ell$-interleaved Gabidulin codes of length $n$.
All of these algorithms correct up to roughly $\tfrac{\ell}{\ell+1}(n-\bar{k}+1)$ errors, where $\bar{k} := \tfrac{1}{\ell}\sum_i k_i$ is the mean of the dimensions $k_i$ of the constituent Gabidulin codes.
The first-known decoder is due to Loidreau and Overbeck \cite{loidreau2006decoding}.
It is a \emph{partial unique decoder}, which means that for error weights beyond half the minimum distance, it either returns a unique decoding result or fails.
The algorithm is based on solving a linear system of equations and has complexity $O(\ell n^\omega)$. %
Loidreau and Overbeck also derived an upper bound on the relative number of errors of rank $t$ for which the decoder fails. For $t\geq \ell$, it decays exponentially in $m(t-\ell)$.
Sidorenko and Bossert \cite{sidorenko2010decoding} proposed a partial unique decoder for interleaved Gabidulin codes that solves a syndrome key equation.
The algorithm can be implemented in $O(\ell n^2)$ operations in $\Fqm$ using a Berlekamp--Massey-like algorithm \cite{sidorenko2011skew} or the demand-driven row reduction algorithm in \cite{puchinger2017row}.
There is also a divide-\&-conquer approach \cite{sidorenko2014fast} that solves the key equation in $\softO(\ell^\omega \OMul{n})$ operations in the base field of the cost bound $\OMul{n}$.

In this paper, we consider the interpolation-based decoder by Wachter-Zeh and Zeh \cite{wachter2014list},
which returns a list of all codewords within a decoding radius less than $\tfrac{\ell}{\ell+1}(n-\bar{k}+1)$.
It can also be seen as a partial unique decoder by declaring a decoding failure if this list is greater than $1$.
Such a failure event occurs at most in those cases in which the Loidreau--Overbeck decoder fails (see \cite[Lemma~8]{wachter2014list}).
The algorithm consists of an \emph{interpolation step} and a \emph{root-finding step} and has complexity $O(\ell^3n^2)$ operations in $\Fqm$. 
If there is a unique solution to the decoding problem, then the complexity can be reduced to $O(\ell^2n^2)$ \cite{BartzWachterZeh_ISubAMC}.

\emph{Subspace codes} are sets of subspaces of a given vector space that have distance properties w.r.t.\ the \emph{subspace metric}~\cite{koetter2008coding}. Beside the initial application of subspace codes as linear authentication codes~\cite{wang2003linear}, subspace codes were proposed by Kötter and Kschischang for error correction in network coding~\cite{koetter2008coding}.
In (random) linear network coding, errors in the network may propagate through the network due to the linear combination of the incoming packets at intermediate nodes.
In particular, a single corrupted packet would in turn corrupt all later linear combinations which include this packet.
The main idea for subspace codes comes from the observation that the row space of transmitted packets is preserved by the linear operations at the intermediate nodes of the network, and few errors in the network result in a small subspace distance between transmitted and received subspace.
Besides the initial constructions of subspace codes based on Gabidulin codes, so called \emph{lifted Gabidulin} codes, in~\cite{koetter2008coding,silva2008rank,silva2009error}, variants with improved error-correction capabilities, including \emph{interleaved} lifted Gabidulin codes~\cite{BartzWachterZeh_ISubAMC}, were proposed.
The currently fasted decoding algorithms for lifted interleaved Gabidulin codes that attain the best decoding region are the syndrome-based approach from~\cite{bartz2017improved} which requires $\oh{\ell^3\nTransmit^3}$ operations in $\Fqm$ and the interpolation-based decoder from~\cite{BartzWachterZeh_ISubAMC} which requires $\oh{\ell^2\max\{\nTransmit,\nReceive\}^2}$ operations in $\Fqm$, where $\ell$ is the interleaving order and $\nTransmit$ and $\nReceive$ are the dimension of the received and transmitted space, respectively.
We improve the cost of the latter algorithm.

\emph{The sum-rank metric} is a family of metrics interpolating the Hamming and rank metric which was first introduced in \cite{nobrega2010multishot} as being suitable for multi-shot network coding.
There are several known codes designed for this metric: partial unit memory codes constructed from rank-metric codes \cite{wachter2011partial,wachter2012rank,wachter2015convolutional}, convolutional codes \cite{napp2017mrd,napp2018faster}, as well as linearized Reed--Solomon codes \cite{martinez2018skew}.
The latter codes can be seen as a combination of Reed--Solomon and Gabidulin codes, attain the Singleton bound in the sum-rank metric with equality, and are closely related to skew Reed--Solomon codes in the skew metric \cite{boucher2014linear,martinez2018skew}.

Linearized Reed--Solomon codes have recently shown to provide reliable and secure coding schemes for multi-shot network coding \cite{martinez2019reliable}.
Furthermore, there is a construction \cite{martinez2019universal} of locally repairable codes with maximal recoverability (also known as partial MDS codes) based on linearized Reed--Solomon codes, which attains the smallest known field size among all existing code constructions for a wide range of code parameters.

We are aware of two decoding algorithms for linearized and skew Reed--Solomon codes in the literature, both of which are variants of the Welch--Berlekamp decoder for Gabidulin codes \cite{loidreau2006welch}.
One is due to Boucher \cite{boucher2019algorithm} and has cubic complexity $O(n^3)$ over $\Fqm$ in the code length $n$. The other one is quadratic $O(n^2)$ over $\Fqm$ and was presented by Martínez-Peñas and Kschischang \cite{martinez2019reliable}.
Our work is based on the latter.

\subsection{History of Computational Tools}\label{ssec:intro_computational_techniques}

The history of approximant bases starts with matrices over ordinary polynomials $\K[x]$, for a field $\K$.
They are also known as ``minimal approximant bases'', ``order bases'', and ``$\sigma$-bases'', and arose as matrix generalizations of simultaneous and Hermite \Pade approximations through a range of papers in the 1990's, especially \cite{beckermann_uniform_1992,barel_general_1992,beckermann_uniform_1994}; the latter paper presents fairly efficient algorithms for computing approximant bases.
``Shifted'' approximant bases were also introduced in these papers.
An immediate application of an approximant basis of a matrix $\MABin$ is that a subset of its rows form a generating set for all small-degree vectors in the left (resp.~right) kernel of $\MABin$.
Several other computations on polynomial matrices can be reduced to approximant bases, e.g.~row reduced forms \cite{giorgi_complexity_2003,gupta_triangular_2012}; determinants \cite{giorgi_complexity_2003}; Popov and Hermite form \cite{neiger_fast_2016}; even more general approximations \cite{jeannerod_computing_2017}; full-rank bases  and unimodular completion \cite{zhou_unimodular_2014}; and kernel bases \cite{zhou_computing_2012}.
Computing a (left) approximant basis of $\MABin \in \K[x]^{a \times b}$ with $a \leq b$ in roughly the time it takes to multiply two $a \times a$ polynomial matrices together was given as the \textsf{PM-Basis} algorithm in \cite{giorgi_complexity_2003}.
For $a \gg b$, this cost can be improved using ``partial linearization'', see \cite{zhou_efficient_2012} for unshifted or slightly shifted matrices, and \cite{jeannerod_fast_2016,jeannerod_fast_2019} for the general case which requires many more tools.

The notion of approximant is based on ``row reducedness'', see e.g.~\cite{kailath_linear_1980}, which is a matrix over $\K[x]$ whose rows have smallest degree among matrices whose rows span the same $\K[x]$-module.
The Popov form is a row reduced form that is normalised to be canonical \cite{popov_properties_1970}, and the weak Popov form is stronger than a row reduced form, but weaker than the Popov form \cite{mulders2003lattice}.
It seems computationally somewhat more challenging to efficiently compute a reduced form of a matrix than to compute an approximant basis, and the fastest techniques we currently know in the commutative case effectively reduce the former to the latter \cite{giorgi_complexity_2003,neiger_fast_2016}.
Many problems in coding theory which can be solved by approximant bases can instead be solved by row reduction, see e.g.~\cite{nielsen2013list}.

Turning to the non-commutative case, then approximant bases for matrices over skew polynomials, or more generally Ore polynomials (see Section~\ref{ssec:skew_polys}), were introduced in \cite{beckermann_fraction-free_2002}.
That paper, as well as much other literature on computations on Ore polynomials, is concerned with the case where $\K$ is infinite so coefficient growth quickly becomes the computational bottleneck.
To address this, the algorithm of \cite{beckermann_fraction-free_2002} generalises ``fraction-free'' techniques from the commutative case \cite{beckermann_fraction-free_2000}.
When $\K$ is finite, this is however slower than the algorithms of \cite{beckermann_uniform_1994,giorgi_complexity_2003}.
In this paper we consider $\K = \Fqm$ and in particular generalize the algorithm of \cite{giorgi_complexity_2003}.
This turns out to be conceptually straightforward but rather technical.
We will also introduce both a left and a right version of the algorithm; the two cases are of course very similar but subtly different.

Row reducedness and Popov forms were introduced for skew polynomial matrices in \cite{beckermann_fraction-free_2002} using a fraction-free approach.
In \cite{puchinger2017row,puchinger2017alekhnovich} some of us were involved in generalizing the methods of \cite{mulders2003lattice,alekhnovich_linear_2005} which are more efficient when $\K = \Fqm$, and applied this to some of the same decoding problems that we address in the present paper; the algorithms of the present paper are all asymptotically more efficient, see Table~\ref{tab:overview_decoders}.

Besides approximant bases, we study several computational problems that are related to the considered decoding algorithms (see Table~\ref{tab:overview_tools}).

The interpolation and root-finding steps of the interpolation-based decoders in \cite{wachter2014list,BartzWachterZeh_ISubAMC} are instances of the following two computational problems (see Section~\ref{ssec:rank_subspace_computational_problems}):
1) the \emph{vector interpolation problem} (Problem~\ref{prob:general_interpolation_problem}) was first considered in \cite{xie_linearized_2013} to decode Gabidulin, lifted Gabidulin, and Mahdavifar--Vardy codes.
The relation to decoding interleaved Gabidulin codes was given in \cite{wachter2014list} and lifted interleaved Gabidulin codes in \cite{BartzWachterZeh_ISubAMC}.
The problem is also called \emph{bivariate interpolation} since its solutions can be seen as formal bivariate polynomials of bounded $y$-degree with skew-polynomial coefficients, and the problem statement requires these polynomials to satisfy an evaluation condition and degree bound.
Hence, it can be seen as the skew-polynomial analog of the Sudan decoder interpolation step.
2) the \emph{vector root-finding problem} (Problem~\ref{prob:general_root-finding_problem}) was first considered in \cite{wachter2014list} for decoding interleaved Gabidulin codes, where also the currently fastest algorithm was given. The problem was also studied in \cite{BartzWachterZeh_ISubAMC} for decoding lifted interleaved Gabidulin codes. The authors of \cite{BartzWachterZeh_ISubAMC} also present an algorithm that is faster if the solution space has cardinality~$1$.

The two core computational problems of algorithm in \cite{martinez2019reliable} for decoding linearized (or skew) Reed--Solomon codes in the sum-rank (or skew) metric are: 1) fast operations with skew polynomials w.r.t.\ to the remainder evaluation.
This type of evaluation was first studied in \cite{lam1985general,lam1988vandermonde}, it was first used to construct block codes in \cite{boucher2014linear}, and
the currently fastest algorithms to compute the relevant operations were given in \cite{martinez2019reliable}.
2) a $2$-dimensional vector remainder interpolation, which can be seen as the analog of the Welch--Berlekamp reconstruction problem for skew polynomials w.r.t.\ the remainder evaluation. This problem was first studied in \cite{liu2015construction}, and later in \cite{boucher2019algorithm,martinez2019reliable}. The currently fastest algorithm to solve this problem was proposed in \cite{martinez2019reliable}.

\subsection{Reader's Guide}

We set notation, define our cost model, and recall known results on skew polynomials in Section~\ref{sec:preliminaries}.
In Section~\ref{sec:order_bases}, we analyze left and right approximant bases over skew polynomial rings and propose new, faster, algorithms to compute them.
These results lay the foundation for the remainder of the paper, which discusses computational problems related to decoding rank-metric and subspace codes (Section~\ref{sec:rank_and_subspace}) as well as sum-rank-metric codes (Section~\ref{sec:sum-rank}).
These two sections are independent of each other.
Both of them start by a subsection that formally states the relevant computational problems (cf.~Table~\ref{tab:overview_tools}) and recalls their relation to the considered decoders.
The respective remaining subsections propose new algorithms to solve these computational problems.
We conclude the paper in Section~\ref{sec:conclusion}, including several remarks on generality, further applications of the results, and some open problems.
The appendix includes some extended results out of the main scope of the paper, as well as examples.

\section{Preliminaries}\label{sec:preliminaries}

Let $q$ be a prime power, $m$ be a positive integer, and denote by $\Fq$ and $\Fqm$ the finite field of size $q$ and $q^m$, respectively.
The field $\Fqm$ is an extension field of $\Fq$ of extension degree $m$ and hence also a vector space over $\Fq$ of dimension $m$.
The Galois group of the extension is cyclic and consists of the powers of the Frobenius automorphism $\Frobq : \Fqm \to \Fqm, \, \alpha \mapsto \alpha^q$, i.e., $\Gal(\Fqm/\Fq) = \{\Frobq^i \, : \, i=0,\dots,m-1\}$. The generators of the Galois group are the $\Frobq^i$ with $\gcd(i,m)=1$.

\subsection{Cost Model}\label{ssec:cost_model}

We use the big-O notation family to state asymptotic costs of algorithms, and $\softO{(\cdot)}$ which neglects logarithmic factors in the input parameter.
Furthermore, we express the cost of algorithms either in arithmetic operations in the field $\Fqm$ or over $\Fq$: here we include not only $+, -, \cdot$ and $/$, but also applications of a (specific) automorphism $\sigma \in \Gal(\Fqm/\Fq)$.
This is uncommon in the literature on computation at large, but has become standard for work on Gabidulin codes and related codes.
The basic reasoning is that if the extension $\Fqm : \Fq$ is built using a normal basis (see, e.g., \cite{gao1993normal}) %
and $\sigma = \Frobq^i$, then $\sigma(a)$ is simply the cyclic shift of $i$ positions of the vector description of $a$ over $\Fq$ in that basis.
However, multiplication is not a priori as efficient in normal bases as it is in power bases, and the complications arise when attempting requiring that all operations are fast simultaneously.
We let $\OEF(m)$ denote an upper bound on the cost of all of these operations in $\Fqm$ in terms of operations in $\Fq$.
Couveignes and Lercier \cite{couveignes2009elliptic} showed that it is possible to choose a basis such that $\OEF(m) \in \softO(m)$, and we will mostly assume such a basis.
In practice and for small $m$ it might well be faster to use either a power basis with $\OEF(m) \in \softO(m^2)$ (bottleneck being applications of $\sigma$) or a normal basis with $\OEF(m) \in O(m^2)$ (bottleneck being multiplication and division).

In cost bounds, we denote by $\omega$ the matrix multiplication exponent, i.e.~the infimum of values $\omega_0 \in [2; 3]$ such that there is an algorithm for multiplying $n \times n$ matrices over $\Fqm$ in $O(n^{\omega_0})$ operations in $\Fqm$.
The currently known best bound is $\omega < 2.37286$ \cite{le_gall_powers_2014}.

\subsection{Skew Polynomials}
\label{ssec:skew_polys}

In this paper, all codes and algorithms are defined over skew polynomials which are non-commutative polynomials and were introduced by Ore in \cite{ore1933theory}; for this reason they are also known as Ore polynomial rings.
The general construction over any field $\mathbb{K}$ uses an endomorphism $\sigma$ and a ``$\sigma$-derivation'' $\delta : \mathbb{K} \rightarrow \mathbb{K}$, and can be used for unifying theoretical and computational questions on linear differential equations, time-dependent systems and recursively defined sequences of numbers, see e.g. \cite{bronstein_introduction_1996}, sometimes in the specialisation of $D$-finiteness, see e.g. \cite{kauers_holonomic_2013}.

We will only use the specialisation where $\mathbb{K} = \Fqm$, $\sigma = \Frobq^i$ with $\gcd(i,m) = 1$ (i.e.~$\Gal(\Fqm/\Fq) = \langle\sigma\rangle$), and $\delta = 0$.
When $i = 1$, these rings are isomorphic to linearized polynomials, which were also introduced by Ore \cite{ore_special_1933}, and for $i > 1$ behave in much the same way.
Besides their applications in coding theory, these are studied in cryptography~\cite{faure2006new}, dynamical systems \cite{cohen2000dynamics}, and are of theoretical interest \cite{ore_special_1933,evans1992linearized,wu2013linearized}.
In the remainder of the paper, when we say ``skew polynomials'', we mean this restricted setting.
They are sometimes also called \emph{twisted polynomials} or \emph{$\sigma$-polynomials}.

A \emph{skew polynomial} (in our restricted setting) is then a formal polynomial sum $f = \sum_{i \geq 0} f_i x^i$, indexed by powers of an indeterminant $x$, and with only a finite number of $f_i \in \Fqm$ being non-zero.
We add two polynomials monomial-wise as for ordinary polynomials.
Multiplication of skew polynomials is defined by the rule
\begin{equation}
x \cdot a = \sigma(a) \cdot x \label{eq:skew_polynomials_multiplication_rule}
\end{equation}
for any $a \in \Fqm$.
By associativity and distributivity, we have
\begin{equation}
f \cdot g = \sum_{i \geq 0} \Big( \sum_{j \geq 0} f_j \sigma^j (g_{i-j}) \Big) x^i. \label{eq:skew_polynomials_multiplication}
\end{equation}
for any two skew polynomials $f = \sum_i f_i x^i$ and $g = \sum_j g_j x^j$, where we define $f_i = g_i = 0$ for $i < 0$.
The set of skew polynomials with this addition and multiplication rule is a non-commutative integral domain and denoted by $\SkewPolys$.

The \emph{degree} of a skew polynomial is defined by
\begin{equation*}
\deg f := \begin{cases}
\max\{i \, : \, f_i \neq 0\}, &\text{if } f \neq 0, \\
-\infty, &\text{otherwise.}
\end{cases}
\end{equation*}
As for ordinary polynomials, we have $\deg(f \cdot g) = \deg f + \deg g$, and $\deg(f + g) \leq \max\{\deg f, \, \deg g\}$, where equality holds in the latter iff $\deg f \neq \deg g$ or $\deg f = \deg g$ and the leading coefficients of $f$ and $g$ do not sum to zero.

There is both a left and right division algorithm, hence the ring is left and right Euclidean.
Let $f,g,h \in \SkewPolys$ such that $h \neq 0$.
We denote the remainder of the left division of $f$ by $h$ as $f \reml h$, i.e., $f \reml h$ is the unique skew polynomial of degree $< \deg h$ for which $f \reml h = f-h \chi$ for some $\chi \in \SkewPolys$.
Analogously, the remainder w.r.t.\ the right division is denoted by $f \remr h$ (in this case we have $f \remr h = f- \chi h$ for some $\chi \in \SkewPolys$).
We say that $f$ and $g$ are congruent left-modulo $h$, written $f \equiv g \modl h$, if $f-g$ is divisible by $h$ from the left (i.e., $(f-g) \reml h = 0$).
Likewise, %
$f \equiv g \modr h$ if $(f-g) \remr h = 0$.

Since $\SkewPolys$ is left and right Euclidean, it is also a left and right principal ideal domain.
This implies that left and right modules over $\SkewPolys$ share many important properties with modules over $\Fqm[x]$.
For instance, any left or right submodule of $\SkewPolys^a$ is free and any two basis of such a submodule have the same number of elements. 
Hence, the rank of a module is well-defined.
Furthermore, two $a \times b$ matrices $\MABout_1,\MABout_2$ over $\SkewPolys$ generate the same left row (or right column) space if and only if there is an invertible $a \times a$ ($b \times b$) matrix $\U$ with $\MABout_1 = \U \MABout_2$ ($\MABout_1 = \MABout_2 \U$, resp.).
See, e.g., \cite{clark2012non} for more details.

\subsection{Evaluations of Skew Polynomials}\label{ssec:evaluation_maps}

It turns out that skew polynomials give rise to multiple notions of mappings \cite{boucher2014linear} which behave similarly to evaluation of ordinary polynomials, and these can each be used to build ``evaluation codes'' from skew polynomials.
In this paper, we consider two such ``evaluations'':
\begin{itemize}
\item operator evaluation (used in Section~\ref{sec:rank_and_subspace}) and
\item remainder evaluation (used in Section~\ref{sec:sum-rank}).
\end{itemize}
We will distinguish the two evaluation types notationally by their brackets (soft for operator and square for remainder evaluation), see below.

The \emph{operator evaluation map} of a skew polynomial $f = \sum_{i} f_i x^i \in \SkewPolys$ is defined as
\begin{equation*}
f(\cdot) \, : \, \Fqm \to \Fqm, \, \alpha \mapsto \textstyle\sum_{i} f_i \sigma^i(\alpha).
\end{equation*}
For any $f,g \in \SkewPolys$ and $\alpha \in \Fqm$, we have the following sum and product rule:
\begin{align*}
(f+g)(\alpha) &= f(\alpha) + g(\alpha) \\
(f \cdot g)(\alpha) &= f(g(\alpha)).
\end{align*}
Since $\sigma$ is an $\Fq$-linear map, also $f(\cdot)$ is an $\Fq$-linear map and the (operator) root space $\ker f(\cdot) := \{ \alpha \in \Fqm \mid f(\alpha) = 0 \}$ is an $\Fq$-vector space.
Furthermore, we have $\dim\ker f(\cdot) \leq \deg f$ for any non-zero $f \in \SkewPolys$.

For codes we will consider evaluating a skew polynomial $f$ at multiple values $\alpha_1,\ldots,\alpha_n \in \Fqm$ which are linearly indpendent over $\Fq$, for which the following constructions of skew polynomials are crucial:
\begin{itemize}
  \item
  Let $\mathcal{U} \subseteq \Fqm$ be the $\Fq$-subspace spanned by $\alpha_1,\ldots,\alpha_n$.
  Then there is a unique monic skew polynomial $\MSPop_{\mathcal{U}}$, called \emph{(operator) annihilator polynomial of $\mathcal U$} \cite{lidl1997finite,augot2018generalized} (also called \emph{minimal subspace polynomial}) with $\ker \MSPop_{\mathcal{U}}(\cdot) = \mathcal{U}$ and $\deg \MSPop_{\mathcal{U}} = \dim_{\Fq}(\mathcal U) = n$.
\item
  If $\SkewPolys_{<n}$ denotes all skew polynomials of degree less than $n$, then $\SkewPolys_{<n}$ is in bijection with $\Fqm^n$ through operator evaluation at $\alpha_1,\ldots,\alpha_n$.
  In other words, for any $r_1,\dots,r_n \in \Fqm$, there is a unique skew polynomial $\IPop{\{(\alpha_i,r_i)\}_{i=1}^{n}}$, called  the \emph{(operator) interpolation polynomial}, of degree $<n$ such that $\IPop{\{(\alpha_i,r_i)\}_{i=1}^{n}}(\alpha_i) = r_i$ for all $i=1,\dots,n$ \cite[Lemma~3.51]{lidl1997finite}, \cite{silva2007rank,augot2018generalized}.
\end{itemize}

The \emph{remainder evaluation map} of a skew polynomial $f \in \SkewPolys$ is defined by
\begin{align*}
\remev{f}{\cdot} \, : \, \Fqm \mapsto \Fqm, \quad \alpha \mapsto f \remr (x-\alpha).
\end{align*}
For any $f,g \in \SkewPolys$ and $\alpha \in \Fqm$, we have \cite{lam1988vandermonde}
\begin{align*}
\remev{(f+g)}{\alpha} &= \remev{f}{\alpha}+\remev{g}{\alpha} \\
\remev{(f \cdot g)}{\alpha} &=
\begin{cases}
0, &\text{if } c = 0, \\
\remev{f}{\tfrac{\sigma(c) \alpha}{c}}c, &\text{if } c \neq 0,
\end{cases}
\end{align*}
where $c := \remev{g}{\alpha}$.
There are analogs of annihilator and interpolation polynomials for the remainder evaluation.
However, since their definition requires further notation and is only relevant in Section~\ref{sec:sum-rank}, we will discuss these notions at the start of that section.

For more details on the evaluation maps and their differences, we refer to \cite{boucher2014linear}.
Throughout the paper, whenever it is clear from the context which evaluation map we mean, we omit the prefixes "operator" and "remainder".

\subsection{Cost of Operations with Skew Polynomials}\label{ssec:cost_skew_poly_operations}

We denote by $\OMul{n}$ the cost of multiplying two skew polynomials over $\Fqm$ of degree $n$. As there are cost bounds for skew polynomial multiplication that count operations in either $\Fqm$ or $\Fq$, we deliberately let it open over which base field $\OMul{n}$ is given. This means that we state complexities involving $\OMul{n}$ in operations in the base field of the cost bound.
The best-known cost bounds on $\OMul{n}$ are
\begin{equation*}
\OMul{n} \in \softO\!\left( \min\!\left\{n^{\omega-2} m^2, \, n m^{\omega-1} \right\} \right)
\end{equation*}
operations in $\Fq$ using the algorithms in \cite{caruso2017fast,caruso2017new} and
\begin{equation*}
\OMul{n} \in O\!\left( n^{\min\!\left\{\frac{\omega+1}{2},1.635\right\}} \right)
\end{equation*}
operations in $\Fqm$ %
using the algorithm in \cite{puchinger2017fast}.
Using a basis with $\OEF(m) \in \softO(m)$, and assuming $(\omega+1)/2 > 1.635$, i.e.~$\omega > 2.27$, the algorithms in \cite{caruso2017fast,caruso2017new} provide the best cost bounds whenever $n \in \Omega\big(m^{\frac{2}{5-\omega}}\big)$, while \cite{puchinger2017fast} provides the best cost bound when $n \in O\big(m^{\frac{2}{5-\omega}}\big)$.

All algorithms are faster than classical multiplication, which has quadratic complexity $\Theta(n^2)$ operations in $\Fqm$. This is obvious for the multiplication algorithm in \cite{puchinger2017fast} (exponent is reduced from $2$ to $\leq 1.635$), and holds for the one in \cite{caruso2017fast} due to
\begin{equation*}
\softO\!\left(\min\!\left\{n^{\omega-2} m^2, \, n m^{\omega-1} \right\}\right) \subseteq o\!\left(n^2 \OEF(m)\right).
\end{equation*}

By combining the results in \cite{caruso2017new,caruso2017fast,puchinger2017fast,puchinger2018construction}, the following skew polynomial operations can be performed in $\softO(\OMul{n})$ operations in the base field of the cost bound $\OMul{n}$:
\begin{itemize}
\item Left and right division of two skew polynomials of degree at most $n$. %
\item Operator evaluation of a skew polynomial of degree $\leq n$ at $n$ field elements (\emph{multi-point (operator) evaluation}).
\item Compute the operator annihilator polynomial $\MSPop_\mathcal{U}$ of an $n$-dimensional subspace $\mathcal{U}$.
\item Compute an operator interpolation polynomial at $n$ field elements.
\end{itemize}
In Section~\ref{sec:sum-rank}, we will discuss the remainder-evaluation analogs of the latter three operations.
We did not find the analog of the above computational cost bounds in the literature, so we show in Section~\ref{ssec:fast_remainder_ev_arith} that they can also be performed in $\softO(\OMul{n})$ operations in the base field of the cost bound $\OMul{n}$.

\section{Approximant Bases Over $\SkewPolys$}\label{sec:order_bases}

In this section, we study the central computational object that will enable us to speed up decoding algorithms and computational tools discussed in later sections: approximant bases over skew polynomial rings.
Here, we use the notation and adapt the algorithms of \cite{neiger2016bases}, which studied these bases over ordinary polynomial rings.
For skew polynomials over finite fields, the resulting algorithms have smaller complexity than the previously fastest method in \cite{beckermann_fraction-free_2002}.

\subsection{Modules and Matrices over Skew Polynomial Rings}\label{ssec:skew_modules_matrices}

For a matrix $\MABout \in \SkewPolys^{a \times b}$ and $\s \in \ZZ^a$, we define the $\s$-shifted column degree of $\MABout$ to be the tuple
\begin{equation*}
\cdeg_\s(\MABout) = [d_1,\dots,d_b] \in \left(\ZZ \cup \{-\infty\}\right)^b
\end{equation*}
where\ $d_j$ is the maximal shifted degree in the $j$-th column, i.e.,
$d_j := \textstyle\max_{i=1,\dots,a}\{\deg \MABoutentry_{ij} + s_i \}$.
We write $\cdeg(\MABout) := \cdeg_\0(\MABout)$, where $\0 := [0,\dots,0]$.
Analogously, for $\s \in \ZZ^b$, we define the ($\s$-shifted) row degree of $\MABout$ to be
\begin{equation*}
\rdeg_\s \MABout := \cdeg_\s\!\left(\MABout{}^\top\right) \quad \text{ and } \quad \rdeg \MABout := \cdeg\!\left(\MABout{}^\top\right).
\end{equation*}
The degree of the matrix, i.e.~the maximal degree among its entries, is denoted:
\begin{equation*}
\deg \MABout := \max_{i,j}\{\deg \MABoutentry_{ij}\}.
\end{equation*}
If $\v \in \SkewPolys^{1 \times a} \setminus \{\0\}$ is a row vector and $\s = [s_1,\dots,s_a] \in \ZZ^a$ a shift, we define the $\s$-pivot index of $\v$ to be the largest index $i$ with $1 \leq i \leq a$ such that
$\deg v_i + s_i = \rdeg_\s(\v)$, and analogously for column vectors.
If $a \geq b$ (or $a \leq b$, respectively), then we say that $\MABout$ is in column (row) $\s$-ordered weak Popov form if the $\s$-pivot indices of its columns (rows) are strictly increasing in the column (row) index.

The next two lemmas present key properties of matrices in row or column weak Popov form that we will use later in this section.
The first one is a variant of the ``predictable degree property'', see \cite{kailath_linear_1980}, which is central to row- or column-reduced matrices such as those in ordered row or column weak Popov form.
An analogous result holds for singular rank or non-square matrices, but we will need it only for square ones.

{
\def\vlam{\vec\lambda}
\begin{lemma}
	\label{lem:minimality_column_wpf}
  Let $\MABout \in \SkewPolys^{b \times b}$ be full rank and $\s \in \ZZ^b$.
	\begin{itemize}
	\item ``Column case'': Assume $\MABout$ is in $\s$-ordered column weak Popov form, $\t := \cdeg_\s \MABout$, and $\p = \MABout \vlam$ for non-zero column vectors $\p,\vlam \in \SkewPolys^{b \times 1}$.
  Then
	\begin{itemize}
		\item $\cdeg_\s \p = \cdeg_\t \vlam$ and
		\item the $\s$-pivot index of $\p$ equals the $\t$-pivot index of $\lambdaVec$.
	\end{itemize}
	\item ``Row case'': Assume $\MABout$ is in $\s$-ordered row weak Popov form, $\t := \rdeg_\s \MABout$, and $\p = \vlam \MABout$ for non-zero row vectors $\p,\vlam \in \SkewPolys^{1 \times b}$.
  Then
	\begin{itemize}
		\item $\rdeg_\s \p = \rdeg_\t \vlam$ and
		\item the $\s$-pivot index of $\p$ equals the $\t$-pivot index of $\vlam$.
	\end{itemize}
	\end{itemize}

\end{lemma}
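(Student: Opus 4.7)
The plan is to adapt the standard ``predictable degree property'' for row- and column-reduced matrices to the skew-polynomial setting. Since $\SkewPolys$ is an integral domain we still have $\deg(fg) = \deg f + \deg g$, so the classical argument carries over essentially verbatim. I therefore treat only the column case; the row case follows by transposition. A crucial first observation is that, because $\MABout$ is square, its $b$ column $\s$-pivot indices lie in $\{1,\dots,b\}$ and are strictly increasing, so they must be exactly $1,2,\dots,b$. Hence the pivots sit on the diagonal: writing $t_j$ for the $j$-th entry of $\t$, we have $\deg \MABoutentry_{jj} + s_j = t_j$ for each $j$, and $\deg \MABoutentry_{ij} + s_i < t_j$ whenever $i > j$.

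The inequality $\cdeg_\s \p \le \cdeg_\t \lambdaVec$ is then immediate from $p_i = \sum_j \MABoutentry_{ij} \lambda_j$: each summand has shifted degree at most $t_j + \deg \lambda_j$, so $\deg p_i + s_i \le \max_j(t_j + \deg \lambda_j) = \cdeg_\t \lambdaVec$. For the matching equality and the pivot identification, let $j^*$ denote the $\t$-pivot index of $\lambdaVec$, so $D := t_{j^*} + \deg \lambda_{j^*}$ equals $\cdeg_\t \lambdaVec$ and $t_j + \deg \lambda_j < D$ for every $j > j^*$. I would inspect coordinate $p_{j^*}$: the ``diagonal'' term $\MABoutentry_{j^*,j^*}\lambda_{j^*}$ contributes shifted degree exactly $D$; for $j < j^*$ the column weak-Popov structure gives $\deg \MABoutentry_{j^*,j} + s_{j^*} < t_j$, and for $j > j^*$ the factor $\lambda_j$ is too small in degree. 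Integrality of $\SkewPolys$ then rules out cancellation of the unique dominant term, giving $\deg p_{j^*} + s_{j^*} = D$. An analogous estimate for any row $i > j^*$ bounds every summand of $p_i$ strictly below $D$; therefore $j^*$ is the largest index achieving the maximum, so it is the $\s$-pivot of $\p$ and coincides with the $\t$-pivot of $\lambdaVec$.

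The main obstacle I anticipate is keeping the bookkeeping clean at the split $j = j^*$: for $j < j^*$ one leverages the strict inequality built into the $\s$-ordered column weak Popov form (below-diagonal entries in any column have strictly smaller shifted degree than the column's total), while for $j > j^*$ one uses the maximality clause defining the $\t$-pivot of $\lambdaVec$. These two complementary inequalities together guarantee a unique dominant summand in $p_{j^*}$ and forbid any competitor in rows beyond $j^*$, which is exactly what the predictable-degree conclusion requires.
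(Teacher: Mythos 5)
Your argument is correct and is essentially the paper's proof: both rest on the observation that for a square full-rank matrix in $\s$-ordered column weak Popov form the pivot indices are exactly $1,\dots,b$, so the pivots sit on the diagonal, and both then identify the diagonal summand $B_{j^*,j^*}\lambda_{j^*}$ as the unique term of maximal shifted degree in coordinate $j^*$ while bounding all summands in coordinates $i>j^*$ strictly below (the paper packages this as an inner product with an upper-triangular leading-coefficient matrix; your unique-dominant-summand degree count is the same computation). One small correction: the row case does not follow ``by transposition'' --- over $\SkewPolys$ one has $(\lambdaVec\MABout)^\top\neq\MABout^\top\lambdaVec^\top$, and the paper stresses elsewhere that transposition does not exchange left and right objects in this ring --- but since your bookkeeping uses only $\deg(fg)=\deg f+\deg g$ and degree comparisons, which are left/right symmetric, the row case follows by repeating the identical argument with rows and columns interchanged, exactly as the paper does.
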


\begin{proof}
	We first prove the column case.
	Let $\mu := \cdeg_\t \vlam$ and $h$ be the $\t$-pivot intex of $\vlam$.
	Since $\p = \MABout \vlam$, then $\deg p_i \leq \max_{j=1,\ldots,b}\{ \deg B_{ij} + \deg \lambda_j \} \leq \max_{j=1,\ldots,b} \{t_j - s_i + \deg \lambda_j\}$, and so $\cdeg_\s \p \leq \mu$.
	Let $\u \in \Fqm^{b \times 1}$ be the vector whose $i$-th entry is the $x^{\mu-s_i}$-coefficient of $p_i$ (the coefficient is zero if $\deg p_i < \mu-s_i$).
	Hence, $\cdeg_\s \p = \mu$ iff $\u \neq \0$. Further, if $\u \neq \0$, then the $\s$-pivot index of $\p$ is the greatest non-zero index of $\u$.

	Since $\deg B_{ij} \leq t_j-s_i$ and $\deg \lambda_j \leq \mu-t_j$, the entries of $\u$ only depend on some of the leading coefficients in the matrix $\MABout$ and vector $\vlam$.
	Let $\textrm{lm}_\s(\MABout)$ be the $\s$-leading matrix of $\MABout$ whose $(i,j)$-th entry is the $x^{t_j-s_i}$-coefficient of $B_{ij}$, defined as $0$ if $\deg B_{ij}<t_j-s_i$.
	Similarly, define $l_j$ to be the $x^{\mu-t_j}$-coefficient of $\lambda_j$.
	Then, by the definition of linearized polynomial multiplication, $u_i$ is the inner product of the $i$-th row of $\textrm{lm}_\s(\MABout)$ and the vector $\l_i := [\sigma^{t_1-s_i}(l_1),\dots,\sigma^{t_b-s_i}(l_b)]^\top$.
	
	Since $\MABout$ is full-rank and in $\s$-ordered column weak Popov form, the $\s$-pivot index of its $j$-th column is $j$ and $\textrm{lm}_\s(\MABout)$ is in upper triangular form with only non-zero entries on its diagonal.
	Also, $\l_i \neq \0$ since at least one $\lambda_j$ fulfills $\deg \lambda_j + t_j = \mu$, and $h$ as defined above is the greatest non-zero index of $\l_i$ (independent of $i$).
	Thus, $u_{h}$ is non-zero and $h$ is also the greatest non-zero index of $\u$, which proves the claim.

	The row case follows analogously, the only differences being that $\textrm{lm}_\s(\MABout)$ is defined as the matrix containing the $x^{t_i-s_j}$-coefficient of $B_{ij}$ (which is in lower triangular form), and that $u_i$ is the inner product of the vector $\l_i := [l_1,\dots,l_b]$ (no automorphisms applied) and the $i$-th column of a slight modification of the matrix $\textrm{lm}_\s(\MABout)$, where we apply certain automorphisms to the matrix entries. This does not change the argument above since automorphisms do not map non-zero entries to zero.
\end{proof}

\begin{remark}
The predictable degree property (Lemma~\ref{lem:minimality_column_wpf}) was studied for \emph{row-reduced} matrices over skew polynomials in \cite[Lemma~A.1]{beckermann2006fraction}.
More precisely, the property $\rdeg_\s \p = \rdeg_\t \lambdaVec$ (``row case'') was shown for the shift $\s=\0$.
Since ``row reduced'' is weaker than ``ordered weak Popov'', pivots of $\p$ and $\lambdaVec$ are not necessarily the same in this case.
\end{remark}

The following lemma is the skew analog of \cite[Theorem~1.28, case~(iii)]{neiger2016bases}. We state the theorem for column weak Popov form and write the row case in parentheses.

\begin{lemma}\label{lem:s_ordered_property_multiplication}
Let $\MABout_{1} \in \SkewPolys^{b \times b}$ be in $\s$-ordered column (row) weak Popov form and $\MABout_{2} \in \SkewPolys^{b \times b}$ be in $\t$-ordered column (row) weak Popov form, where $\t := \cdeg_\s(\MABout_{1})$ ($\t := \rdeg_\s(\MABout_{1})$). Then, $\MABout_{1} \MABout_{2}$ ($\MABout_{2} \MABout_{1}$) is in $\s$-ordered column (row) weak Popov form.
\end{lemma}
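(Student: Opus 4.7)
The proof should follow directly from the predictable degree property (Lemma~\ref{lem:minimality_column_wpf}) applied column-by-column (respectively row-by-row) to the product. I would write the plan as follows.

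For the column case, denote by $\vlam_1,\dots,\vlam_b \in \SkewPolys^{b \times 1}$ the columns of $\MABout_2$, so that the $j$-th column of $\MABout_1 \MABout_2$ is $\p_j = \MABout_1 \vlam_j$. Both $\MABout_1$ and $\MABout_2$ are full rank (since being in ordered weak Popov form forces the $\s$-leading matrix, resp.\ $\t$-leading matrix, to be triangular with nonzero diagonal entries, as already noted in the proof of Lemma~\ref{lem:minimality_column_wpf}), and in particular each $\vlam_j$ is nonzero. I would therefore apply the column case of Lemma~\ref{lem:minimality_column_wpf} to $\MABout_1$ and each vector $\vlam_j$: this yields that the $\s$-pivot index of $\p_j$ coincides with the $\t$-pivot index of $\vlam_j$ for every $j=1,\dots,b$.

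Now, since $\MABout_2$ is in $\t$-ordered column weak Popov form, its columns' $\t$-pivot indices $j \mapsto \text{pivot}_\t(\vlam_j)$ are strictly increasing in $j$. By the previous paragraph, the same holds for $j \mapsto \text{pivot}_\s(\p_j)$, which is precisely the definition of $\MABout_1 \MABout_2$ being in $\s$-ordered column weak Popov form. The row case is handled symmetrically: writing $\vlam_1,\dots,\vlam_b \in \SkewPolys^{1\times b}$ for the rows of $\MABout_2$, the $i$-th row of $\MABout_2\MABout_1$ is $\p_i = \vlam_i \MABout_1$, and the row case of Lemma~\ref{lem:minimality_column_wpf} applied to $\MABout_1$ and each $\vlam_i$ transfers the strictly increasing ordering of $\t$-pivot indices of the rows of $\MABout_2$ to the $\s$-pivot indices of the rows of $\MABout_2\MABout_1$.

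There is no real obstacle here: the predictable degree property has been tailored to give exactly the pivot-index equality that the definition of ordered weak Popov form requires, so the lemma reduces to matching up the notations and invoking Lemma~\ref{lem:minimality_column_wpf} columnwise (rowwise). The only small care needed is to confirm that the hypotheses of Lemma~\ref{lem:minimality_column_wpf} are satisfied, namely that $\MABout_1$ is square and full rank and that the vectors $\vlam_j$ (resp.\ $\vlam_i$) are nonzero, both of which follow from the ordered weak Popov form assumption on $\MABout_1$ and $\MABout_2$.
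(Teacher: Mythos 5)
Your proposal is correct and follows essentially the same route as the paper: both apply the pivot-index part of Lemma~\ref{lem:minimality_column_wpf} to each column $\vlam_j$ of $\MABout_2$ (resp.\ each row in the row case) to conclude that the $\s$-pivot index of the $j$-th column of $\MABout_1\MABout_2$ equals the $\t$-pivot index of $\vlam_j$, and then inherit the strictly increasing ordering from $\MABout_2$. Your additional remark that the full-rank and nonzero-vector hypotheses of Lemma~\ref{lem:minimality_column_wpf} are guaranteed by the ordered weak Popov form is a sound point that the paper leaves implicit.
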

\begin{proof}
	We prove the column case, the row case follows analogously.
	Let $\u = [u_1,\ldots,u_b] = \cdeg_\t(\MABout_{2})$.
	Let $\h_i$ be the $i$-th column of $\MABout_{1} \MABout_{2}$.
	Denote by $B_{2,ij}$ the $(i,j)$-th entry of $\MABout_{2}$.
	By Lemma~\ref{lem:minimality_column_wpf} then $\cdeg_\s \h_j = \max_{i=1,\ldots,b}\{\deg B_{2,ij} + t_i\} = u_j$, and further the $\s$-pivot index of $\h_j$ is $\max\{i \, : \, \deg B_{2,ij} + t_i = u_j\}$ which is exactly the $\t$-pivot index of the $j$-th column of $\MABout_{2}$.
	Since these are all in strictly increasing order, so must the $\s$-pivots of $\h_1,\ldots,\h_b$.
	Hence $\MABout_{2} \MABout_{1}$ is in ordered weak Popov form.
\end{proof}

}

\subsection{Approximant Bases over $\SkewPolys$}\label{ssec:skew_order_bases}

Let $\MABin \in \SkewPolys^{a \times b}$ and $d \in \ZZ_{\geq 0}$.
A right approximant of $\MABin$ of order $d$ is a vector $\b \in \SkewPolys^{b \times 1}$ such that
\begin{equation*}
\MABin \b \equiv \0 \modl x^d.
\end{equation*}
A left approximant of $\MABin$ of order $d$ is $\b \in \SkewPolys^{1 \times a}$ with
\begin{equation*}
\b \MABin  \equiv \0 \modr x^d.
\end{equation*}

\begin{lemma}\label{lem:approximants_modules}
The set of right (left) approximants of $\MABin$ of order $d$ is a free right (left) $\SkewPolys$-module of rank $b$ (rank $a$).
\end{lemma}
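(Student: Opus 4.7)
The plan is to verify that the set $\mathcal{R}$ of right approximants of $\MABin$ of order $d$ is a right $\SkewPolys$-submodule of $\SkewPolys^{b \times 1}$, then invoke the fact that $\SkewPolys$ is a right principal ideal domain (recalled in Section~\ref{ssec:skew_polys}) to conclude that $\mathcal{R}$ is free of rank at most $b$, and finally exhibit $b$ $\SkewPolys$-linearly independent elements of $\mathcal{R}$ to pin the rank down to exactly $b$. The left case is entirely analogous, yielding rank $a$.

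First I would check the submodule axioms. Closure under addition is immediate from distributivity of matrix--vector multiplication. For closure under right multiplication by $f \in \SkewPolys$, observe that if $\MABin \b = x^d \g$ for some column vector $\g \in \SkewPolys^{b \times 1}$, then associativity gives $\MABin (\b f) = (\MABin \b) f = x^d (\g f)$, which is still left-divisible by $x^d$. Hence $\mathcal{R}$ is a right submodule of the free right $\SkewPolys$-module $\SkewPolys^{b \times 1}$ of rank $b$, and, since $\SkewPolys$ is a right principal ideal domain, $\mathcal{R}$ is itself free of rank at most $b$.

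For the matching lower bound I would exhibit the explicit approximants $x^d \e_i$ for $i=1,\dots,b$, where $\e_i \in \SkewPolys^{b \times 1}$ is the $i$-th standard basis vector. A repeated application of the commutation rule \eqref{eq:skew_polynomials_multiplication_rule} shows that every $p \in \SkewPolys$ satisfies $p \, x^d = x^d p'$ for some $p' \in \SkewPolys$; applied to each entry of $\MABin (x^d \e_i)$, this shows that $\MABin (x^d \e_i)$ is left-divisible by $x^d$, so $x^d \e_i \in \mathcal{R}$. The $\SkewPolys$-linear independence of these $b$ vectors is immediate from their pairwise disjoint supports: any relation $\sum_{i=1}^{b} (x^d \e_i) f_i = 0$ forces $x^d f_j = 0$ coordinatewise, and hence $f_j = 0$ for every $j$. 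This forces the rank of $\mathcal{R}$ to equal $b$.

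For the left case I would mirror the argument throughout: closure under left scalar multiplication follows from $(f \b) \MABin = f (\b \MABin)$ inheriting right-divisibility by $x^d$, and the $a$ row vectors with $x^d$ in a single position and zeros elsewhere are left approximants since $x^d A_{ij} = \sigma^d(A_{ij})\, x^d$ is right-divisible by $x^d$. Using that $\SkewPolys$ is also a left PID then yields freeness with rank exactly $a$. There is no substantive obstacle here; the only care required lies in commuting $x^d$ past field-valued entries of $\MABin$, which is handled by the standard skew commutation relation.
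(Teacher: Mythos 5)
Your proof is correct and follows essentially the same route as the paper's: verify the submodule axioms, use that $\SkewPolys$ is a (left and right) PID to get freeness with rank at most $b$ (resp.\ $a$), and exhibit the vectors $x^d\e_i$ as approximants to force the rank to be exactly $b$ (resp.\ $a$). You simply spell out the commutation of $x^d$ past coefficients and the linear independence of the $x^d\e_i$ in more detail than the paper does.
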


\begin{proof}
The set is a subset of $\SkewPolys^{b \times 1}$ ($\SkewPolys^{1 \times a}$, respectively) and obviously closed under addition and right (left) multiplication by elements of $\SkewPolys$, hence a free right (left) module.
Further, the vector $[ 0, \ldots, 0, x^{d}, 0, \ldots, 0 ]$ of suitable length is clearly a right (left) approximant of $\MABin$ of order $d$, so the module of all right (left) approximants must contain a module of rank $b$ (rank $a$), hence must themselves be of rank $b$ (rank $a$) since it cannot be greater.
\end{proof}

Lemma~\ref{lem:approximants_modules} shows that the following definition is well-posed.

\begin{definition}[left/right approximant bases]\label{def:minimal_approximant_basis}
Let $\MABin \in \SkewPolys^{a \times b}$ and $d \in \ZZ_{\geq 0}$.
\begin{itemize}
\item For $\s \in \ZZ^b$, a right {\MABnameFullStandard} is a full-rank matrix $\MABout \in \SkewPolys^{b \times b}$ s.t.\
	\begin{enumerate}
		\item $\MABout$ is in $\s$-ordered column weak Popov form.
		\item The columns of $\MABout$ are a basis of all right approximants of $\MABin$ of order $d$.
	\end{enumerate}
\item For $\s \in \ZZ^a$, a left {\MABnameFullStandard} is a full-rank matrix $\MABout \in \SkewPolys^{a \times a}$ s.t.\
	\begin{enumerate}
		\item $\MABout$ is in $\s$-ordered row weak Popov form.
		\item The rows of $\MABout$ are a basis of all right approximants of $\MABin$ of order $d$.
	\end{enumerate}
\end{itemize}
We denote by $\RMABnameShortStandard$ (right case) and $\LMABnameShortStandard$ (left case) the sets of all such bases, respectively.
If the input is not relevant, we simply write (left or right) approximant basis.
\end{definition}

\begin{remark}
The most common definition in the literature requires approximant bases only to be row-reduced (denoted by ``($\s$-)minimal approximant basis'').
Here, we use a stronger normal form, ordered weak Popov form.
The motivation comes from \cite{neiger2016bases}, where (over ordinary polynomials) it was shown that the fastest algorithms for computing approximant bases can be adapted to output ordered weak Popov forms at no extra (asymptotic) cost.
\end{remark}

\begin{algorithm*}
	\caption{$\textsf{RightBaseCase}$ \cite{GiorgiPolyMatrix,neiger2016bases}}\label{alg:rightOrdinaryLinAppBas}
	\SetKwInOut{Input}{Input}\SetKwInOut{Output}{Output}
	\Input{matrix $\MABinhat \in \Fqm[x]^{a\times b}$ with $\deg(\MABinhat)<1$, shifts $\s\in\mathbb{Z}^{b}$}
	
	\Output{$\MABouthat \in \Fqm[x]^{b\times b}$, a right {\MABnameFull{\MABinhat}{\s}{1}} over $\Fqm[x]$}
	
	$\pi_\s\gets b\times b$ permutation matrix s.t. $[(s_1,1),\dots,(s_b,b)]\pi_\s$ is lexicographically increasing \label{alg:start_ordinary_PMbasis}  \\
	$[i_1,\dots,i_\rho],[j_1,\dots,j_\rho]\gets$ row and column rank profiles of $\MABinhat \pi_\s$ (i.e., the column/row indices of leading ones in a row/column echelon form of $\MABinhat \pi_\s$) \hfill \myAlgoComment{compute as in \cite[Theorem~2.10]{storjohann2000algorithms}}
	$[k_1,\dots,k_{b-\rho}]\gets\{1,\dots,b\} \setminus \{j_1,\dots,j_\rho\}$ sorted increasingly \\
	$\MABinhat_1\gets$ submatrix of $\MABinhat\pi_\s$ with indices in $\{i_1,\dots,i_\rho\}\times\{j_1,\dots,j_\rho\}$ \\
	$\MABinhat_2\gets$ submatrix of $\MABinhat\pi_\s$ with indices in $\{i_1,\dots,i_\rho\}\times\{k_1,\dots,k_{b-\rho}\}$\\
	$\pi\gets$ permutation s.t. $[j_1 \dots j_\rho k_1 \dots k_{b-\rho}]\pi=[1\dots b]$\\
	\Return{$\pi_\s\pi^{-1}
		\begin{bmatrix}
		x\mat{I}_\rho & -\MABinhat_1^{-1}\MABinhat_2\mat{I}_{b-\rho} 
		\\
		\mat{0}& \mat{I}_{b-\rho}
		\end{bmatrix}
		\pi\pi_\s^{-1} \in \Fqm[x]^{b \times b}$ \label{alg:stop_ordinary_PMbasis}}
\end{algorithm*} 

For approximant bases over ordinary polynomial rings, the ``row/left'' versus the ``column/right'' view becomes one of notational convenience, since we can trivially obtain one from the other by transposition.
In the non-commutative case of approximant bases over skew polynomial rings, this is no longer true (see Example~\ref{ex:left_right_bases_different} in Appendix~\ref{app:examples} for a counterexample) and the row and column cases are simply slightly different: we need theorems and algorithms tailored to each case, even if most of the statements and proofs are very similar for the two cases.

The currently fastest algorithm to compute a left approximant basis over $\SkewPolys$ (in the weaker ``row-reduced form'' instead of ordered weak Popov form) is $O(a^3 b^2 d^2)$ operations in $\Fqm$ \cite{beckermann_fraction-free_2002}.
Note that the algorithm in \cite{beckermann_fraction-free_2002} is designed to handle coefficient growth in certain infinite fields, and also the complexity analysis is only done for this case.
Our own analysis of the algorithm gives the stated complexity over $\Fqm$.

\subsection{A New Algorithm to Compute Approximant Bases}\label{ssec:fast_order_bases_computation}

In this section, we adapt the recursive (left) \textsf{PM-Basis} algorithm~\cite{GiorgiPolyMatrix,neiger2016bases} over ordinary polynomial rings to compute a left and right approximant basis over skew polynomials.
For the base case (Section~\ref{ssec:right_PM-basis_base_case}), we prove that the algorithm over $\Fqm[x]$ can be used with only small modifications.
Also the recursion step (Section~\ref{ssec:right_PM-basis_recursive_step}) is very similar to the original algorithm, but we need to be careful about the non-commutativity of the skew polynomial ring.

\subsubsection{Base Case: Right and Left Approximant Bases of Degree~$1$}
\label{ssec:right_PM-basis_base_case}

In the following, we show how to obtain right and left approximant basis order $1$ of a degree $0$ matrix.
For both sides, we reduce the problem to computing an approximant basis over the ordinary polynomials \cite{GiorgiPolyMatrix,neiger2016bases} (cf.~Algorithm~\ref{alg:rightOrdinaryLinAppBas}) using suitable bijective mappings between $\Fqm[x]$ and $\SkewPolys$.

For the right case, we use the following mapping $\varphi$ and its inverse, which we extend to matrices entry-wise.
\begin{align}
\varphi \, : \, \Fqm[x] &\to \SkewPolys, \notag \\
\textstyle\sum_{i}f_i x^i &\mapsto \textstyle\sum_{i} x^i f_i. \label{eq:phi_mapping}
\end{align}
Note that by the non-commutative multiplication rule \eqref{eq:skew_polynomials_multiplication_rule} for skew polynomials, we have $\sum_{i} x^i f_i = \sum_{i} \sigma^i(f_i) x^i$.
We use two important properties of the mapping:
\begin{align}
&\varphi(f h) = \varphi(f) \varphi(h) \quad \forall \, f \in \Fqm[x] \text{ and } h \in \Fqm[x]_{<1}, \label{eq:phi_key_property_degree_0_poly} \\
&\varphi(f g \rem x) = \varphi(f) \varphi(g) \reml x \quad \forall \, f,g \in \Fqm[x], \label{eq:phi_key_property_0_coeff}
\end{align}
where \eqref{eq:phi_key_property_degree_0_poly} is obvious from the definition
and \eqref{eq:phi_key_property_0_coeff} is a direct consequence of the first property (write $g = g_0 x^0 + (\smallsum_{i>0} g_i x^i)$ and use the additivity of $\varphi$).

The resulting algorithm for computing right skew approximant bases of order $1$ is outlined in Algorithm~\ref{alg:rightSkewLinAppBas}. We prove its correctness using the reduction shown in Figure~\ref{fig:base_case_reduction}.

\begin{figure}[ht]
	\begin{center}
		\vspace{-0.3cm}
		\begin{tikzpicture}
		\def\xdist{2}
		\def\ydist{2}
		\def\labelydist{1}
		\def\labelxdist{0.5}
		
		\draw[draw=none, fill=black!10] (0.5*\xdist,1.5*\labelydist) rectangle (5.5,-1.35*\ydist);
		
		\node (skewpolys) at (0,\labelydist) {$\SkewPolys$};
		\node (ordinarypolys) at (\xdist,\labelydist) {$\Fqm[x]$};
		\node[left] (QmatricesLabel) at (-\labelxdist,0) {matrix of degree $0$};
		\node[left, align=right] (FmatricesLabel) at (-\labelxdist,-\ydist) {approximant \\ basis of order $1$};
		
		\node (Q)    at (0,0) {$\MABin$};
		\node (Qhat) at (\xdist,0) {$\MABinhat$};
		\node (Fhat) at (\xdist,-\ydist) {$\MABouthat$};
		\node (F) at (0,-\ydist) {$\MABout$};
		
		\draw[->, >=latex] (Q) to node[above] {$\varphi^{-1}(\cdot)$} (Qhat);
		\draw[->, >=latex] (Qhat) to node[right, align=left] {\textsf{PM-Basis} algorithm \\ over $\Fqm[x]$ \cite{GiorgiPolyMatrix,neiger2016bases} \\
		(Algorithm~\ref{alg:rightOrdinaryLinAppBas}) %
		} (Fhat);
		\draw[->, >=latex] (Fhat) to node[below] {$\varphi(\cdot)$} (F);
		\draw[->, >=latex, dashed] (Q) to (F);
		
		\end{tikzpicture}
	\end{center}
	\vspace{-0.3cm}
	\caption{Illustration of the reduction used in the correctness proof of Algorithm~\ref{alg:rightSkewLinAppBas} (Theorem~\ref{thm:correctness_rightLinApp}). Variables are defined as in Algorithm~\ref{alg:rightSkewLinAppBas}.}
	\label{fig:base_case_reduction}
\end{figure}
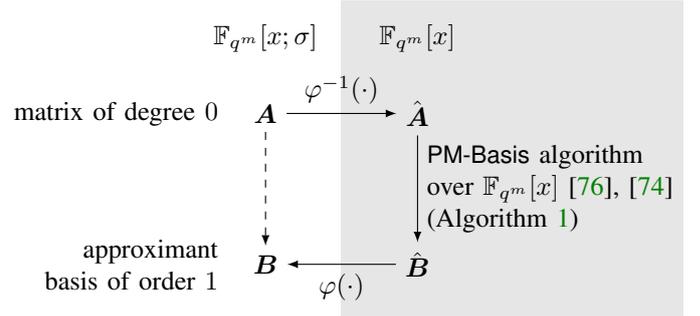

\begin{algorithm}[ht]
	\caption{$\textsf{RightSkewBaseCase}$}\label{alg:rightSkewLinAppBas}
	\SetKwInOut{Input}{Input}\SetKwInOut{Output}{Output}
	\Input{$\MABin\in\SkewPolys^{a\times b}$ with $\deg(\MABin)<1$, $\s\in\mathbb{Z}^{b}$}
	\Output{$\MABout \in \RMABnameShort{\MABin}{\s}{1}$}
	$\MABinhat \in \Fqm[x]_{<1}^{a \times b}\gets \varphi^{-1}(\MABin)$ \hfill \myAlgoComment{mapping $\varphi$ as in \eqref{eq:phi_mapping}}
	$\MABouthat \gets \textsf{RightBaseCase}\big(\MABinhat,\s\big)$ \\	
	\Return{$\varphi\big(\MABouthat\big)$ \hfill \myAlgoComment{mapping $\varphi$ as in \eqref{eq:phi_mapping}}}
\end{algorithm} 

\begin{theorem}\label{thm:correctness_rightLinApp}
Algorithm~\ref{alg:rightSkewLinAppBas} is correct and has complexity
\begin{equation*}
O(\rho^{\omega-2}ab)
\end{equation*}
operations in $\Fqm$ where $\rho\leq \min\{a,b\}$ is the rank of $\MABin$.
\end{theorem}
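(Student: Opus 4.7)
My plan is to transfer correctness from the known ordinary-polynomial case (\cite{GiorgiPolyMatrix,neiger2016bases}) to the skew setting via the map $\varphi$, and then read off the complexity from \textsf{RightBaseCase}. Two simplifying observations will drive the proof. First, since $\MABin$ has degree $0$, the map $\varphi^{-1}$ acts as the identity on it, so $\MABinhat = \MABin$. Inspecting Algorithm~\ref{alg:rightOrdinaryLinAppBas}, every entry of the returned matrix $\MABouthat$ lies in $\{0\}\cup\Fqm\cup\{x\}$ because the interior matrix contains only constants and the single monomial $x$, and the outer permutations merely reposition entries. As $\varphi$ is the identity on each such entry, $\MABout = \varphi(\MABouthat) = \MABouthat$ as formal matrices. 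Second, since $\MABin$ is constant, the constant term of $\MABin \b$ is $\MABin \b_0$ for every $\b$; so (over either $\Fqm[x]$ or $\SkewPolys$) $\b$ is a right approximant of order $1$ iff $\MABin \b_0 = 0$, a ring-independent characterization.

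With those in hand, two of the three conditions of Definition~\ref{def:minimal_approximant_basis} for $\MABout$ over $\SkewPolys$ become straightforward. Entry-wise degrees, and hence $\s$-ordered column weak Popov form, are identical in the two rings; full-rankness follows from the $\s$-leading matrix being a permuted triangular matrix with non-zero diagonal. Also, each column of $\MABout$ is a right skew-approximant because its constant term coincides with that of the corresponding column of $\MABouthat$ and so lies in $\ker\MABin$ by correctness of \textsf{RightBaseCase}.

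The main obstacle will be the basis property over $\SkewPolys$: since $\varphi$ is not a module homomorphism, I cannot directly pull a decomposition back from the $\Fqm[x]$-side. I will argue directly using the explicit structure of $\MABouthat$: among its $b$ columns, $b-\rho$ are constant vectors $\v_1,\dots,\v_{b-\rho}$ forming a basis of $\ker\MABin$, and $\rho$ are of the form $x\e_{j_\nu}$ for standard basis vectors $\e_{j_\nu}$ complementing $\ker\MABin$ in $\Fqm^b$. For an arbitrary right skew-approximant $\b = \sum_k \b^{(k)} x^k$ with $\b^{(0)}\in\ker\MABin$, I will build $\d\in\SkewPolys^{b\times 1}$ with $\b = \MABout\d$ coefficient-by-coefficient: decompose $\b^{(0)}$ in $\{\v_i\}$, and for $k\geq 1$ decompose $\b^{(k)}\in\Fqm^b$ in the combined basis $\{\v_i\}\cup\{\e_{j_\nu}\}$; the $\v_i$-components yield coefficients of the multipliers of the constant columns, while the $\e_{j_\nu}$-components, after one application of $\sigma^{-1}$ to account for the skew rule $xa = \sigma(a)x$, yield coefficients of the multipliers of the $x\e_{j_\nu}$-columns. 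Both sets of multipliers have finite degree since $\b$ does.

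For complexity, since $\varphi$ and $\varphi^{-1}$ act trivially on the relevant inputs, the total cost equals that of \textsf{RightBaseCase}: $O(\rho^{\omega-2}ab)$ operations in $\Fqm$, bottlenecked by the rank-profile computation on the $a\times b$ matrix $\MABinhat\pi_\s$ via \cite[Theorem~2.10]{storjohann2000algorithms}; the $\rho\times\rho$ inversion of $\MABinhat_1$ and the product $\MABinhat_1^{-1}\MABinhat_2$ fit within this bound since $\rho\leq\min\{a,b\}$.
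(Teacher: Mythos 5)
Your proposal is correct and follows the same overall reduction as the paper (map the degree-$0$ input through $\varphi^{-1}$, run \textsf{RightBaseCase} over $\Fqm[x]$, map back through $\varphi$, and observe that both maps act trivially on the relevant data), and your treatment of the weak Popov form, the approximant property via the constant-term characterization $\MABin\b^{(0)}=0$, full-rankness, and the complexity all match the paper's argument in substance. Where you genuinely diverge is in establishing that the columns of $\MABout$ \emph{generate} all order-$1$ skew approximants: the paper proves this abstractly, by combining (i) the fact that the column space of $x\I_b$ lies in that of $\MABout$, (ii) a predictable-degree property of $\MABouthat$ forcing the cofactor of a constant approximant to be constant, and (iii) the multiplicativity of $\varphi$ against degree-$0$ polynomials, so that the $\Fqm[x]$-decomposition of the constant part transfers to $\SkewPolys$. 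You instead exploit the explicit column structure of the output ($b-\rho$ constant columns forming an $\Fqm$-basis of $\ker\MABin$, and $\rho$ columns of the form $x\e_{j_\nu}$ spanning a complement) and build the cofactor vector $\d$ coefficient-by-coefficient, inserting $\sigma^{-1}$ to compensate for $x a = \sigma(a)x$. Your route is more elementary and self-contained for this sub-step and makes the role of the skew multiplication rule completely explicit; the paper's route is less tied to the particular shape of the \textsf{RightBaseCase} output and is the version that transposes cleanly to the left-sided case (Theorem~\ref{thm:correctness_leftLinApp}). If you write yours up in full, do spell out why the $b-\rho$ constant columns are an $\Fqm$-basis of $\ker\MABin$ and why the indices $j_\nu$ give a complementary subspace (both follow from the explicit block form in~\eqref{eq:shape_F} together with $\dim\ker\MABin=b-\rho$), since these are the only facts your decomposition relies on.
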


\begin{proof}
Algorithm~\ref{alg:rightSkewLinAppBas} consists of three parts, which are also illustrated in Figure~\ref{fig:base_case_reduction}: The first line maps the input matrix $\MABin$ to $\Fqm[x]$; note that this is actually the identity mapping since $\deg \MABin < 1$.
Then Lines~\ref{alg:start_ordinary_PMbasis} to \ref{alg:stop_ordinary_PMbasis} apply the well-known \textsf{PM-Basis} algorithm \cite{GiorgiPolyMatrix,neiger2016bases} over $\Fqm[x]$, and finally, the resulting matrix $\MABouthat$, which is an {\MABnameFull{\MABinhat}{\s}{1}} over $\Fqm[x]$, is mapped back to the skew polynomial ring.
We show that $\MABout \in \RMABnameShort{\MABinhat}{\s}{1}$ using properties of $\MABouthat$ and~$\varphi$.

Note that the mapping $\varphi$ does not change the degree of a polynomial. As $\MABouthat$ is in $\s$-ordered weak Popov form, so is $\MABout$.

Denote by $\b_i$ and $\hat{\b}_i$ the $i$-th column of $\MABout$ and $\MABouthat$, respectively.
Since $\hat{\b}_i$ is a right approximant of $\MABinhat$ and due to Property~\eqref{eq:phi_key_property_0_coeff}, we have 
\begin{align}
\MABin \b_i \reml x &= \varphi(\MABinhat) \varphi(\hat{\b}_i) \reml x \notag \\
&= \varphi(\MABinhat \hat{\b}_i \rem x) = \varphi(\0) = \0, \label{eq:phi_right_approximant}
\end{align}
so the columns of $\MABout$ are right approximants of $\MABin$ of order $1$.

It is left to show that the (right) column space of $\MABout$ contains all right approximants of $\MABin$ of order $1$ and that its columns are right $\SkewPolys$-linearly independent.
For this, we identify two key properties of $\MABout$ and $\MABouthat$, respectively.
\begin{enumerate}
	\item \label{itm:F_property_x} The (right) column space of $x \I_{b} \in \SkewPolys^{b \times b}$ is contained in the column space of $\MABout$.
	\item \label{itm:F_property_degree} If $\vhat = \MABouthat \lambdaVechat$ for two vectors $\vhat,\lambdaVechat \in \Fqm[x]^{b}$ and $\deg \vhat = 0$, then $\deg \lambdaVec = 0$.
\end{enumerate}
The first property follows from the shape of $\MABout$, which, by Algorithm~\ref{alg:rightOrdinaryLinAppBas}, is of the form
\begin{equation}
\MABout = \P^{-1}
\begin{bmatrix}
x\I_{\rho} & \D \\
\0 & \I_{b-\rho}
\end{bmatrix} \P \in \SkewPolys^{b \times b}, \label{eq:shape_F}
\end{equation}
where $\deg \D \leq 0$ and $\P \in \Fqm^{b \times b}$ is an invertible permutation matrix (it has exactly one $1$ in each row and column, and $0$ otherwise). Since the column space of $x \I_{b}$ is invariant under permutations of coordinates, and since for any $i=1,\dots,b$, we can easily find a vector $\lambdaVec' \in \SkewPolys^{b \times 1}$ with $\begin{bmatrix}
x\I_{\rho} & \D \\
\0 & \I_{b-\rho}
\end{bmatrix} \lambdaVec' = x \e_i$ (where $\e_i$ is the $i$-th unit vector), the first property follows.
For the second property, first observe that $\deg \lambdaVechat = \cdeg_{\0} \lambdaVechat \leq \cdeg_{\cdeg_{\0} \MABout} \lambdaVechat$ since $\cdeg_{\0} \MABout \geq 0$.
Since \eqref{eq:shape_F}, seen over $\Fqm[x]$, is in unshifted ($\s=\0$) ordered weak Popov form, the predictable degree property implies
\begin{equation*}
 \cdeg_{\cdeg_{\0} \MABout} \lambdaVechat = \cdeg_{\0} \vhat = \deg \vhat = 0.
\end{equation*}

Let now $\v$ be a right approximant of $\MABin$ of order $1$ and we should show that it is in the column space of $\MABout$.
Write $\vec v = \vec v_0 + x \vec v_1$, where $\deg \vec v_0 \leq 0$.
By Property~\ref{itm:F_property_x}), then $x \vec v_1$ is in the column space of $\MABout$, so we are done if the same holds for $\vec v_0$.
By the same argument as in \eqref{eq:phi_right_approximant}, the vector $\vhat_0 := \varphi^{-1}(\v_0) \in \Fqm[x]^b$ is a right approximant of $\MABinhat$ and there is a vector $\lambdaVechat \in \Fqm[x]^b$ such that $\vhat_0 = \MABouthat \lambdaVechat$.
Due to Property~\ref{itm:F_property_degree}), we have $\deg \lambdaVechat = 0$, which by \eqref{eq:phi_key_property_degree_0_poly} implies
\begin{equation*}
\v_0 = \varphi(\vhat_0) = \varphi(\MABouthat \lambdaVechat) = \varphi(\MABouthat) \varphi(\lambdaVechat) = \MABout \varphi(\lambdaVechat).
\end{equation*}
The columns of $\MABout$ are right $\SkewPolys$-linearly independent since $\MABout$ is up to row and column permutations in upper-triangular form (see~\eqref{eq:shape_F}) with non-zero entries on the diagonal.
Correctness of the algorithm follows.

The main computational task is to compute the row and column rank profile of the matrix $\MABinhat\in\Fqm^{a\times b}$ of rank $\rho$ which requires $O(\rho^{\omega-2}ab)$ operations in $\Fqm$~\cite[Thm.~2.10]{storjohann2000algorithms}.
\end{proof}

\begin{remark}\label{rem:PM_base_case_proof_strategy}
We chose to rely on the \textsf{PM-Basis} algorithm over $\Fqm[x]$ in the proof of Theorem~\ref{thm:correctness_rightLinApp} since it stresses the similarities and differences of the skew and ordinary polynomial case for approximant bases of order $1$ of matrices of degree $0$.
For a self-contained proof of Theorem~\ref{thm:correctness_rightLinApp}, which directly adapts the key ideas of the \textsf{PM-Basis} correctness proof in \cite{neiger2016bases}, we refer to the conference version of this paper \cite{bartz2019fast}.

The same reduction is not possible with the mapping $\varphi$ for higher degrees and orders as we can see in Example~\ref{ex:counterexample_reduction_high_orders} (Appendix~\ref{app:examples}).
\end{remark}

For the left case, we use the following bijective mapping.
\begin{align}
\psi \, : \, \Fqm[x] &\to \SkewPolys, \notag \\
\textstyle\sum_{i}f_i x^i &\mapsto \textstyle\sum_{i} f_i x^i. \label{eq:psi_mapping}
\end{align}
The resulting algorithm is presented in Algorithm~\ref{alg:leftSkewLinAppBas} and we prove its correctness in Theorem~\ref{thm:correctness_leftLinApp}.

\begin{algorithm}[ht]
	\caption{$\textsf{LeftSkewBaseCase}$}\label{alg:leftSkewLinAppBas}
	\SetKwInOut{Input}{Input}\SetKwInOut{Output}{Output}
	\Input{$\MABin\in\SkewPolys^{a\times b}$ with $\deg(\MABin)<1$, $\s\in\mathbb{Z}^{a}$}
	\Output{$\MABout \in \LMABnameShort{\MABin}{\s}{1}$} %
	$\MABinhat \in \Fqm[x]_{<1}^{a\times b}\gets \psi^{-1}(\MABin)$ \hfill \myAlgoComment{mapping $\psi$ as in \eqref{eq:psi_mapping}}
	$\MABouthat \gets \textsf{RightBaseCase}\big(\MABinhat{}^\top,\s\big)$ \\	
	\Return{$\psi\big(\MABouthat{}^\top\big)$ \hfill \myAlgoComment{mapping $\psi$ as in \eqref{eq:psi_mapping}}}
\end{algorithm}

\begin{theorem}\label{thm:correctness_leftLinApp}
Algorithm~\ref{alg:leftSkewLinAppBas} is correct and has complexity
\begin{equation*}
O(\rho^{\omega-2}ab)
\end{equation*}
operations in $\Fqm$, where $\rho\leq \min\{a,b\}$ is the rank of $\MABin$.
\end{theorem}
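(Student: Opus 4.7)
The plan is to reduce the left case to the right case of Theorem~\ref{thm:correctness_rightLinApp} via transposition in $\Fqm[x]$ followed by the coefficient-preserving bijection $\psi$; Algorithm~\ref{alg:leftSkewLinAppBas} is structured exactly to execute this reduction. Over the commutative ring $\Fqm[x]$, transposition exchanges left and right approximant bases and swaps $\s$-ordered column weak Popov form with $\s$-ordered row weak Popov form. Hence invoking $\textsf{RightBaseCase}$ on $\MABinhat^\top \in \Fqm[x]^{b \times a}$ with shift $\s \in \ZZ^a$ produces $\MABouthat \in \Fqm[x]^{a \times a}$ whose transpose $\MABouthat^\top$ is a left $\s$-ordered row weak Popov approximant basis of $\MABinhat$ of order $1$ over $\Fqm[x]$, with a triangular shape dual to \eqref{eq:shape_F}.

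The second step transports this structure to $\SkewPolys$ through the entry-wise application of $\psi$. The crucial observation is that for a matrix $\MABin$ of degree strictly less than $1$, the order-$1$ left-approximant condition on $\b = \sum_i \b_i x^i \in \SkewPolys^{1 \times a}$ reduces to $\b_0 \MABin = \0$, since expanding $\b \MABin = \sum_i \b_i \sigma^i(\MABin) x^i$ and extracting the constant term involves only $\sigma^0 = \mathrm{id}$. Because $\MABinhat = \psi^{-1}(\MABin)$ agrees with $\MABin$ coefficient-wise, the analogous characterisation of left approximants over $\Fqm[x]$ coincides. Moreover $\psi$ is $\Fqm$-linear with $\psi(x) = x$, so it preserves entry degrees and thus the $\s$-ordered row weak Popov property; the triangular shape further ensures that $\MABout := \psi(\MABouthat^\top)$ has trivial left and right kernel and is therefore full rank over $\SkewPolys$.

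To confirm that the rows of $\MABout$ generate \emph{every} left approximant of $\MABin$ of order $1$, I would adapt the two key properties used in the proof of Theorem~\ref{thm:correctness_rightLinApp}. The analogue of Property~\ref{itm:F_property_x}, namely that the submodule $x \cdot \SkewPolys^{1 \times a}$ lies in the left row span of $\MABout$, follows by exhibiting for each $j$ an explicit left combination producing $x \vec{e}_j$, using the triangular block structure together with the commutation $x c = \sigma(c) x$ for $c \in \Fqm$. The analogue of Property~\ref{itm:F_property_degree} is that if $\vhat = \lambdaVechat \MABouthat^\top$ in $\Fqm[x]^{1 \times a}$ with $\deg \vhat = 0$, then $\deg \lambdaVechat = 0$; this is the row-case predictable degree property for $\s$-ordered row weak Popov matrices over $\Fqm[x]$. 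In that situation $\lambdaVechat$ and $\vhat$ are scalar vectors and $\MABouthat^\top$ decomposes as a scalar matrix plus a scalar matrix times $x$, so the identity $\lambdaVechat \psi(\MABouthat^\top) = \vhat$ transports verbatim to $\SkewPolys$ because only $\Fqm$-linear operations are involved. Splitting any left approximant as $\b = \b_0 + \b' x$ then places $\b' x$ in the span by the first property and $\b_0$ in the span by the second.

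For the complexity, the two transpositions and applications of $\psi$ and $\psi^{-1}$ are free, so the dominant cost is the call to $\textsf{RightBaseCase}$ on a $b \times a$ scalar matrix of rank $\rho$, which by the row-and-column rank profile computation of \cite[Thm.~2.10]{storjohann2000algorithms} amounts to $O(\rho^{\omega-2} ab)$ operations in $\Fqm$. The only substantive obstacle I anticipate is the transport of the basis property across $\psi$: since $\psi$ is not a ring homomorphism, module relations cannot be pushed through blindly, and it is precisely the degree control afforded by the ordered weak Popov form that bridges the gap.
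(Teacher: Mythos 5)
Your proposal is correct and follows essentially the same route as the paper: transpose to reduce to \textsf{RightBaseCase} over $\Fqm[x]$, transport the result to $\SkewPolys$ via the coefficient-preserving map $\psi$, and re-establish the two key properties (containment of the row space of $x\I_a$ and the degree-$0$ predictable-degree property) exactly as in the proof of Theorem~\ref{thm:correctness_rightLinApp}. Your direct characterisation of order-$1$ left approximants via $\b_0\MABin=\0$ is just an unwound form of the property $\psi(fg \rem x)=\psi(f)\psi(g)\remr x$ that the paper invokes, so the arguments coincide.
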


\begin{proof}
The proof is the same as the one of Theorem~\ref{thm:correctness_rightLinApp}, using the analogous properties of \eqref{eq:phi_key_property_degree_0_poly} and \eqref{eq:phi_key_property_0_coeff} for $\psi$ in the left side,
\begin{align}
&\psi(f h) = \psi(f) \psi(h) \quad \forall \, f \in \Fqm[x]_{<1} \text{ and } h \in \Fqm[x], \label{eq:pii_key_property_degree_0_poly} \\
&\psi(f g \rem x) = \psi(f) \psi(g) \remr x \quad \forall \, f,g \in \Fqm[x], \label{eq:pii_key_property_0_coeff}
\end{align}
as well as the following ``transposed'' analogs of the properties of $\MABout$ and $\MABouthat$ in the right case:
\begin{enumerate}
	\item \label{itm:F_property_x_left} The (left) row space of $x \I_{a} \in \SkewPolys^{a \times a}$ is contained in the row space of $\MABout$.
	\item \label{itm:F_property_degree_left} If $\vhat = \lambdaVechat \MABouthat$ for two vectors $\vhat,\lambdaVechat \in \Fqm[x]^{a}$ and $\deg \vhat = 0$, then $\deg \lambdaVec = 0$.
\end{enumerate}
Recall that over $\Fqm[x]$, the transpose of a right approximant basis of $\MABinhat{}^\top$ is a left approximant basis of $\MABinhat$.
\end{proof}

\subsubsection{Recursive Algorithm: Right and Left PM-Basis}
\label{ssec:right_PM-basis_recursive_step}

This section presents a skew-polynomial variant of the \textsf{PM-Basis} algorithm, which computes approximant bases of higher order $d>1$ in a recursive fashion using Algorithm~\ref{alg:rightSkewLinAppBas} (right side) and Algorithm~\ref{alg:leftSkewLinAppBas} (left side) as its base case, respectively.
The recursion step is based on the following lemmas, which would remain true if stated over ordinary polynomial rings. However the ordering of the involved polynomial products and the choice of left/right modulo is central for the statements and proofs to hold over the non-commutative skew polynomial ring.

\begin{lemma}\label{lem:appBasisProduct} %
Let $d\in\mathbb{Z}_{>0}$, $\MABin\in\SkewPolys^{a\times b}$ of degree less than $d$, and $d_1,d_2 \in \mathbb{Z}_{>0}$ be such that $d_1+d_2 = d$.

Let $\s\in\mathbb{Z}^{b}$,
$\MABout_1 \in \RMABnameShort{\MABin \reml x^{d_1}}{\s}{d_1}$, and
$\MABout_2 \in \RMABnameShort{x^{-d_1} \MABin \MABout_1 \reml x^{d-d_1}}{\t}{d_2}$, where $\vec{t} := \cdeg_\s(\MABout_1)$. 
Then, $\MABout_1 \MABout_2 \in \RMABnameShortStandard$.

Let $\s\in\mathbb{Z}^{a}$, $\MABout_1 \in \LMABnameShort{\MABin \reml x^{d_1}}{\s}{d_1}$, and
$\MABout_2 \in \LMABnameShort{\MABout_1 \MABin x^{-d_1} \remr x^{d-d_1}}{\t}{d_2}$, where $\vec{t} := \rdeg_\s(\MABout_1)$.
Then, $\MABout_2 \MABout_1 \in \LMABnameShortStandard$.

\end{lemma}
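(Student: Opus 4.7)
The plan is to prove the right case in detail; the left case then follows by a dual argument. I will verify the three defining properties of Definition~\ref{def:minimal_approximant_basis} for $\MABout_1 \MABout_2$. Full rank is immediate since $\SkewPolys$ is an integral domain and both factors are full-rank $b \times b$ matrices. The $\s$-ordered column weak Popov shape follows directly from Lemma~\ref{lem:s_ordered_property_multiplication} with the hypothesised $\t = \cdeg_\s(\MABout_1)$. The real content is the module-generation statement, which I would split into two inclusions.

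For the inclusion ``every column of $\MABout_1 \MABout_2$ is a right approximant of $\MABin$ of order $d$'', I first verify that $x^{-d_1} \MABin \MABout_1$ is a genuine skew-polynomial matrix. Since $\MABin \equiv \MABin \reml x^{d_1} \modl x^{d_1}$ by definition of the remainder, and $(\MABin \reml x^{d_1}) \MABout_1 \equiv \0 \modl x^{d_1}$ by the approximant property of $\MABout_1$, the product $\MABin \MABout_1$ is left-divisible by $x^{d_1}$. The hypothesis on $\MABout_2$ then gives $(x^{-d_1} \MABin \MABout_1) \MABout_2 \equiv \0 \modl x^{d_2}$, since the truncation with $\reml x^{d-d_1} = \reml x^{d_2}$ only discards terms that do not influence this congruence. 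Left-multiplying by $x^{d_1}$ now yields $\MABin (\MABout_1 \MABout_2) \equiv \0 \modl x^d$, as required.

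For the reverse inclusion, let $\v \in \SkewPolys^{b \times 1}$ with $\MABin \v \equiv \0 \modl x^d$. Since $d_1 \leq d$, $\v$ is in particular a right approximant of $\MABin \reml x^{d_1}$ of order $d_1$, and the approximant property of $\MABout_1$ produces $\w \in \SkewPolys^{b \times 1}$ with $\v = \MABout_1 \w$. Substituting into $\MABin \v \equiv \0 \modl x^d$ and dividing the factor $\MABin \MABout_1$ by $x^{d_1}$ on the left gives $(x^{-d_1} \MABin \MABout_1) \w \equiv \0 \modl x^{d_2}$; by the approximant property of $\MABout_2$, the vector $\w$ lies in the column space of $\MABout_2$, whence $\v = \MABout_1 \w$ lies in the column space of $\MABout_1 \MABout_2$.

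The main delicate point will be bookkeeping the interaction between left-congruence $\modl x^k$ and matrix multiplication in the non-commutative ring $\SkewPolys$: specifically, the justifications that $\MABin \MABout_1$ is left-divisible by $x^{d_1}$, and that left-multiplying a congruence $\modl x^{d_2}$ by $x^{d_1}$ strengthens it to $\modl x^d$, both rest on the observation that $x^{d_1} \chi$ is left-divisible by $x^{d_1}$ for any $\chi \in \SkewPolys$. The left case is obtained by transposing the whole argument: row spaces replace column spaces, the order of $\MABout_1$ and $\MABout_2$ in all products is swapped, the congruences $\modl x^k$ become $\modr x^k$, and the row version of Lemma~\ref{lem:s_ordered_property_multiplication} is applied.
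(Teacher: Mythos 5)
Your proposal is correct and follows essentially the same route as the paper's proof: the same decomposition $\MABin = (\MABin \reml x^{d_1}) + x^{d_1}\tilde{\MABin}$ to pass between the approximant conditions for $\MABin$, $\MABin \reml x^{d_1}$, and $x^{-d_1}\MABin\MABout_1 \reml x^{d_2}$, the two inclusions for the column span, and Lemma~\ref{lem:s_ordered_property_multiplication} for the ordered weak Popov shape. Your explicit remarks on why $\MABin\MABout_1$ is left-divisible by $x^{d_1}$ and on the full-rank property are slightly more detailed than the paper's write-up but add nothing new in substance.
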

	
	\begin{IEEEproof}
		We prove the right case, the left-side case follows analogously.
		First, we show that all approximants of $\MABin$ of order $d$ are right $\SkewPolys$-linear combinations of the columns of $\MABout_1 \MABout_2$.		
		Let $\b$ be an approximant of $\MABin$ of order $d$ and decompose $\MABin$ as $\MABin=\MABin\reml x^{d_1}+x^{d_1} \tilde{\MABin}$. Then,
		\begin{align*}
		(\MABin\reml x^{d_1}+x^{d_1}\tilde{\MABin})\b&\equiv \0 \modl x^{d}
		\\
		\quad\Longrightarrow\quad 
		(\MABin\reml x^{d_1})\b&\equiv \0 \modl x^{d_1}.
		\end{align*}
		Hence, $\b$ is also an approximant of $\MABin\reml x^{d_1}$ of order $d_1$ and we can write $\b = \MABout_1 \vec{\lambda}$ for some $\vec{\lambda} \in \SkewPolys^b$.
		This $\lambdaVec$ again fulfills
		\begin{align*}
		\MABin \MABout_1 \lambdaVec &\equiv \0 \modl x^d, \\
		\Longrightarrow \MABin \MABout_1 \lambdaVec &= x^d \v' \\
		\Longrightarrow x^{-d_1}\MABin \MABout_1 \lambdaVec &= x^{d-d_1} \v' = x^{d_2} \v' \\
		\Longrightarrow x^{-d_1}\MABin \MABout_1 \lambdaVec &\equiv \0 \modl x^{d_2}, 
		\end{align*}
		for some $\v' \in \SkewPolys$. Again, we can decompose
		\begin{equation*}
		x^{-d_1}\MABin \MABout_1 = (x^{-d_1}\MABin \MABout_1 \reml x^{d_2}) + x^{d_2} \tilde{\MABin}
		\end{equation*}
		and have
		$(x^{-d_1}\MABin \MABout_1 \reml x^{d_2}) \lambdaVec \equiv \0 \modl x^{d_2}$.
		Thus, $\lambdaVec$ is an approximant of $x^{-d_1}\MABin \MABout_1 \reml x^{d_2}$ of order $d_2$ and can be written as $\lambdaVec = \MABout_2 \vec{\mu}$. Overall, we get
		\begin{equation*}
		\b = \MABout_1 \MABout_2 \vec{\mu},
		\end{equation*}
		so $\b$ is in the right column span of $\MABout_1 \MABout_2$.

		For the other direction, let $\b = \MABout_1 \MABout_2 \vec{\mu}$ be in the column span of $\MABout_1 \MABout_2$. We show that $\b$ is an approximant of $\MABin$ of order $d$.
		Let $\lambdaVec = \MABout_2 \vec{\mu}$. Thus, $\lambdaVec$ is an approximant of $\x^{-d_1} \MABin \MABout_1 \reml x^{d_2}$ and we have
		\begin{align*}
		(\x^{-d_1} \MABin \MABout_1 \reml x^{d_2}) \lambdaVec &\equiv \0 \modl x^{d_2} \\
		\Longrightarrow (\x^{-d_1} \MABin \MABout_1 \reml x^{d_2}) \lambdaVec &= x^{d_2} \v'
		\end{align*}
		for some $\v' \in \SkewPolys^b$. We can again write $\x^{-d_1} \MABin \MABout_1 \reml x^{d_2} = \x^{-d_1} \MABin \MABout_1 - x^{d_2} \tilde{\MABin}$ and get
		\begin{align*}
\x^{-d_1} \MABin \MABout_1 \lambdaVec &= x^{d_2} \v' + x^{d_2} \tilde{\MABin} \lambdaVec \\
\Longrightarrow \MABin \MABout_1 \lambdaVec &= x^{d_1+d_2} \v' + x^{d_1+ d_2} \tilde{\MABin} \lambdaVec \\
\Longrightarrow \MABin \MABout_1 \lambdaVec &\equiv \0 \modl x^{d}.
\end{align*}
		Hence, $\b = \MABout_1 \MABout_1 \vec{\mu}$ is an approximant of $\MABin$ of order $d$.

		By Lemma~\ref{lem:s_ordered_property_multiplication}, $\MABout_1\MABout_2$ is in $\s$-ordered weak Popov form and the statement follows.
	\end{IEEEproof}

Algorithms~\ref{alg:rightSkewDaCAppBasis} and \ref{alg:leftSkewDaCAppBasis} are fast divide \& conquer algorithms for constructing right and left approximant bases over skew polynomial rings, respectively.
The algorithms use Lemma~\ref{lem:appBasisProduct} with $d_1=\lceil d/2\rceil$ and $d_2=d-d_1$ recursively and are fast skew variants of~\cite[\textsf{PM-Basis}]{GiorgiPolyMatrix}.
	
	\begin{algorithm}[ht!]
		\caption{$\textsf{RightSkewPMBasis}$}\label{alg:rightSkewDaCAppBasis}
		\SetKwInOut{Input}{Input}\SetKwInOut{Output}{Output}
		\Input{\begin{itemize}
				\item positive integer $d\in\mathbb{Z}_{>0}$,
				\item matrix $\MABin\in\SkewPolys^{a\times b}$ of degree $<d$,
				\item shifts $\s\in\mathbb{Z}^{b}$.
		\end{itemize}}
		
		\Output{$\B \in \RMABnameShortStandard$} %
		
		\BlankLine
		
		\If{$d=1$}{
			\Return{{\rm $\textsf{RightSkewBaseCase}(\MABin,\s)$} \hfill \myAlgoComment{Algorithm~\ref{alg:rightSkewLinAppBas}}}
		}
		\Else{
			$d_{1}\gets \lceil d/2\rceil$, $d_{2}\gets d-d_{1}$ \\
			$\MABout_{1}\gets\textsf{RightSkewPMBasis}\left(d_{1},\MABin\reml x^{d_{1}},\s\right)$ \\
			$\mat{G}\gets\left(x^{-d_{1}}\MABin\MABout_1\right)\reml x^{d_{2}}$; $\vec{t}\gets \cdeg_\s\left(\MABout_1\right)$ \label{line:mat_mult_right} \\
			$\MABout_2\gets\textsf{RightSkewPMBasis}\left(d_2,\mat{G},\vec{t}\right)$\\
			\Return{$\MABout_1\MABout_2$ \label{line:mat_mult_right_2}}
		}
	\end{algorithm}

	\begin{algorithm}[ht!]
		\caption{$\textsf{LeftSkewPMBasis}$}\label{alg:leftSkewDaCAppBasis}
		\SetKwInOut{Input}{Input}\SetKwInOut{Output}{Output}
		\Input{\begin{itemize}
				\item positive integer $d\in\mathbb{Z}_{>0}$,
				\item matrix $\MABin\in\SkewPolys^{a\times b}$ of degree $<d$, 
				\item shifts $\s\in\mathbb{Z}^{a}$.
		\end{itemize}}
		
		\Output{$\B \in \LMABnameShortStandard$} %
		
		\BlankLine
		
		\If{$d=1$}{
			\Return{{\rm $\textsf{LeftSkewBaseCase}(\MABin,\s)$} \hfill \myAlgoComment{Algorithm~\ref{alg:leftSkewLinAppBas}}}
		}
		\Else{
			$d_{1}\gets \lceil d/2\rceil$, $d_{2}\gets d-d_{1}$ \\
			$\MABout_{1}\gets\textsf{LeftSkewPMBasis}\left(d_{1},\MABin\remr x^{d_{1}},\s\right)$ \\
			$\mat{G}\gets\left(\MABout_1\MABin x^{-d_{1}}\right)\remr x^{d_{2}}$; $\vec{t}\gets \rdeg_\s\left(\MABout_1\right)$ \label{line:mat_mult_left} \\
			$\MABout_2\gets\textsf{LeftSkewPMBasis}\left(d_2,\mat{G},\vec{t}\right)$\\
			\Return{$\MABout_2\MABout_1$ \label{line:mat_mult_left_2}}
		}
	\end{algorithm}

	\begin{theorem}\label{thm:correctness_DaCApp}
		Algorithm~\ref{alg:rightSkewDaCAppBasis} is correct and has complexity 
		\begin{equation*}
		\softO\big(\max\{a,b\}b^{\omega-1} \OMul{d}\big)
		\end{equation*}
		operations in the base field of the cost bound $\OMul{d}$.
		Algorithm~\ref{alg:leftSkewDaCAppBasis} is correct and has complexity
		\begin{equation*}
		\softO\big(a^{\omega-1}\max\{a,b\} \OMul{d}\big)
		\end{equation*}
		operations in the base field of the cost bound $\OMul{d}$.
	\end{theorem}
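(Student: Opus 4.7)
The plan is to prove both correctness and the complexity bound by induction on $d$, treating the right and left cases in parallel; I describe the right case (Algorithm~\ref{alg:rightSkewDaCAppBasis}) and remark where the left case differs.

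\textbf{Correctness.} The base case $d=1$ is exactly Theorem~\ref{thm:correctness_rightLinApp}. For the recursive step with $d>1$, pick $d_{1}=\lceil d/2\rceil$ and $d_{2}=d-d_{1}$. By the inductive hypothesis applied to the first recursive call, $\MABout_{1}\in\RMABnameShort{\MABin\reml x^{d_{1}}}{\s}{d_{1}}$. Setting $\vec{t}:=\cdeg_{\s}(\MABout_{1})$ and $\mat{G}:=(x^{-d_{1}}\MABin\MABout_{1})\reml x^{d_{2}}$, the inductive hypothesis on the second call gives $\MABout_{2}\in\RMABnameShort{\mat{G}}{\vec{t}}{d_{2}}$. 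Lemma~\ref{lem:appBasisProduct} then yields $\MABout_{1}\MABout_{2}\in\RMABnameShortStandard$, which proves correctness. For Algorithm~\ref{alg:leftSkewDaCAppBasis} the same argument applies after replacing $\reml$ by $\remr$, right multiplication by left multiplication, and using the left half of Lemma~\ref{lem:appBasisProduct} and Theorem~\ref{thm:correctness_leftLinApp} as the base case.

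\textbf{Degree bounds on the intermediate bases.} For the recurrence we need that at each level the matrices whose products we compute have degree essentially bounded by the current order. By Definition~\ref{def:minimal_approximant_basis}, every entry of a $b\times b$ right $\s$-ordered weak-Popov approximant basis of order $d'$ has degree at most $d'$; this is because each column is itself an approximant of order $d'$ and, if its degree exceeded $d'$, we could subtract off an $\SkewPolys$-multiple of $x^{d'}\e_{i}$ (which lies in the module) without changing the pivot structure. Hence the $\MABout_{1}$ returned by the first recursive call has entries of degree $\leq d_{1}$, and similarly for $\MABout_{2}$ of degree $\leq d_{2}$. This guarantees that in Line~\ref{line:mat_mult_right} the product $\MABin\MABout_{1}$ (computed modulo $x^{d}$ up to the shift by $x^{d_{1}}$) and in Line~\ref{line:mat_mult_right_2} the product $\MABout_{1}\MABout_{2}$ both involve skew polynomials of degree $O(d)$.

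\textbf{Complexity recurrence.} The dominating cost in each recursive invocation is the skew-polynomial matrix multiplication on Line~\ref{line:mat_mult_right}: an $a\times b$ matrix times a $b\times b$ matrix, both of degree $O(d)$. Slicing the $a\times b$ matrix into $\lceil a/b\rceil$ blocks of size $b\times b$ reduces this to $\lceil a/b\rceil$ many $b\times b$ skew-polynomial matrix products, each of which costs $O(b^{\omega}\OMul{d})$ operations in the base field of $\OMul{d}$ (by performing $O(b^{\omega})$ scalar skew-polynomial multiplications in the naive matrix-product schedule). The total cost per level is therefore in $O(\max\{a,b\}\,b^{\omega-1}\,\OMul{d})$; the final multiplication $\MABout_{1}\MABout_{2}$ in Line~\ref{line:mat_mult_right_2} has cost $O(b^{\omega}\OMul{d})$, which fits within the same bound. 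Denoting this cost by $f(a,b,d)$, we obtain
\[
T(a,b,d)\;\leq\;T(a,b,d_{1})+T(b,b,d_{2})+O\bigl(f(a,b,d)\bigr),
\]
with $T(a,b,1)\in O(\min\{a,b\}^{\omega-2}ab)$ by Theorem~\ref{thm:correctness_rightLinApp}. Since $\OMul{d}$ is at least quasi-linear, $f(a,b,d)+f(a,b,d/2)+f(a,b,d/4)+\cdots\in\softO(f(a,b,d))$, and the recursion depth contributes only a logarithmic factor. This yields $T(a,b,d)\in\softO(\max\{a,b\}\,b^{\omega-1}\,\OMul{d})$.

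\textbf{Left case.} For Algorithm~\ref{alg:leftSkewDaCAppBasis} the bottleneck is the product $\MABout_{1}\MABin x^{-d_{1}}\remr x^{d_{2}}$ on Line~\ref{line:mat_mult_left}, now of shape $a\times a$ times $a\times b$; slicing the second factor into $\lceil b/a\rceil$ blocks of size $a\times a$ gives per-level cost $O(a^{\omega-1}\max\{a,b\}\OMul{d})$, and the same recurrence analysis yields the claimed $\softO\bigl(a^{\omega-1}\max\{a,b\}\OMul{d}\bigr)$.

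\textbf{Main obstacle.} The conceptually non-trivial point is keeping track of the one-sided division / modulo operations through the recursion: the identity underlying Lemma~\ref{lem:appBasisProduct} requires that $x^{-d_{1}}$ be extracted in the correct side (left of $\MABin\MABout_{1}$ for the right case, right of $\MABout_{1}\MABin$ for the left case), and the proof of correctness crucially uses that the central slice of the product can be computed unambiguously despite the non-commutativity of $\SkewPolys$. Once this is clear, the complexity argument is a straightforward adaptation of the commutative \textsf{PM-Basis} analysis, modulo the fact that the cost of scalar multiplication is $\OMul{d}$ rather than $M(d)$.
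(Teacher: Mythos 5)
Your proof is correct and follows essentially the same route as the paper's: correctness from Lemma~\ref{lem:appBasisProduct} together with the base-case theorems, and complexity from the divide-and-conquer recurrence whose per-level cost is dominated by the two skew-polynomial matrix products (plus, in the right case, the cheap application of automorphisms for the left multiplication by $x^{-d_1}$). One small slip: the second recursive call operates on $\mat{G}=(x^{-d_{1}}\MABin\MABout_1)\reml x^{d_{2}}$, which is still an $a\times b$ matrix, so the recurrence term should be $T(a,b,d_{2})$ rather than $T(b,b,d_{2})$ --- this does not affect the final bound.
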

	
	\begin{proof}
		Correctness follows from Lemma~\ref{lem:appBasisProduct}, as well as the correctness of the base cases (Theorem~\ref{thm:correctness_rightLinApp} for Algorithm~\ref{alg:rightSkewLinAppBas} and Theorem~\ref{thm:correctness_leftLinApp} for Algorithm~\ref{alg:leftSkewLinAppBas}). %
		
		As for the complexity, the algorithms call themselves twice with input size $\approx d/2$.
		Taking a matrix left or right modulo $x^{d_i}$ corresponds to setting all coefficients of degree at least $d_i$ to zero in each entry. Multiplying $x^{-d_{1}}$ from the left in Line~\ref{line:mat_mult_right} of Algorithm~\ref{alg:rightSkewDaCAppBasis} requires to apply an automorphism to each polynomial coefficient, hence costs $O(ab d)$ operations in $\Fqm$. Note that this is not necessary in Algorithm~\ref{alg:leftSkewDaCAppBasis} since the monomial is multiplied from the right.

		Other operations that have a non-negligible cost are the base cases and the matrix multiplications (Lines~\ref{line:mat_mult_right} and \ref{line:mat_mult_right_2} in Algorithm~\ref{alg:rightSkewDaCAppBasis} and Lines~\ref{line:mat_mult_left} and \ref{line:mat_mult_left_2} in Algorithm~\ref{alg:rightSkewDaCAppBasis}).
		We discuss the right case, the other side follows analogously by replacing $a$ and $b$ in the complexity expression.
		The two multiplications are $\MABin \in \SkewPolys^{a \times b}$ times $\MABout_1 \in \SkewPolys^{b \times b}$ and $\MABout_1 \in \SkewPolys^{b \times b}$  times $\MABout_2 \in \SkewPolys^{b \times b}$, all matrices have degree at most $d$.
		The product $\MABin \MABout_1$ can be computed in $O(\tfrac{a}{b} b^\omega \OMul{d}) = O(a b^{\omega-1}\OMul{d})$ if $a \geq b$ and in $O(b^\omega \OMul{d})$ otherwise. The product $\MABout_1\MABout_2$ costs $O(b^\omega \OMul{d})$.
		In total, the matrix multiplications can be computed with complexity $O(\max\{a,b\} b^{\omega-1} \OMul{d})$.
		The base case, Algorithm~\ref{alg:rightSkewLinAppBas}, costs $O(\min\{a,b\}^{\omega-1}ab)$.

		Hence, we obtain the claimed complexity by the master theorem for divide-and-conquer recurrences.
	\end{proof}

\begin{remark}\label{rem:M-basis}
		Using Lemma~\ref{lem:appBasisProduct} with $d_1=1$ (i.e. the base case) and $d_2=d-1$ in an iterative manner results in right and left skew variants of~\cite[\textsf{M-Basis}]{GiorgiPolyMatrix} where the order of $\MABout$ is increased by one in each iteration.
		The complexities of the resulting algorithms are  $\softO\big(\max\{a,b\}b^{\omega-1} d^2\big)$ (right case) and $\softO\big(a^{\omega-1}\max\{a,b\} d^2\big)$ (left case) operations in $\Fqm$, respectively.
		
		This is asymptotically slower than Algorithms~\ref{alg:rightSkewDaCAppBasis} and \ref{alg:leftSkewDaCAppBasis} using skew polynomial multiplication algorithms of sub-quadratic complexity over $\Fqm$, e.g.\ \cite{caruso2017new,puchinger2017fast}.
		In particular, applying the skew \textsf{M-Basis} algorithm to the decoding problems in the remainder of the paper would not improve the asymptotic costs of the state-of-the-art decoder implementations.

		However, for small orders $d$, the skew \textsf{M-Basis} algorithm might be faster than the skew \textsf{PM-Basis} algorithm due to large hidden constants in the asymptotic expressions of asymptotically fast skew polynomial multiplication algorithms. The two methods can also be combined by calling \textsf{M-Basis} (instead of \textsf{PM-Basis}) inside \textsf{PM-Basis} as soon as $d$ is small enough.

		For completeness, we present the skew \textsf{M-Basis} algorithm and prove its complexity in Appendix~\ref{app:M-basis}.
\end{remark}

\section{Fast Decoding of Rank-Metric and Subspace Codes}\label{sec:rank_and_subspace}

We show how to speed up interpolation-based decoding of interleaved Gabidulin codes in the rank metric (Wachter-Zeh--Zeh decoder \cite{wachter2014list}) and lifted interleaved Gabidulin codes in the subspace metric (Bartz--Wachter-Zeh decoder \cite{BartzWachterZeh_ISubAMC}).

The interpolation and root-finding steps of both considered decoders are special instances of two general computational problems, which we state and relate to the decoders.
Then we present new algorithms to solve the two problems by reducing Problem~\ref{prob:general_interpolation_problem} to computing a \emph{left} approximant basis (Algorithm~\ref{alg:fast_interpolation} in Section~\ref{ssec:interpolation_speed-up}), and show that Problem~\ref{prob:general_root-finding_problem} (i.e., root finding) can be efficiently solved by a \emph{right} approximant basis (Algorithm~\ref{alg:fast_root_finding} in Section~\ref{ssec:root_finding_speed-up}).

In this section, we only use the operator evaluation of skew polynomials (cf.~Section~\ref{ssec:evaluation_maps}). %

\subsection{Computational Problems and their Relation to Decoding}\label{ssec:rank_subspace_computational_problems}

\begin{problem}[Vector (Operator) Interpolation%
]\label{prob:general_interpolation_problem}
Given $\ell,n,D \in \ZZ_{> 0}$, $\w \in \ZZ_{\geq 0}^{\ell+1}$, and $\mat{U} = [U_{i,j}] \in \Fqm^{n \times (\ell+1)}$ whose rows (called ``interpolation points'') are $\Fq$-linearly independent.
Consider the $\Fqm$-vector space $\Qspace$ (left scalar multiplication) of vectors $\Q = \left[Q_0,Q_1,\dots,Q_\ell\right]\in\SkewPolys^{\intOrder+1}$ that satisfy the following two conditions:
\begin{align}
\sum_{j=1}^{\ell+1} Q_{j-1}\!\left(U_{i,j}\right) &= 0, &&\forall \, i=1,\dots,n, \label{eq:interpolation_problem_eval} \\
\rdeg_\w(\Q) &< \degConstraint. \label{eq:interpolation_problem_deg}
\end{align}
Find left $\SkewPolys$-linearly independent $\Q^{(1)},\dots,\Q^{(\IntParam)} \in \Qspace \setminus \{\0\}$ whose left $\SkewPolys$-span contains $\Qspace$. 
\end{problem}

\begin{problem}[Vector Root Finding]\label{prob:general_root-finding_problem}
	Given $\ell,n \in \ZZ_{> 0}$, $\vec{k} \in \ZZ_{> 0}^\ell$, and vectors $\Q^{(1)},\dots,\Q^{(\IntParam)} \in \SkewPolys^{\ell+1} \setminus \{\0\}$ that are left $\SkewPolys$-linearly independent (this implies $\ell'\leq \ell+1$) and fulfill $\deg \Q^{(i)} \leq n$ for all $i$. %
	Find a basis of the $\Fqm$-linear affine space (scalar multiplication from the right)
	\begin{align}
	\Module := \big\{ &[f^{(1)},\dots,f^{(\ell)}] \in \SkewPolys^\ell \, : \, \,  \label{eq:affine_root_space} \\
	&Q_0^{(i)} + \textstyle\sum_{j=1}^{\ell} Q_j^{(i)} f^{(j)} = 0 \, \forall \, i, \, \deg f^{(j)} < k^{(j)} \, \forall \, j\big\}. \notag
	\end{align}
\end{problem}

Complexity-wise, we consider only the cases $D \in \Theta(n)$ and $\max_{i}k^{(i)} \in \Theta(n)$ since they are the most relevant for decoding.
See Section~\ref{ssec:remarks_on_generality} for a discussion on the cases $D,\max_{i}k^{(i)} \ll n$ and $D,\max_{i}k^{(i)} \gg n$.
The fastest algorithm to solve Problem~\ref{prob:general_interpolation_problem} with $D \in \Theta(n)$ is \cite{xie_linearized_2013} with a complexity of $O(\ell^2 n^2)$ over $\Fqm$.
If the first column of $\U$ consists of $\Fq$-linearly independent elements (see, e.g., Wachter-Zeh decoder \cite{wachter2014list} below), Problem~\ref{prob:general_interpolation_problem} can be solved with complexity $O(\ell^3 \OMul{\ell n})$ %
 \cite{alekhnovich_linear_2005,puchinger2018construction}.
For $\max_{i}k^{(i)} \in \Theta(n)$, Problem~\ref{prob:general_root-finding_problem} can be solved in $O(\ell^3 n^2)$ over $\Fqm$ \cite{wachter2014list} or, if $|\Module|=1$, in $O(\ell^2 n^2)$ \cite{BartzWachterZeh_ISubAMC}.

\subsubsection{Interpolation-Based Decoding of Rank-Metric Codes}\label{ssec:interleaved_gabidulin_decoder}

We recall the Wachter-Zeh--Zeh decoder and connect it to Problems~\ref{prob:general_interpolation_problem} and~\ref{prob:general_root-finding_problem}.
Let $n \leq m$ and $\ell$ be positive integers, $\vec{\alpha} = [\alpha_1,\dots,\alpha_n] \in \Fqm^n$ be a vector whose entries are linearly independent over $\Fq$, and $\k = [k^{(1)},\dots,k^{(\ell)}] \in \{1,\dots,n\}^\ell$. The corresponding interleaved Gabidulin code \cite{loidreau2006decoding} is
\begin{align*}
&\CIGab[\ell,\vec{\alpha};n,\vec{k}] := \\
&\left\{ \begin{bmatrix}
f^{(1)}(\alpha_1) & \cdots & f^{(1)}(\alpha_n) \\
\vdots & \ddots & \vdots \\
f^{(\ell)}(\alpha_1) & \cdots & f^{(\ell)}(\alpha_n) \\
\end{bmatrix} : f^{(i)} \in \SkewPolys_{< k^{(i)}} \, \forall \, i\right\}.
\end{align*}
All codewords $\C$, which are $\Fqm^{\ell \times n}$ matrices, have a corresponding message polynomial vector $\f := \big[f^{(1)},\dots,f^{(\ell)}\big]$, whose entries evaluate to the rows of $\C$ at $\vec{\alpha}$.

The codes are designed for the following generalization of the rank metric.
Fix a basis of $\Fqm$ over $\Fq$.
Then any element of $\Fqm$ can be written as a vector in $\Fq^m$ by expanding the element in this basis.
The rank weight $\wtR(\A)$ of a matrix $\A \in \Fqm^{\ell \times n}$ is the $\Fq$ rank of the matrix in $\Fq^{\ell m \times n}$ that we obtain by expanding each entry of $\A$ into a column vector. The rank distance of two matrices $\A, \B \in \Fqm^{\ell \times n}$ is the rank weight of their difference, i.e., $\dR(\A,\B) := \wtR(\A-\B)$.
Due to $\Fq^{\ell m \times n} \simeq \mathbb{F}_{q^{m\ell}}^n$, this is the usual rank metric in $\mathbb{F}_{q^{m\ell}}^n$.
See Section~\ref{ssec:intro_codes} for applications of the codes and the metric.

Let $\CIGab[\ell,\vec{\alpha};n,\vec{k}]$ be an interleaved Gabidulin code and
$\R \in \Fqm^{\ell \times n}$
be a received word. %
The interpolation step of the Wachter-Zeh--Zeh decoder solves Problem~\ref{prob:general_interpolation_problem}
with input $\ell$, $n$,
\begin{align}
\degConstraint &= n-\left\lceil\tfrac{\ell(n+1)-\sum_{i=1}^{\ell} k^{(i)}}{\ell+1}\right\rceil+1, \notag\\
\w &= [0,k^{(1)}-1,\dots,k^{(\intOrder)}-1] \in \ZZ_{\geq 0}^{\ell+1}, \quad \text{and} \label{eq:rank_interpolation_problem_input} \\
\U &= \begin{bmatrix}
\vec{\alpha}^\top & \R^\top
\end{bmatrix}
\in \Fqm^{n \times (\ell+1)}.\notag
\end{align}
This instance of the problem always has a non-trivial solution (i.e., $\Qspace \neq \{0\}$).
If the output\footnote{Our interpolation problem output differs slightly from \cite{wachter2014list}, where either one solution or an $\Fqm$-basis of $\Qspace$ is found. It is easy to see that a set of $\SkewPolys$-linearly independent vectors whose span contains $\Qspace$ does not change the root space $\Module$ compared to a full $\Fqm$-basis. Further, Problem~\ref{prob:general_interpolation_problem} and Algorithm~\ref{alg:fast_interpolation} in Section~\ref{ssec:interpolation_speed-up} can be easily adapted to output one solution.} of this problem is input to Problem~\ref{prob:general_root-finding_problem}
(root-finding step), then the space $\Module$ in Problem~\ref{prob:general_root-finding_problem} contains all message polynomial vectors $\f \in \SkewPolys^{\ell}$ of codewords $\C \in \CIGab[\ell,\vec{\alpha};n,\vec{k}]$ whose rank distance to the received words is smaller than
\begin{equation*}
\dR(\C,\R) < \tfrac{\ell}{\ell+1}\left(n-\tfrac{1}{\ell}\textstyle\sum_i k^{(i)}+1\right).
\end{equation*}
This gives a list decoder with list size at most $|\Module|$ ($\Module$ may contain vectors that do not correspond to codewords lying within the decoding radius).
Wachter-Zeh and Zeh derived an exponential upper bound on $|\Module|$ and a bound (which is close to $1$ for many parameters) on the expected size of $|\Module|$ for a received word $\R$ that is chosen uniformly at random from $\Fqm^{\ell \times n}$.\footnote{The proof of \cite[Lemma~6]{wachter2014list} derives a bound on the expected size of $|\Module|$ for a uniformly chosen received word $\R$. However, the lemma statement does not fit to the proof since it assumes that $\R = \C+\E$ for a codeword $\C$ and error $\E$ of weight at most a given value $\tau$, i.e., depending on the code and $\tau$, $\R$ cannot even attain all values of $\Fqm^{\ell \times n}$.}
The algorithm can be turned into a partial unique decoder by declaring a failure for $|\Module|>1$.

The previous-fastest realization of the decoder has complexity $O(\ell^2 n^2)$.
With the new algorithms to solve Problems~\ref{prob:general_interpolation_problem} and \ref{prob:general_root-finding_problem} in the next subsections, we get the following speed-up.

\begin{theorem}\label{thm:rank_complexity_summary}
Decoding an interleaved Gabidulin code $\CIGab[\ell,\vec{\alpha};n,\vec{k}]$ using the decoder in \cite{wachter2014list}, where
\begin{itemize}
\item the interpolation step is implemented using Algorithm~\ref{alg:fast_interpolation} (Section~\ref{ssec:interpolation_speed-up}) with input $\ell$, $n$, $D$, $\w$, $\U$ as in \eqref{eq:rank_interpolation_problem_input} and
\item the root-finding step is implemented using Algorithm~\ref{alg:fast_root_finding} (Section~\ref{ssec:root_finding_speed-up}) with input $\ell$, $n$, $\k$, and the output $\Q^{(1)},\dots,\Q^{(\IntParam)}$ of the interpolation step,
\end{itemize}
has complexity $\softO\!\left( \ell^\omega \OMul{n} \right)$ operations in the base field of the cost bound $\OMul{n}$.
\end{theorem}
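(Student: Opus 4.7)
The proof is essentially an invocation of the complexity bounds on the two new subroutines (Algorithms~\ref{alg:fast_interpolation} and \ref{alg:fast_root_finding}), i.e., Theorems~\ref{thm:fast_interpolation_correctness_complexity} and \ref{thm:fast_root_finding_correctness_complexity} summarised in Table~\ref{tab:overview_tools}. The plan is to verify that the interpolation and root-finding instances produced by the decoder fit the parameter regimes for which those theorems give $\softO(\ell^\omega\OMul{n})$, and then to add the two contributions.

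First, I would dispatch the interpolation step. With the input specified in \eqref{eq:rank_interpolation_problem_input}, we have $\ell+1$ polynomial unknowns, $n$ interpolation points, weighted-degree shift $\w = [0, k^{(1)}-1,\dots,k^{(\ell)}-1]$, and the degree constraint $\degConstraint = n-\lceil (\ell(n+1)-\sum_i k^{(i)})/(\ell+1)\rceil + 1$. Since $1 \le k^{(i)} \le n$, we have $\degConstraint \in O(n)$, hence $\degConstraint \in \Theta(n)$ in the parameter range of interest. Theorem~\ref{thm:fast_interpolation_correctness_complexity} then gives the claimed cost $\softO(\ell^\omega\OMul{n})$ for running Algorithm~\ref{alg:fast_interpolation}, and by construction the output is a set of $\ell' \le \ell+1$ left-$\SkewPolys$-linearly independent vectors $\Q^{(1)},\dots,\Q^{(\ell')}$ whose left span contains $\Qspace$ and whose individual $\qdeg$-degrees are bounded by $\degConstraint \le n+1$.

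Next I would feed this output into the root-finding step. The vectors $\Q^{(i)}$ have degree at most $\degConstraint \leq n+1$, are left $\SkewPolys$-linearly independent, and $\k = [k^{(1)},\dots,k^{(\ell)}]$ satisfies $\max_i k^{(i)} \le n$, so $\max_i k^{(i)} \in O(n)$. These are precisely the hypotheses of Theorem~\ref{thm:fast_root_finding_correctness_complexity}, which bounds the cost of Algorithm~\ref{alg:fast_root_finding} on such input by $\softO(\ell^\omega \OMul{n})$ operations in the base field of the cost bound $\OMul{n}$.

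Summing the two contributions, the total decoding cost is $\softO(\ell^\omega \OMul{n}) + \softO(\ell^\omega \OMul{n}) = \softO(\ell^\omega \OMul{n})$. There is no genuine obstacle here beyond checking that the degree parameters coming out of \eqref{eq:rank_interpolation_problem_input} are indeed $\Theta(n)$ (equivalently, $O(n)$, since $\OMul{\cdot}$ is monotone and super-linear); the rest is a straightforward chaining of two previously established complexity results.
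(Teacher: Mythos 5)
Your proposal follows the same route as the paper: chain Theorem~\ref{thm:fast_interpolation_correctness_complexity} and Theorem~\ref{thm:fast_root_finding_correctness_complexity}, after checking that the decoder's instances fall into the parameter regimes those theorems assume. You correctly verify the two degree conditions ($D\in O(n)$ and $\deg\Q^{(i)}\leq n$, the latter being one of the two points the paper explicitly flags). However, you omit the other point the paper is careful about: Theorem~\ref{thm:fast_interpolation_correctness_complexity} only gives the clean bound $\softO(\ell^\omega\OMul{n})$ \emph{when the first column of $\U$ consists of $\Fq$-linearly independent elements}; otherwise there is an additional $O(\ell m n^{\omega-1})$ operations in $\Fq$, which is precisely why the subspace-code analogue (Theorem~\ref{thm:subspace_complexity_summary}) carries such an extra term while the present theorem does not. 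To justify the statement as written you must observe that the first column of $\U$ in \eqref{eq:rank_interpolation_problem_input} is $\vec\alpha^\top$, whose entries are $\Fq$-linearly independent by the definition of $\CIGab[\ell,\vec\alpha;n,\vec k]$, so the favourable case of the interpolation theorem applies. With that one sentence added, your argument matches the paper's proof.
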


\begin{proof}
Correctness and complexity follow directly from the correctness and complexity of Algorithm~\ref{alg:fast_interpolation} (Theorem~\ref{thm:fast_interpolation_correctness_complexity}) and Algorithm~\ref{alg:fast_root_finding} (Theorem~\ref{thm:fast_root_finding_correctness_complexity}) and the results in \cite{wachter2014list} (see also the brief summary above).
We only need to be careful about two points:
the entries of the first column of $\U$ are $\Fq$-linearly independent by definition of the $\alpha_i$; also, $\Q^{(1)},\dots,\Q^{(\IntParam)}$ is a valid input to Algorithm~\ref{alg:fast_root_finding} since by the choice of $D$ and the degree constraint in Problem~\ref{prob:general_interpolation_problem}, we have $\deg \Q^{(i)} \leq n$.
\end{proof}

\subsubsection{Interpolation-Based Decoding of Subspace Codes}\label{ssec:lifted_interleaved_gabidulin_decoder}
We recall the Bartz--Wachter-Zeh decoder \cite{BartzWachterZeh_ISubAMC}.
Let $\nTransmit \leq m$ and $\ell$ be positive integers, $\vec{\alpha} = [\alpha_1,\dots,\alpha_\nTransmit] \in \Fqm^\nTransmit$ be a vector whose entries are linearly independent over $\Fq$, and $\k = [k^{(1)},\dots,k^{(\ell)}] \in \{1,\dots,\nTransmit\}^\ell$. 
The corresponding lifted interleaved Gabidulin code~\cite{silva2008rank} is defined as
\begin{align*}
&\LCIGab[\ell,\vec{\alpha};\nTransmit,\vec{k}] := \\
&\left\{\spannedBy{\begin{bmatrix}
\vec{\alpha}^\top & \C^\top
\end{bmatrix}} : \C\in\CIGab[\ell,\vec{\alpha};\nTransmit,\vec{k}]\right\}
\end{align*}
where $\spannedBy{\begin{bmatrix}\vec{\alpha}^\top & \C^\top\end{bmatrix}}$ denotes the $\Fq$-linear row space of the matrix from $\Fq^{\nTransmit\times m(\ell+1)}$ obtained by expanding each entry of the matrix $\begin{bmatrix}\vec{\alpha}^\top & \C^\top\end{bmatrix} \in \Fqm^{n \times (\ell+1)}$ into a $1 \times m$ row vector over $\Fq$ using a fixed basis of $\Fqm$.
Hence, codewords are $\nTransmit$-dimensional subspaces of $\Fq^{m(\ell+1)}$.
The \emph{subspace distance} between two subspaces $\myspace{U},\myspace{V}$ of $\Fq^{m(\ell+1)}$ is defined as
\begin{equation}\label{eq:subspaceDistance}
	\Subspacedist{\myspace{U},\myspace{V}} %
	=\dim(\myspace{U})+\dim(\myspace{V})-2\dim(\myspace{U}\cap \myspace{V}).
\end{equation}
This is a natural metric in the \emph{operator channel} \cite{koetter2008coding}, which for an input subspace $\myspace{V}$ of $\dim(\myspace{V})=\nTransmit$ returns a subspace
\begin{equation}\label{eq:operatorChannel}
 \myspace{U}=\mathcal{H}_{\nTransmit-\deletions}(\myspace{V})\oplus \myspace{E},
\end{equation}
where $\mathcal{H}_{\nTransmit-\deletions}(\myspace{V})$ is a $(\nTransmit-\deletions)$-dimensional subspace of $\myspace{V}$, and $\myspace{E}$ denotes an error space of dimension~$\insertions$ with $\myspace{V}\cap\myspace{E}=\{\vec{0}\}$.
We call $\insertions$ the number of \emph{insertions} and $\deletions$ the number of \emph{deletions}.
Hence, the received space $\myspace{U}$ has dimension 
\begin{equation}
 \nReceive:=\dim(\myspace{U})=\nTransmit-\deletions+\insertions.
\end{equation}
We say that a subspace $\myspace{V}$ is \emph{$(\insertions,\deletions)$-reachable} from a subspace $\myspace{U}$ if there exists a realization of the operator channel~\eqref{eq:operatorChannel} with $\insertions$ insertions and $\deletions$ deletions that transforms the input $\myspace{V}$ to the output $\myspace{U}$.
If a space $\myspace{V}$ is $(\insertions,\deletions)$-reachable from a space $\myspace{U}$, then we have that $\Subspacedist{\myspace{U},\myspace{V}}=\insertions+\deletions$.
See Section~\ref{ssec:intro_codes} for applications of the codes and the metric.

Let $\LCIGab[\ell,\vec{\alpha};\nTransmit,\vec{k}]$ be a lifted interleaved Gabidulin code and $\myspace{U} \subseteq \Fq^{m(\ell+1)}$ of dimension $\dim(\myspace{U}) = \nReceive$ be a received subspace, given in form of a basis $\U \in \Fqm^{\nReceive \times (\ell+1)}$ with $\myspace{U} = \spannedBy{\U}$.
The interpolation step of the Bartz--Wachter-Zeh decoder asks for a solution $\Q^{(1)},\ldots,\Q^{(\intOrder)}$ to Problem~\ref{prob:general_interpolation_problem} with input $\ell$, $n = \nReceive$, the basis $\U \in \Fqm^{\nReceive \times (\ell+1)}$,
\begin{align}
\degConstraint &= \left\lceil\frac{\nReceive+\sum_{i=1}^\intOrder k^{(i)}-\intOrder+1}{\intOrder+1}\right\rceil, &&\text{and} \notag \\
\w &= [0,k^{(1)}-1,\dots,k^{(\intOrder)}-1] \in \ZZ_{\geq 0}^{\ell+1}. \label{eq:subspace_interpolation_input}
\end{align}
Due to the choice of $D$, this problem instance always has a solution (i.e., $\Qspace \neq \{0\}$), cf.~\cite{BartzWachterZeh_ISubAMC}.
The root-finding step consists of solving Problem~\ref{prob:general_root-finding_problem} with input $\ell, n= \nTransmit, \vec{k}$, as well as the $\Q^{(1)},\ldots,\Q^{(\intOrder)}$ computed above.
Then, the space $\Module$ contains all message polynomial vectors $\vec{f}\in\SkewPolys^{\intOrder}$ corresponding to codewords $\myspace{V}\in\LCIGab[\ell,\vec{\alpha};\nTransmit,\vec{k}]$ that are $(\insertions,\deletions)$-reachable from the received space $\myspace{U}$ with $\dim(\myspace{U})=\nReceive=\nTransmit-\deletions+\insertions$ for all $\insertions$ and $\deletions$ satisfying\footnote{Due to $\Subspacedist{\myspace{U},\myspace{V}}\leq\insertions+\intOrder\deletions$, all $\f$ of codewords with $\Subspacedist{\myspace{U},\myspace{V}}<\intOrder(\nTransmit-\bar{k}+1)$ are in $\Module$, but this is a weaker condition than \eqref{eq:decConditionLIGab}.}
\begin{equation}\label{eq:decConditionLIGab}
	\insertions+\intOrder\deletions<\intOrder\left(\nTransmit-\bar{k}+1\right).
\end{equation}
This gives a list decoder with list size at most $|\Module|$.

Similar to interleaved Gabidulin codes there exists an upper bound on $|\Module|$ (which is exponential in the code parameters) and a bound on the expected size\footnote{%
  As in the Wachter-Zeh--Zeh decoder, drawing a received word uniformly at random usually does not correspond to choosing a codeword and a low-weight error uniformly at random, and hence this result is not directly applicable to most channels considered in the literature.%
} of $|\Module|$ (which is close to $1$ for many parameters) for a received word $\R$ that is drawn uniformly at random from the set of $\nReceive$-dimensional subspaces of $\Fq^{m(\ell+1)}$, see~\cite{BartzWachterZeh_ISubAMC, BartzDissertation}.
The algorithm can also be interpreted as a probabilistic unique decoder by declaring a decoding failure if $|\Module|>1$, cf.~\cite{BartzWachterZeh_ISubAMC}.

Using the new algorithms to solve Problems~\ref{prob:general_interpolation_problem} and \ref{prob:general_root-finding_problem} in the next subsections, we can reduce the complexity of the decoder from $O(\ell^2 \max\{\nReceive,\nTransmit\}^2)$ \cite{BartzWachterZeh_ISubAMC} to the following expression.

\begin{theorem}\label{thm:subspace_complexity_summary}
Decoding a received subspace of dimension $\nReceive$ in a lifted interleaved Gabidulin code $\LCIGab[\ell,\vec{\alpha};\nTransmit,\vec{k}]$ using the decoder in \cite{BartzWachterZeh_ISubAMC}, where
\begin{itemize}
\item the interpolation step is implemented using Algorithm~\ref{alg:fast_interpolation} (Section~\ref{ssec:interpolation_speed-up}) with input $\ell$, $\nReceive$, $D$, $\w$ as in \eqref{eq:subspace_interpolation_input}, and a basis $\U$ of the received space and
\item the root-finding step is implemented using Algorithm~\ref{alg:fast_root_finding} (Section~\ref{ssec:root_finding_speed-up}) with input $\ell$, $\nReceive$, $\k$, and the output $\Q^{(1)},\dots,\Q^{(\IntParam)}$ of the interpolation step,
\end{itemize}
has complexity $\softO\!\left(\intOrder^\omega\OMul{\max\{\nTransmit,\nReceive\}}\right)$ operations in the base field of the cost bound $\OMul{n}$, plus $O\!\left(\intOrder m \nReceive^{\omega-1} \right)$ operations in $\Fq$.
		
\end{theorem}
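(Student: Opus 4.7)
The plan is to mirror the proof of Theorem~\ref{thm:rank_complexity_summary}: correctness will follow from combining the correctness statements of Algorithm~\ref{alg:fast_interpolation} (Theorem~\ref{thm:fast_interpolation_correctness_complexity}), Algorithm~\ref{alg:fast_root_finding} (Theorem~\ref{thm:fast_root_finding_correctness_complexity}), and the Bartz--Wachter-Zeh decoding procedure analyzed in \cite{BartzWachterZeh_ISubAMC}. Once correctness is settled, the task reduces to carefully tracking the effective degree parameter across the two steps.

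For the interpolation step, I would invoke Theorem~\ref{thm:fast_interpolation_correctness_complexity} with $n=\nReceive$, shift $\w = [0,k^{(1)}-1,\dots,k^{(\intOrder)}-1]$, and the degree constraint $D$ of \eqref{eq:subspace_interpolation_input}. Since $k^{(i)}\leq \nTransmit$ for all $i$, a direct calculation gives $D \in O(\max\{\nTransmit,\nReceive\})$, which yields a cost of $\softO(\intOrder^\omega \OMul{\max\{\nTransmit,\nReceive\}})$ in the base field of $\OMul{n}$. Unlike in the interleaved Gabidulin setting of Theorem~\ref{thm:rank_complexity_summary}, the first column of $\U$ here comes from an arbitrary basis of the received subspace and need not be $\Fq$-linearly independent; this is precisely the regime in which the conditional $O(\intOrder m \nReceive^{\omega-1})$ term over $\Fq$ stated in Theorem~\ref{thm:fast_interpolation_correctness_complexity} is incurred.

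For the root-finding step, I would observe that the outputs $\Q^{(1)},\dots,\Q^{(\IntParam)}$ of the interpolation step are left $\SkewPolys$-linearly independent by construction and satisfy $\rdeg_\w \Q^{(i)} < D$; since the entries of $\w$ are at most $\nTransmit-1$, this forces $\deg \Q^{(i)} \in O(\max\{\nTransmit,\nReceive\})$, and Algorithm~\ref{alg:fast_root_finding} may be applied with this effective degree bound in place of the nominally declared input $\nReceive$. Theorem~\ref{thm:fast_root_finding_correctness_complexity} then bounds the cost of the step by $\softO(\intOrder^\omega \OMul{\max\{\nTransmit,\nReceive\}})$ in the base field of $\OMul{n}$. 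Summing the two contributions yields the claimed total cost.

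The only real obstacle in writing this out is bookkeeping: one must verify that the effective degree parameter passed to Algorithm~\ref{alg:fast_root_finding} is indeed $O(\max\{\nTransmit,\nReceive\})$ rather than just $\nReceive$, and one must argue that the additional $\Fq$-operation term appears only in the interpolation step, since the root-finding input has no analogous $\Fq$-structure whose potential absence would trigger an additional conditional cost.
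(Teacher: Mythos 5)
Your proposal is correct and follows essentially the same route as the paper's proof: correctness is delegated to Theorems~\ref{thm:fast_interpolation_correctness_complexity} and \ref{thm:fast_root_finding_correctness_complexity} together with \cite{BartzWachterZeh_ISubAMC}, and the complexity follows by expressing $n=\nReceive$, $D\in O(\max\{n,\max_i k^{(i)}\})$ and $\max_i k^{(i)}\leq\nTransmit$ in terms of $\max\{\nTransmit,\nReceive\}$, with the extra $O(\intOrder m\nReceive^{\omega-1})$ term over $\Fq$ attributed to the possibly dependent first column of $\U$. Your bookkeeping of the effective degree of the $\Q^{(i)}$ entering the root-finding step is, if anything, slightly more careful than the paper's remark that $\deg\Q^{(i)}\leq\nReceive$, but it leads to the same bound.
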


\begin{proof}
Correctness follows from the correctness of Algorithm~\ref{alg:fast_interpolation} (Theorem~\ref{thm:fast_interpolation_correctness_complexity}) and Algorithm~\ref{alg:fast_root_finding} (Theorem~\ref{thm:fast_root_finding_correctness_complexity}) and the results in \cite{BartzWachterZeh_ISubAMC} (see also the brief summary above).
Note that the vectors $\Q^{(1)},\dots,\Q^{(\IntParam)}$ are a valid input to Algorithm~\ref{alg:fast_root_finding} since $\deg \Q^{(i)} \leq \nReceive$ by Problem~\ref{prob:general_interpolation_problem}.

The complexity is $\softO\!\left( \intOrder^\omega \OMul{D+n}\right)$ operations in the base field of the cost bound $\OMul{n}$ plus $O(\intOrder m n^{\omega-1})$ operations in $\Fq$ for the interpolation step and $\softO\!\left( \ell^\omega \OMul{n + \max_{i}k^{(i)}} \right)$ operations in the base field of the cost bound $\OMul{n}$ for the root-finding step by Theorems~\ref{thm:fast_interpolation_correctness_complexity} and \ref{thm:fast_root_finding_correctness_complexity}, respectively.
The input variables $n,D,\k$ of the two computational problems are connected to the code and channel parameters $\nTransmit,\nReceive$ as follows.
We have $n = \nReceive$, $D \in O\!\left(\max\!\left\{n,\max_{i}k^{(i)}\right\}\right)$, and $\max_i\{k^{(i)}\} \leq \nTransmit$, which implies the dependency on $\max\{\nTransmit,\nReceive\}$.
\end{proof}

\subsection{A New Algorithm for the Interpolation Step}\label{ssec:interpolation_speed-up}

We relate the interpolation step (Problem~\ref{prob:general_interpolation_problem}) to finding a left approximant bases of a matrix $\MABin$ that is constructed from (operator) interpolation and annihilator polynomials depending on the interpolation points (i.e., the input matrix $\U$ of the problem).

To construct the matrix $\MABin$, we first need to transform the interpolation points as in the following lemma.
Note that we apply $\Fq$-linear elementary row operations to $\U$, which due to the $\Fq$-linearity of skew polynomials does not change the interpolation condition, \eqref{eq:interpolation_problem_eval}, of Problem~\ref{prob:general_interpolation_problem}.

\begin{lemma}\label{lem:intProblemSubspace_Zi_matrices}
Consider an instance of Problem~\ref{prob:general_interpolation_problem}.
Using $\Fq$-linear elementary row operations, we can transform $\U$ into a matrix of the form
\begin{align}
\U' = \left[
\def\arraystretch{1.7}
\begin{array}{ccccc}
\multicolumn{1}{c|}{\0_{\nu_1\times a_1}} & & \U^{(1)} & & \\\hline
\multicolumn{2}{c|}{\0_{\nu_2\times a_2}} & & \U^{(2)} &   \\\hline
\multicolumn{3}{c|}{\0_{\nu_3\times a_3}} & \multicolumn{2}{|c}{\U^{(3)}} \\\hline
\multicolumn{5}{c}{\vdots} \\\hline
\multicolumn{4}{c|}{\0_{\nu_\varrho\times a_\varrho}} & \multicolumn{1}{|c}{\U^{(\varrho)}} 
\end{array}
\right], \label{eq:intProblemSubspace_Zi_matrices} %
\end{align}
where $1 \leq \varrho \leq \intOrder+1$ and we have $\U^{(i)} \in \Fqm^{\nu_i\times (\intOrder+1-a_i) }$ for $i=1,\dots,\varrho$, with
\begin{itemize}
\item $0\leq a_1 < a_2 < \cdots < a_\varrho < \intOrder+1$,
\item $1 \leq \nu_i \leq n$ such that $\sum_{i=1}^{\varrho} \nu_i = n$, and
\item the entries of the first column of $\U^{(i)}$ are linearly independent over $\Fq$ for each $i$.
\end{itemize}
The matrix $\U'$ can be obtained with $O\big(\intOrder m n^{\omega-1}\big)$ operations in $\Fq$.
\end{lemma}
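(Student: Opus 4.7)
The plan is to reduce the problem to computing a reduced row echelon form (RREF) over $\Fq$ of the $\Fq$-expansion $\widetilde{\U} \in \Fq^{n \times m(\ell+1)}$ of $\U$, obtained by replacing each entry of $\U$ by its $1 \times m$ coordinate vector in a fixed $\Fq$-basis of $\Fqm$. The crucial observation is that this expansion is an $\Fq$-linear bijection, so $\Fq$-linear elementary row operations on $\U$ correspond exactly to $\Fq$-linear elementary row operations on $\widetilde{\U}$; moreover, the $\Fq$-linear independence of the rows of $\U$ is equivalent to that of $\widetilde{\U}$, so $\widetilde{\U}$ has full row rank $n$.

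I would then compute an RREF $\widetilde{\U'}$ of $\widetilde{\U}$ and let $\U' \in \Fqm^{n \times (\ell+1)}$ be the matrix whose $\Fq$-expansion is $\widetilde{\U'}$; by the above correspondence $\U'$ is exactly the matrix obtained from $\U$ by the same sequence of row operations. Denoting the pivot positions of $\widetilde{\U'}$ by $p_1 < \cdots < p_n$ and writing $p_r = (j_r - 1)m + k_r$ with $j_r \in \{1,\ldots,\ell+1\}$ and $k_r \in \{1,\ldots,m\}$, the index $j_r$ records the column of $\U'$ containing the first nonzero entry of row $r$ and $k_r$ records which $\Fq$-coordinate of that entry is first non-zero. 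Partitioning $\{1,\ldots,n\}$ into maximal runs of equal $j_r$, I would define $\varrho$, $\nu_i$, and $a_i := j_{r_i} - 1$ (where $r_i$ is the first index of the $i$-th run) accordingly; $\sum_i \nu_i = n$ is immediate, and $p_{r_i} < p_{r_{i+1}}$ together with maximality of runs forces $a_1 < a_2 < \cdots < a_\varrho$.

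The main technical step is to verify the two structural properties of $\U'$. The block-triangular shape is the easier one: the first row of the $i$-th block has zeros in positions $1,\ldots,p_{r_i}-1$ of $\widetilde{\U'}$, which already exhausts all $a_i m$ positions encoding the first $a_i$ columns of $\U'$, and subsequent rows of the block have strictly larger pivots and so vanish at least as far to the left. The key point is the ``$\Fq$-linearly independent first column'' property: the $\nu_i \times m$ sub-block of $\widetilde{\U'}$ corresponding to column $a_i+1$ of $\U'$ contains at sub-columns $k_{r_i}, \ldots, k_{r_i+\nu_i-1}$ a $\nu_i \times \nu_i$ identity submatrix (by the elimination property of RREF), so any nontrivial $\Fq$-linear combination of the $\nu_i$ entries in that column would expand to a nonzero $\Fq$-vector, proving their $\Fq$-linear independence.

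For the complexity, an RREF of an $n \times m(\ell+1)$ matrix of rank $n$ over $\Fq$ can be computed in $O(n^{\omega-1} m (\ell+1)) = O(\ell m n^{\omega-1})$ operations in $\Fq$ via rank-profile-based fast linear algebra, e.g.~\cite[Thm.~2.10]{storjohann2000algorithms} which is already cited elsewhere in the paper. I expect the main subtlety to lie not in the algorithmic cost but in the bookkeeping: cleanly translating between $\U'$ over $\Fqm$ and its $\Fq$-expansion $\widetilde{\U'}$ so that each structural claim about $\U'$ becomes an immediate consequence of a transparent property of the RREF.
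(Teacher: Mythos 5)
Your proposal is correct and follows essentially the same route as the paper: expand $\U$ entry-wise into an $n\times m(\ell+1)$ matrix over $\Fq$, bring it to (reduced) row echelon form, map back to $\Fqm$, and read off the block structure from the pivot positions, with the complexity coming from the same rank-profile result of Storjohann. Your treatment is merely more explicit about the pivot bookkeeping and the identity-submatrix argument for the $\Fq$-linear independence of the first columns, which the paper leaves implicit.
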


\begin{proof}
This can be done by expanding each entry of $\U\in\Fqm^{n\times (\intOrder+1)}$ into a row vector over $\Fq$ of length $m$, by transforming this $n\times m(\intOrder+1)$ matrix into row echelon form, and then mapping the resulting matrix back to an $n \times (\intOrder+1)$ matrix over $\Fqm$.
The structure of $\U'$ then follows immediately from the row echelon form of the expanded matrix (e.g., the width $\nu_i$ of the matrix $\U^{(i)}$ will be the number of pivots in the columns $a_im+1,\dots,(a_i+1)m$ of the expanded matrix).
There will be no zero rows since the rows of $\U$ are $\Fq$-linearly independent.
The complexity follows by \cite[Theorem~2.10]{storjohann2000algorithms}.
\end{proof}

The following lemmas connect Problem~\ref{prob:general_interpolation_problem} to a problem of computing an approximant basis.
Since the first columns of all the matrices $\U^{(i)}$ are $\Fq$-linearly independent, the polynomials $G^{(i)}$ and $R^{(i)}_j$ in the following lemma are well-defined.

\begin{lemma}\label{lem:interpolation_problem_kernel_basis}
Let $\U^{(1)},\dots,\U^{(\varrho)}$ be defined as in Lemma~\ref{lem:intProblemSubspace_Zi_matrices}.
Then, $\Q = [Q_0,\dots,Q_\intOrder] \in \SkewPolys^{\intOrder+1}$ satisfies Condition~\eqref{eq:interpolation_problem_eval} in Problem~\ref{prob:general_interpolation_problem} if and only if there is a vector $\vec{\chi} \in \SkewPolys^{\varrho}$ with
\begin{align}
\begin{bmatrix}
\Q & \vec{\chi}
\end{bmatrix}
\cdot
\MABin
= \0, \label{eq:intProblemSubspace_approximant_bases_reformulation}
\end{align}
where $\MABin \in \SkewPolys^{(\intOrder+1+\varrho) \times \varrho}$ is a matrix whose $i$-th column, for $i=1,\dots,\varrho$, is of the form
\begin{align*}
\left[
\begin{array}{c}
\0_{a_i \times 1} \\
\hline
1 \\
R^{(i)}_{a_i+2} \\
\vdots \\
R^{(i)}_{\intOrder+1} \\
\hline
\0_{(i-1) \times 1} \\
\hline
G^{(i)} \\
\hline
\0_{(\varrho-i) \times 1} \\
\end{array}
\right]
\end{align*}
where, for all $i=1,\dots,\varrho$ and $j=a_i+2,\dots,\intOrder+1$,
\begin{align*}
G^{(i)} &:= \MSPop_{\left\langle U_{1,1}^{(i)},\dots,U_{\nu_i,1}^{(i)}\right\rangle} \\
R^{(i)}_j &:= \IPop{\left\{\left(U_{\kappa,1}^{(i)}, U_{\kappa,j-a_i}^{(i)}\right)\right\}_{\kappa=1}^{\nu_i}}.
\end{align*}
\end{lemma}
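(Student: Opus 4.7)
The plan is to reduce the interpolation condition \eqref{eq:interpolation_problem_eval} to a system of $\varrho$ skew-polynomial identities—one per block of $\U'$—and then to verify that these identities are captured exactly by the vector-matrix product in \eqref{eq:intProblemSubspace_approximant_bases_reformulation}. Because every operator evaluation $f(\cdot)$ is $\Fq$-linear, the condition \eqref{eq:interpolation_problem_eval} is invariant under the $\Fq$-linear row operations of Lemma~\ref{lem:intProblemSubspace_Zi_matrices}, so I may work with $\U'$ throughout. For each block $i$, the leading $a_i$ columns vanish on all $\nu_i$ rows, so the condition for row $\kappa$ of block $i$ becomes $Q_{a_i}(U^{(i)}_{\kappa,1}) + \sum_{j=a_i+2}^{\intOrder+1} Q_{j-1}(U^{(i)}_{\kappa,j-a_i}) = 0$.

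By the defining property of the operator interpolation polynomial, $R^{(i)}_j(U^{(i)}_{\kappa,1}) = U^{(i)}_{\kappa,j-a_i}$ for every $\kappa$, and the product rule $(fg)(\alpha) = f(g(\alpha))$ from Section~\ref{ssec:evaluation_maps} collapses the above into $L_i(U^{(i)}_{\kappa,1}) = 0$, where $L_i := Q_{a_i} + \sum_{j=a_i+2}^{\intOrder+1} Q_{j-1}\,R^{(i)}_j \in \SkewPolys$. Since $G^{(i)}$ is the monic skew polynomial of degree $\nu_i$ whose operator root space is exactly $\langle U^{(i)}_{1,1},\dots,U^{(i)}_{\nu_i,1}\rangle$, a right-Euclidean division in $\SkewPolys$ writes $L_i = \chi'_i G^{(i)} + r$ with $\deg r < \nu_i$; the product rule then gives $L_i(U^{(i)}_{\kappa,1}) = r(U^{(i)}_{\kappa,1})$, so $L_i$ vanishes on all $U^{(i)}_{\kappa,1}$ iff $r$ vanishes on a $\nu_i$-dimensional $\Fq$-subspace. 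By the degree-dimension bijection $\SkewPolys_{<\nu_i}\leftrightarrow \Fqm^{\nu_i}$ recalled in Section~\ref{ssec:evaluation_maps}, this forces $r=0$. Thus the per-row interpolation constraints of block $i$ are equivalent to the existence of $\chi'_i \in \SkewPolys$ with $L_i = \chi'_i G^{(i)}$, and setting $\chi_i := -\chi'_i$ this is equivalent to $L_i + \chi_i G^{(i)} = 0$.

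To close the argument, I would read off the $i$-th entry of $[\Q\;\vec\chi]\,\MABin$ directly from the description of the $i$-th column of $\MABin$: the only nonzero contributions come from the $1$ in row $a_i+1$, the entries $R^{(i)}_{a_i+2},\dots,R^{(i)}_{\intOrder+1}$ in rows $a_i+2,\dots,\intOrder+1$, and the entry $G^{(i)}$ in row $\intOrder+1+i$, yielding exactly $Q_{a_i} + \sum_{j=a_i+2}^{\intOrder+1} Q_{j-1}R^{(i)}_j + \chi_i G^{(i)} = L_i + \chi_i G^{(i)}$. Hence the conjunction of the $\varrho$ per-block divisibility conditions is equivalent to \eqref{eq:intProblemSubspace_approximant_bases_reformulation}, proving both directions. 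The one subtle point is the bookkeeping of non-commutative multiplication: the product rule for operator evaluation forces every $R^{(i)}_j$ and $G^{(i)}$ to appear on the right of its $Q_{j-1}$ or $\chi_i$ companion, which in turn dictates the precise shape of $\MABin$ and makes the ``row vector times matrix'' convention of \eqref{eq:intProblemSubspace_approximant_bases_reformulation} the natural one (as opposed to a column version, which would require the opposite multiplication order and hence a different annihilator/interpolation pair).
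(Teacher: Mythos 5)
Your proposal is correct and follows essentially the same route as the paper's proof: reduce to per-block conditions, substitute the interpolation polynomials and apply the operator product rule to obtain $L_i(U^{(i)}_{\kappa,1})=0$, convert this to right-divisibility by the annihilator $G^{(i)}$, and identify the resulting identities $L_i+\chi_iG^{(i)}=0$ with the entries of $[\Q\;\vec\chi]\MABin$. The only difference is that you spell out the divisibility step via explicit right-Euclidean division and the degree--dimension bijection, which the paper leaves implicit.
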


\begin{proof}
A vector $\Q = [Q_0,\dots,Q_{\intOrder}] \in \SkewPolys^{\intOrder+1}$ satisfies Condition~\eqref{eq:interpolation_problem_eval} in Problem~\ref{prob:general_interpolation_problem} on all rows of $\U$ if and only if each sub-block $[Q_{a_i},\dots,Q_{\intOrder}]$ satisfies \eqref{eq:interpolation_problem_eval} on the rows of $U^{(i)}$.
Using $G^{(i)}$ and $R^{(i)}_j$ as above, we can rewrite this condition, restricted to $\U^{(i)}$, as
\begin{align}
&\sum_{j=a_i+1}^{\ell+1} Q_{j-1}\!\left(U_{\kappa,j-a_i}^{(i)}\right) = 0 \quad \forall \kappa=1,\dots,\nu_i \label{eq:i-th_interpolation_condition} \\
&\Leftrightarrow \: Q_{a_i}\!\left(U_{\kappa,1}^{(i)}\right) + \sum_{j=a_i+2}^{\ell+1} Q_{j-1}\!\left(R_j^{(i)}\!\left(U_{\kappa,1}^{(i)}\right)\right) = 0 \quad \forall \kappa \notag \\
&\Leftrightarrow \: \left(Q_{a_i} + \sum_{j=a_i+2}^{\ell+1} Q_{j-1} R_j^{(i)} \right)\!\left(U_{\kappa,1}^{(i)}\right) = 0 \quad \forall \kappa \notag \\
&\Leftrightarrow \: Q_{a_i} + \sum_{j=a_i+2}^{\ell+1} Q_{j-1} R_j^{(i)} \equiv 0 \quad \modr \underbrace{\MSPop_{\left\langle U_{1,1}^{(i)},\dots,U_{\nu_i,1}^{(i)}\right\rangle}}_{= \, G^{(i)}} \notag \\
&\Leftrightarrow \: \exists\, \chi_i \in \SkewPolys \, : \notag\\
&\quad \quad \quad \, Q_{a_i}\, + \sum_{j=a_i+2}^{\ell+1} Q_{j-1} R_j^{(i)} + \chi_i G^{(i)} = 0 \notag \\
&\Leftrightarrow \: \exists\, \chi_i \in \SkewPolys \, : \notag \\
&\quad \quad \quad \, \begin{bmatrix}
Q_{a_i} & \cdots & Q_{\intOrder} & \chi_i
\end{bmatrix} \cdot
\begin{bmatrix}
1 \\
R^{(i)}_{a_i+2} \\
\vdots \\
R^{(i)}_{\intOrder+1} \\
G^{(i)}
\end{bmatrix} = 0. \label{eq:left_component_kernels}
\end{align}
This is equivalent to \eqref{eq:intProblemSubspace_approximant_bases_reformulation} since the $\chi_i$'s are independent of each other, but the $Q_j$ are the same for each $i$.
\end{proof}

\begin{example}
We give two examples for the matrix $\MABin$ as in Lemma~\ref{lem:interpolation_problem_kernel_basis}.

For $\nu_1=n$ and $a_1=0$ ($\varrho=1$), the first column of the matrix $\U$ already consists of linearly independent elements. This is an important special case since it is always fulfilled for the Wachter-Zeh--Zeh decoder (interleaved Gabidulin codes, see Section~\ref{ssec:interleaved_gabidulin_decoder}). In this case, $\MABin$ of Lemma~\ref{lem:interpolation_problem_kernel_basis} has the form
\begin{align*}
\MABin = \begin{bmatrix}
1 \\ R_2^{(1)} \\ R_3^{(1)} \\ \vdots \\ R_{\intOrder+1}^{(1)} \\ G^{(1)}
\end{bmatrix}
\in \SkewPolys^{(\intOrder+2)\times 1}.
\end{align*}

For $a_i = i-1$, $i=1,\dots,\varrho$, the matrix $\MABin$ has the form
\begin{align}
\MABin = \begin{bmatrix} 
1 &   &   &   &   \\
R^{(1)}_2 & 1 &   &   &   \\
R^{(1)}_3 & R^{(2)}_3 & 1 &   &   \\
\vdots & \vdots & \ddots & \ddots &   \\
R^{(1)}_{\varrho} & R^{(2)}_{\varrho} & R^{(3)}_{\varrho} & \cdots & 1 \\
R^{(1)}_{\varrho+1} & R^{(2)}_{\varrho+1} & R^{(3)}_{\varrho+1} & \cdots & R^{(\varrho)}_{\varrho+1} \\
\vdots & \vdots & \ddots & \ddots & \vdots \\
R^{(1)}_{\intOrder+1} & R^{(2)}_{\intOrder+1} & R^{(3)}_{\intOrder+1} & \cdots & R^{(\varrho)}_{\intOrder+1} \\
G^{(1)} & & & & \\
 & G^{(2)} & & & \\
 & & G^{(3)} & & \\
 & & & \ddots & \\
 & & & & G^{(\varrho)} \\
\end{bmatrix}. \label{eq:J_example_full}
\end{align}
In general, $\MABin$ has a form as in \eqref{eq:J_example_full}, where we delete the $j$-th column and $(\intOrder+1+j)$-th row (and rename the superscript indices accordingly) if there is no $i$ with $a_i=j-1$.
\end{example}

\begin{remark}
All vectors $\Q = [Q_0,\dots,Q_{\intOrder}] \in \SkewPolys^{\intOrder+1}$ satisfying Condition~\eqref{eq:interpolation_problem_eval} form a left $\SkewPolys$-module (see also \cite{puchinger2017row}).
Lemma~\ref{lem:interpolation_problem_kernel_basis} states that this module is the left kernel of $\MABin$, restricted to the first $\ell+1$ coordinates.
Furthermore, it is the intersection of the left kernels of the columns of the matrix $\MABin$, which for $i=1,\dots,\varrho$ are the modules consisting of all vectors that, when restricted to the first $\ell+1$ coordinates, satisfy \eqref{eq:interpolation_problem_eval} with respect an alternative matrix of interpolation points of the form $\left[ \0_{\nu_i,a_i} \mid \U^{(i)} \right] \in \SkewPolys^{\nu_i \times (\ell+1)}$.
\end{remark}

\begin{lemma}\label{lem:intProblemSubspace_approximant_basis_problem}
Let $\MABin$ be defined as in Lemma~\ref{lem:interpolation_problem_kernel_basis}, $\w \in \ZZ_{\geq 0}$, $D \in \ZZ_{> 0}$.
For $w_\mathrm{min} := \min_{i=1,\dots,\intOrder+1} \{w_i\}$, set $d := D-w_\mathrm{min}+n$ and
\begin{align*}
\s := [w_1,\dots,w_{\intOrder+1},w_\mathrm{min},\dots,w_\mathrm{min}] \in \ZZ_{\geq 0}^{\intOrder+1+\varrho}.
\end{align*}
Then, for $\Q \in \SkewPolys^{\intOrder+1}$ and $\vec{\chi} \in \SkewPolys^\varrho$, we have
\begin{align}
\begin{bmatrix}
\Q & \vec{\chi}
\end{bmatrix} \MABin &= \0 \quad \text{and} \label{eq:intProblemSubspace_kernel_condition} \\
\rdeg_\w \Q &< D \label{eq:intProblemSubspace_degree_restriction}
\end{align}
if and only if
\begin{align}
\begin{bmatrix}
\Q & \vec{\chi}
\end{bmatrix} \MABin \equiv \0 \; &\modr \; x^d \quad \text{and} \label{eq:intProblemSubspace_congruence_relation} \\
\rdeg_\s \begin{bmatrix}
\Q & \vec{\chi}
\end{bmatrix} &< D. \label{eq:intProblemSubspace_degree_restriction_entire_vector}
\end{align}
\end{lemma}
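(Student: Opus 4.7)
The plan is to handle the two directions separately after first stripping away the parts that are immediate. Since $\s$ agrees with $\w$ on its first $\ell+1$ coordinates, we have $\rdeg_\w \Q \leq \rdeg_\s\!\begin{bmatrix}\Q & \vec\chi\end{bmatrix}$, so \eqref{eq:intProblemSubspace_degree_restriction_entire_vector} implies \eqref{eq:intProblemSubspace_degree_restriction} trivially; likewise, the exact equality \eqref{eq:intProblemSubspace_kernel_condition} trivially implies the right-modular congruence \eqref{eq:intProblemSubspace_congruence_relation}. What remains is therefore (i) proving that \eqref{eq:intProblemSubspace_kernel_condition}--\eqref{eq:intProblemSubspace_degree_restriction} force $\deg \chi_i + w_\mathrm{min} < D$ for each $i$, and (ii) upgrading the congruence \eqref{eq:intProblemSubspace_congruence_relation} to equality under the hypothesis \eqref{eq:intProblemSubspace_degree_restriction_entire_vector}.

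Both (i) and (ii) rest on the basic column-degree data of $\MABin$ established in Lemma~\ref{lem:interpolation_problem_kernel_basis}: the interpolation polynomial $R_j^{(i)}$ satisfies $\deg R_j^{(i)} < \nu_i$, and the annihilator polynomial $G^{(i)}$ is monic of degree exactly $\nu_i$. For (i), I would read the $i$-th column relation as $\chi_i G^{(i)} = -Q_{a_i} - \sum_{j=a_i+2}^{\ell+1} Q_{j-1} R_j^{(i)}$ and bound the right-hand side by $D - w_\mathrm{min} + \nu_i$ using \eqref{eq:intProblemSubspace_degree_restriction} together with $\deg R_j^{(i)} < \nu_i$ and $w_j \geq w_\mathrm{min}$. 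Dividing out $\deg G^{(i)} = \nu_i$, which is legitimate because $\SkewPolys$ is an integral domain, yields $\deg \chi_i < D - w_\mathrm{min}$.

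For (ii), let $P_i$ denote the $i$-th entry of the product $\begin{bmatrix}\Q & \vec\chi\end{bmatrix}\MABin$. Using \eqref{eq:intProblemSubspace_degree_restriction_entire_vector} to bound $\deg Q_{j-1} < D - w_j$ and $\deg \chi_i < D - w_\mathrm{min}$, and combining with the column-degree data above, one checks that each summand of $P_i$ has degree strictly less than $D - w_\mathrm{min} + n = d$; hence $\deg P_i < d$. But \eqref{eq:intProblemSubspace_congruence_relation} says $P_i \remr x^d = 0$, and any polynomial of degree less than $d$ is its own right remainder modulo $x^d$, so $P_i = 0$, and this holds for every $i$.

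The only delicate point is the bookkeeping around the shift $\s$ and the chosen order $d$: the value $w_\mathrm{min}$ placed on the $\chi$-coordinates must be loose enough to be forced by direction (i) yet tight enough that, together with the slack $n$ in $d = D - w_\mathrm{min} + n$, the column entries of the product fall strictly below degree $d$ in direction (ii). Once the two degree estimates are organized so that this matches — and no other non-trivial ingredient is used — the equivalence follows with no further obstacle.
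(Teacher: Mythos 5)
Your proposal is correct and follows essentially the same route as the paper's proof: both directions come down to the same degree bookkeeping, namely deducing $\deg\chi_i < D - w_\mathrm{min}$ from the column relation $\chi_i G^{(i)} = -Q_{a_i} - \sum_j Q_{j-1}R_j^{(i)}$ using $\deg G^{(i)} = \nu_i$ and $\deg R_j^{(i)} < \nu_i$, and then showing every entry of $\begin{bmatrix}\Q & \vec\chi\end{bmatrix}\MABin$ has degree below $d$ so that the congruence modulo $x^d$ upgrades to exact vanishing. The only (harmless) difference is organizational — you isolate the two trivial implications up front, which the paper dispatches with a word — so there is nothing further to add.
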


\begin{proof}
Let $\begin{bmatrix}
\Q & \vec{\chi}
\end{bmatrix}$ satisfy \eqref{eq:intProblemSubspace_kernel_condition} and \eqref{eq:intProblemSubspace_degree_restriction}.
Then, obviously \eqref{eq:intProblemSubspace_congruence_relation} holds.
It is left to show the degree constraint.
We have for the entries of $\vec{\chi} = [\chi_1,\dots,\chi_\varrho]$
\begin{equation*}
\deg \chi_i \leq \max_{j=i,\dots,\intOrder+1} \{\deg Q_{j-1}\}-1
\end{equation*}
since we can rewrite \eqref{eq:intProblemSubspace_kernel_condition} into
\begin{align*}
-\chi_i G^{(i)} = Q_{i-1} + \sum_{j=i+1}^{\intOrder+1} Q_{j-1} R^{(i)}_j \quad \forall \, i=1,\dots,\varrho.
\end{align*}
Due to $\deg G^{(i)} = \nu_i$ and $\deg R^{(i)}_j \leq \nu_i-1$, we get the claimed degree bound on the $\chi_i$.
Hence, we have
\begin{align*}
\rdeg_\s \begin{bmatrix}
\Q & \vec{\chi}
\end{bmatrix} = \max\big\{ \rdeg_\w \Q, \, \underbrace{w_\mathrm{min} + \textstyle\max_{i}\{ \deg \chi_i\}}_{\leq \rdeg_\w \Q} \big\} < D.
\end{align*}

For the other direction, the degree bound is obvious.
As for the equality, the $i$-th entry (for $i=1,\dots,\varrho$) of $\begin{bmatrix}
\Q & \vec{\chi}
\end{bmatrix} \MABin$ is
$Q_{i-1} + \sum_{j=i+1}^{\intOrder+1} Q_{j-1} R^{(i)}_j + \chi_i G^{(i)}$,
where
\begin{align*}
\deg Q_{i-1} &\leq D-w_i-1 < D-w_\mathrm{min} \leq d, \\
\deg \big(Q_{j-1} R^{(i)}_j\big) &\leq D-w_j+\nu_i-2 < D-w_\mathrm{min} + n = d, \\
\deg \big(\chi_i G^{(i)}\big) &\leq D-w_\mathrm{min}-1+\nu_i < D-w_\mathrm{min} + n = d,
\end{align*}
thus $\rdeg\left(\begin{bmatrix}
\Q & \vec{\chi}
\end{bmatrix} \MABin\right) < d$. Hence, we have not only $\begin{bmatrix}
\Q & \vec{\chi}
\end{bmatrix} \MABin \equiv \0 \modr x^d$, but also $\begin{bmatrix}
\Q & \vec{\chi}
\end{bmatrix} \MABin =\0$.
\end{proof}

Lemmas~\ref{lem:interpolation_problem_kernel_basis} and \ref{lem:intProblemSubspace_approximant_basis_problem} combined imply a strategy for finding a basis of all solutions of Problem~\ref{prob:general_interpolation_problem}: compute a left approximant basis of $\MABin$ (both as defined in Lemma~\ref{lem:interpolation_problem_kernel_basis}) with respect to the shift vector $\s$ and order $d$ (as defined in Lemma~\ref{lem:intProblemSubspace_approximant_basis_problem}).
This strategy is outlined in Algorithm~\ref{alg:fast_interpolation} and we give its complexity in Theorem~\ref{thm:fast_interpolation_correctness_complexity}.

\begin{algorithm}[ht]
	\caption{Fast Interpolation Algorithm}
	\label{alg:fast_interpolation}
	\SetKwInOut{Input}{Input}\SetKwInOut{Output}{Output}
	\Input{Instance of Problem~\ref{prob:general_interpolation_problem}: 
	$\ell,n,D \in \ZZ_{> 0}$, shift vector $\w \in \ZZ_{\geq 0}^{\ell+1}$, and $\mat{U} = [U_{i,j}] \in \Fqm^{n \times (\ell+1)}$ with $\Fq$-linearly independent rows.
	}
	\Output{If it exists, a solution of Problem~\ref{prob:general_interpolation_problem}. Otherwise, ``no solution''.}
	\If{elements in first column of $\U$ are $\Fq$-lin.\ ind.}{
		$\U^{(1)} \gets \U$, $\varrho \gets 1$, $\nu_1 \gets 1$, $a_1 \gets 0$
super	} \Else{
		$\U^{(i)} \in \Fqm^{\nu_i \times (\ell+1-a_i)}$ for $i=1,\dots,\varrho$ $\gets$ compute as in Lemma~\ref{lem:intProblemSubspace_Zi_matrices} \label{line:intProblemSubspace_compute_Zi} \\
	}
	\For{$i=1,\dots,\varrho$}{
		$G^{(i)} \gets \MSPop_{\langle U_{1,1}^{(i)},\dots,U_{\nu_i,1}^{(i)}\rangle}$ \\
		\For{$j=a_i+2,\dots,\intOrder+1$}{
			$R^{(i)}_j \gets \IPop{\left\{\left(U_{\kappa,1}^{(i)}, U_{\kappa,j-a_i}^{(i)}\right)\right\}_{\kappa=1}^{\nu_i}}$
		}
	}
	$\MABin \gets$ set up matrix from the $G^{(i)}$ and $R^{(i)}_j$ as in Lemma~\ref{lem:interpolation_problem_kernel_basis} \\
	$w_\mathrm{min} \gets \min_{i=1,\dots,\intOrder+1} \{w_i\}$ \\
	$d \gets D-w_\mathrm{min}+n$ \\
	$\s \gets [w_1,\dots,w_{\intOrder+1},w_\mathrm{min},\dots,w_\mathrm{min}] \in \ZZ_{\geq 0}^{\intOrder+1+\varrho}$ \\
	$\B \gets$ left \MABnameFullStandard \hfill \label{line:interpolation_minimal_approx_basis}  \myAlgoComment{Algorithm~\ref{alg:leftSkewDaCAppBasis} in Section~\ref{sec:order_bases}}
	$\{i_1,\dots,i_{\ell'}\} \gets$ indices of rows of $\B$ with $\s$-shifted row degree $<D$ \\
	\If{$\ell'>0$}{
		\For{$j=1,\dots,\ell'$}{
			$\Q^{(j)} \gets \left[B_{i_j,1}, \dots,B_{i_j,\ell+1}\right]$
		}
		\Return{$\Q^{(1)},\dots,\Q^{(\ell')}$}
	} \Else{
		\Return{``no solution''}
	}
\end{algorithm}

\begin{theorem}\label{thm:fast_interpolation_correctness_complexity}
Algorithm~\ref{alg:fast_interpolation} is correct.
For the complexity, assume $D \in \Theta(n)$.
If the first column of the input matrix $\U$ consists of $\Fq$-linearly independent elements, it can be implemented with complexity
\begin{align*}
\softO\!\left( \intOrder^\omega \OMul{n}\right)
\end{align*}
operations in the base field of the cost bound $\OMul{n}$.
Otherwise, it costs
\begin{align*}
\softO\!\left( \intOrder^\omega \OMul{n} \right)
\end{align*}
operations in the base field of the cost bound $\OMul{n}$ plus $O(\intOrder m n^{\omega-1} )$ operations in $\Fq$. %
\end{theorem}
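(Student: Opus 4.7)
The plan is to establish correctness by chaining Lemmas~\ref{lem:intProblemSubspace_Zi_matrices}, \ref{lem:interpolation_problem_kernel_basis}, and~\ref{lem:intProblemSubspace_approximant_basis_problem}, and then to analyze the cost of each step of Algorithm~\ref{alg:fast_interpolation}, confirming that the left approximant-basis computation dominates.

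For correctness, I would first note that the row operations of Lemma~\ref{lem:intProblemSubspace_Zi_matrices} are $\Fq$-linear combinations, so by the $\Fq$-linearity of operator evaluation they preserve the set of vectors $\Q$ satisfying \eqref{eq:interpolation_problem_eval}. Applying Lemma~\ref{lem:interpolation_problem_kernel_basis}, a vector $\Q \in \SkewPolys^{\ell+1}$ satisfies \eqref{eq:interpolation_problem_eval} if and only if there exists $\vec{\chi} \in \SkewPolys^{\varrho}$ with $[\Q \ \vec{\chi}] \MABin = \0$, where $\MABin$ is the matrix assembled from the $G^{(i)}$ and $R^{(i)}_j$. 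By Lemma~\ref{lem:intProblemSubspace_approximant_basis_problem}, with the choices of $d$ and $\s$ made in the algorithm, this is equivalent to $[\Q \ \vec{\chi}]$ being a left approximant of $\MABin$ of order $d$ with $\rdeg_\s [\Q \ \vec{\chi}] < D$. The rows of the computed left $\s$-ordered weak Popov approximant basis $\B$ by Definition~\ref{def:minimal_approximant_basis} are a basis of all left approximants of order $d$, and the rows with $\s$-shifted row degree strictly less than $D$ generate (as a left $\Fqm$- or even $\SkewPolys$-module) precisely the subspace of such approximants satisfying the degree bound; this is the standard minimality argument, using that if any left $\SkewPolys$-combination of the rows of $\B$ has shifted row degree $<D$, then by the predictable degree property (Lemma~\ref{lem:minimality_column_wpf}, row case) only rows with shifted degree $<D$ contribute. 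Restricting to the first $\ell+1$ coordinates gives the desired $\Q^{(j)}$.

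For complexity, I would handle the three cost contributions in turn. First, the row-reduction step of Lemma~\ref{lem:intProblemSubspace_Zi_matrices} costs $O(\ell m n^{\omega-1})$ operations in $\Fq$; this term is absent in the special case where the first column of $\U$ is already $\Fq$-linearly independent, since then $\varrho=1$ and no transformation is performed. Second, computing the annihilator polynomials $G^{(i)}$ and the interpolation polynomials $R^{(i)}_j$ can be done via the fast algorithms recalled in Section~\ref{ssec:cost_skew_poly_operations}, each costing $\softO(\OMul{\nu_i})$ operations in the base field of $\OMul{n}$; summed over $i$ (using $\sum \nu_i = n$ and the quasi-linearity of $\OMul{\cdot}$) and over the $\ell$ right-hand-side columns, this totals $\softO(\ell\,\OMul{n})$.

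Third, and the main term, the call to \textsf{LeftSkewPMBasis} on the $(\ell+1+\varrho)\times \varrho$ matrix $\MABin$ of degree at most $n$ and order $d \in \Theta(n)$ costs, by Theorem~\ref{thm:correctness_DaCApp},
\[
\softO\big((\ell+1+\varrho)^{\omega-1}\max\{\ell+1+\varrho,\varrho\}\OMul{d}\big) \subseteq \softO(\ell^\omega \OMul{n})
\]
operations in the base field of the cost bound, using $\varrho \le \ell+1$. Adding up the three contributions yields the two stated bounds depending on whether the preliminary row reduction is needed. The main obstacle is being careful in the correctness argument that $\B$, restricted to shifted-degree-bounded rows, really does generate all low-degree left approximants rather than just a subset; this is where the predictable-degree property (Lemma~\ref{lem:minimality_column_wpf}) combined with the ordered weak Popov form is essential, and the remaining steps are routine bookkeeping.
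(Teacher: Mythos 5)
Your proposal follows essentially the same route as the paper: correctness via Lemmas~\ref{lem:intProblemSubspace_Zi_matrices}, \ref{lem:interpolation_problem_kernel_basis} and \ref{lem:intProblemSubspace_approximant_basis_problem} plus the minimality of the $\s$-ordered weak Popov basis, and the same three-way cost decomposition with the left approximant-basis call dominating at $\softO(\ell^\omega\OMul{n})$. The complexity analysis is correct, including the observation that the $O(\ell m n^{\omega-1})$ term over $\Fq$ only arises when the first column of $\U$ must actually be row-reduced.

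There is one step you gloss over in the correctness argument. Problem~\ref{prob:general_interpolation_problem} requires the output $\Q^{(1)},\dots,\Q^{(\ell')}$ to be left $\SkewPolys$-linearly independent, but these are obtained by \emph{projecting} the selected rows $[\Q\ \vec{\chi}]$ of $\B$ onto their first $\ell+1$ coordinates, and a projection of independent vectors need not remain independent (nor even non-zero). The paper closes this by noting that each selected row lies in the left kernel of $\MABin$, which forces $\deg\vec{\chi}$ strictly below $\deg\Q$ (since $\deg R^{(i)}_j < \deg G^{(i)}$), so with the chosen shift $\s$ the $\s$-pivot of every selected row sits among the first $\ell+1$ positions; hence the projections $\Q^{(j)}$ have distinct $\w$-pivots and are independent. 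Your "restricting to the first $\ell+1$ coordinates gives the desired $\Q^{(j)}$" needs exactly this pivot-location argument to be complete; the rest of your write-up is fine.
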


\begin{proof}
Correctness follows by Lemmas~\ref{lem:interpolation_problem_kernel_basis} and \ref{lem:intProblemSubspace_approximant_basis_problem}, and the fact that $\B$ is in $\s$-ordered weak Popov form.
The latter property implies that the left span of the rows of $\B$ indexed by $i_1,\dots,i_{\ell'}$ includes all vectors satisfying both \eqref{eq:intProblemSubspace_congruence_relation} and \eqref{eq:intProblemSubspace_degree_restriction_entire_vector}.
Furthermore, by Lemma~\ref{lem:intProblemSubspace_approximant_basis_problem} these rows are in the left kernel of $\MABin$ (hence, if the row is $\begin{bmatrix}
\Q & \vec{\chi}
\end{bmatrix} \neq \0$ we have $\deg \Q > \deg \vec{\chi}$ due to $\deg R_j^{(i)} < \deg G^{(i)}$ for all $j$), and due to the choice of $\s$, the $\s$-pivots of the rows of $\B$ indexed by $i_1,\dots,i_{\ell'}$ are in the first $\ell+1$ positions. This means that the $\Q^{(i)}$ (the restrictions of these rows to the first $\ell+1$ components) have distinct $\w$-pivots, and are linearly independent.
Hence, $\Q^{(1)},\dots,\Q^{(\ell')}$ are a solution of Problem~\ref{prob:general_interpolation_problem}.

Recall from Section~\ref{ssec:cost_skew_poly_operations} that the annihilator polynomials $G^{(i)}$ and interpolation polynomials $R^{(i)}_j$ can be computed in $\softO(\OMul{\nu_i})$ operations in the base field of the cost bound $\OMul{n}$ each.
Computing all the polynomials $G^{(i)}$ and $R^{(i)}_j$ with $i=1,\dots,\varrho$ and $j=i+1,\dots,\intOrder+1$ hence costs at most
\begin{align*}
\softO\!\left( \intOrder \sum_{i=1}^{\varrho} \OMul{\nu_i} \right) \subseteq \softO\!\left(\intOrder\OMul{n}\right)
\end{align*}
operations in the base field of the cost bound $\OMul{n}$, since $\sum_{i=1}^{\varrho} \nu_i = n$ and $\OMul{\cdot}$ is a convex function.

Checking whether the first column of $\U$ has $\Fq$-rank $n$ can be done by computing the remainder annihilator polynomial $A := \MSPop_{\langle U_{1,1}, \dots, U_{n,1}\rangle}$ of the entries.
The $U_{i,1}$ are linearly independent if and only if $\deg A = n$.
This check can be done in $\softO(\OMul{n})$ operations in the base field of the cost bound $\OMul{n}$ (cf.~Section~\ref{ssec:cost_skew_poly_operations}).
Only if the entries are linearly independent, we need to compute the matrices $\U^{(i)}$ in Line~\ref{line:intProblemSubspace_compute_Zi}.
This costs $O(\intOrder m n^{\omega-1})$ operations in $\Fq$ (cf.~Lemma~\ref{lem:intProblemSubspace_Zi_matrices}).

By definition of $G^{(i)}$ and $R^{(i)}_j$, we have $\deg \MABin \leq n$. Due to $d \leq D+n$, 
Line~\ref{line:interpolation_minimal_approx_basis} costs $\softO(\ell^\omega \OMul{n})$ operations in the base field of the cost bound $\OMul{n}$ by Theorem~\ref{thm:correctness_DaCApp} in Section~\ref{sec:order_bases}.
\end{proof}

Algorithm~\ref{alg:fast_interpolation} can also be phrased in the language of row reduction of an interpolation module basis (cf.~\cite{puchinger2017row,puchinger2017alekhnovich}) instead of approximant bases computation.
We show in Appendix~\ref{app:module_description} how to construct a suitable module basis using the tools developed in this section.

\subsection{A New Algorithm for the Root-Finding Step}\label{ssec:root_finding_speed-up}

The following lemma relates Problem~\ref{prob:general_root-finding_problem} to computing a right approximant basis.

\begin{lemma}\label{lem:rootfinding_by_order_bases}
	Consider an instance of Problem~\ref{prob:general_root-finding_problem}, with $\khat := \max_i\{k^{(i)}\}$, and choose
	\begin{align}
	\MABin &:= \begin{bmatrix}
	Q_0^{(1)} 	& Q_1^{(1)} & \dots 	& Q_\ell^{(1)} \\
	\vdots 		& \vdots	& \ddots 	& \vdots \\
	Q_0^{(\IntParam)} 	& Q_1^{(\IntParam)} & \dots 	& Q_\ell^{(\IntParam)}
	\end{bmatrix} \in \SkewPolys^{\IntParam \times (\ell+1)} \label{eq:root-finding_MABin_matrix}\\
	\s &:= \begin{bmatrix}
	\khat & \khat-k^{(1)}+1 & \dots & \khat-k^{(\ell)}+1
	\end{bmatrix} \in \ZZ_{\geq 0}^{\ell+1} \\
	d &:= \max_{i,j} \left\{\deg Q_j^{(i)}\right\} + \khat.
	\end{align}
	Let $\MABout \in \SkewPolys^{(\ell + 1) \times (\ell+1)}$ be a right \MABnameFullStandard. %
	Then, with $\t = \cdeg_\s (\MABout)$, the root space $\Module$ defined in \eqref{eq:affine_root_space} of Problem~\ref{prob:general_root-finding_problem} satisfies
	\begin{align}
     	\Module &= \big\{ \big[f^{(1)},\dots,f^{(\ell)}\big]^\top \, : \, [f^{(0)},\dots,f^{(\ell)}]^\top = \MABout\v, \label{eq:solSet} \\
&\v \in \SkewPolys^{(\ell+1) \times 1} \textrm{ with } \cdeg_\t \v \leq \khat  \textrm{ and } f^{(0)} = 1 \ \big\} . \notag
	\end{align}
\end{lemma}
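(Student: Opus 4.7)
The plan is to rewrite the affine root condition as the matrix equation $\MABin \p = \vec 0$ where $\p = [1,\, f^{(1)}, \ldots, f^{(\ell)}]^\top \in \SkewPolys^{(\ell+1)\times 1}$, and then to recognize such exact solutions as a distinguished subset of the right approximants of $\MABin$ of order $d$. The shift $\s$ is engineered so that the $\s$-shifted column degree translates exactly to the degree constraints of Problem~\ref{prob:general_root-finding_problem}: the first coordinate contributes $\deg(1) + \khat = \khat$, and for $j \geq 1$ the inequality $\deg f^{(j)} + (\khat - k^{(j)} + 1) \leq \khat$ is equivalent to $\deg f^{(j)} < k^{(j)}$. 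Hence $\cdeg_\s \p \leq \khat$ captures precisely the degree requirement in \eqref{eq:affine_root_space}.

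For the forward inclusion, I would observe that if $[f^{(1)},\dots,f^{(\ell)}] \in \Module$, then $\MABin \p = \vec 0$ exactly, which in particular yields $\MABin \p \equiv \vec 0 \modl x^d$. By Definition~\ref{def:minimal_approximant_basis}, $\p$ lies in the right $\SkewPolys$-column span of $\MABout$, so $\p = \MABout \v$ for some $\v$. Invoking the predictable degree property for matrices in $\s$-ordered column weak Popov form (Lemma~\ref{lem:minimality_column_wpf}, column case, with $\t = \cdeg_\s \MABout$), I conclude $\cdeg_\t \v = \cdeg_\s \p \leq \khat$, placing $\p$ in the right-hand side of \eqref{eq:solSet}.

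The reverse inclusion is where the substantive argument lies. Given $\v \in \SkewPolys^{(\ell+1)\times 1}$ with $\cdeg_\t \v \leq \khat$ and $\p = \MABout \v$ satisfying $p_0 = 1$, the predictable degree property again gives $\cdeg_\s \p \leq \khat$, which by the shift design yields $\deg p_j < k^{(j)}$ for all $j \geq 1$. Since $\p$ is a right approximant of $\MABin$ of order $d$, we have $\MABin \p \equiv \vec 0 \modl x^d$. The main obstacle — and the justification for choosing $d = \max_{i,j}\{\deg Q_j^{(i)}\} + \khat$ — is to lift this congruence to an exact equality. Each coordinate satisfies
\[
\deg (\MABin \p)_i \;\leq\; \max_{j=0,\ldots,\ell}\{\deg Q_j^{(i)} + \deg p_j\} \;\leq\; \max_{i,j}\{\deg Q_j^{(i)}\} + \khat - 1 \;<\; d,
\]
but a skew polynomial of degree strictly less than $d$ that is left-divisible by $x^d$ must be identically zero. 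Therefore $\MABin \p = \vec 0$, which means $[f^{(1)},\ldots,f^{(\ell)}]$ satisfies every root equation and belongs to $\Module$, completing the characterization \eqref{eq:solSet}.
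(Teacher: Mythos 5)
Your proposal is correct and follows essentially the same route as the paper's proof: both directions rest on the predictable degree property of Lemma~\ref{lem:minimality_column_wpf} to identify $\cdeg_\t\v$ with $\cdeg_\s(\MABout\v)$, the forward inclusion uses that exact kernel vectors are order-$d$ approximants and hence lie in the column span of $\MABout$, and the reverse inclusion lifts the congruence $\MABin\MABout\v \equiv \0 \modl x^d$ to exact equality via the same degree bound $\cdeg(\MABin\MABout\v) < d$ enabled by the choice of $d$ and $\min(\s)\geq 1$. No gaps.
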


\begin{proof}
	By Lemma~\ref{lem:minimality_column_wpf} then for any $\v \in \SkewPolys^{(\ell+1) \times 1}$, we have $\cdeg_\s(\MABout \v) = \max_{i=1,\ldots,\ell+1}\{\deg (v_i) + t_i\} = \cdeg_\t \v$.

	$\subseteq$:
	Note that $\Module$ consists of those vectors of the right-kernel of $\MABin$ having $\s$-degree at most $\khat$ and first element being $1$.
	Any such kernel vector $\f$ of $\MABin$ is in the column space of $\MABout$ by definition of approximant basis, so let $\v$ be such that  $\f = \MABout\v$.
	But then we have $\cdeg_\t \v = \cdeg_\s(\MABout\v) \leq \khat$.

	$\supseteq$:
	Let $\v \in \SkewPolys^{\ell+1}$ with $\cdeg_\t(\v) \leq \khat$.
	Then $\cdeg_\s(\MABout\v) \leq \khat$, i.e.~$\cdeg(\MABout\v) \leq \khat - \min(\s) < \khat$.
	Since $\MABout$ is an approximant basis of $\MABin$, then $\MABin \MABout \v \equiv 0 \modl x^{d}$.
	But $\cdeg(\MABin\MABout\v) \leq \max_{i,j}(\deg Q^{(i)}_j) + \cdeg(\MABout\v) < d$, and hence we can conclude $\MABin \MABout \v = 0$.
	In other words, $\MABout \v$ is a right kernel vector of $\MABin$.
	Since it also has $\s$-degree at most $\khat$, it must be in $\Module$ as long as its first component is $1$.
\end{proof}

Lemma~\ref{lem:rootfinding_by_order_bases} gives an implicit description of the root space $\Module$.
The following lemma shows how to explicitly compute a basis of the affine root space from $\MABout$.

\begin{lemma}\label{lem:rank_root-finding_affine_basis}
Let $\MABout$ and $\t = \cdeg_\s(\MABout)$ be defined as in Lemma~\ref{lem:rootfinding_by_order_bases}.
Denote by $[B_{0,i},\dots,B_{\ell,i}]^\top$ the $i$-th column of $\MABout$, for $i=1,\dots,\ell+1$.
Let $\Jset$ be the set of indices of columns of $\MABout$ which have $\s$-degree at most $\khat$, i.e.~$\forall i \in \Jset$ we have $t_i \leq \khat$, and let $\Iset \subseteq \Jset$ be those indices where the first entry of the corresponding column of $\MABout$ is not zero.

If $\Iset = \emptyset$, then $\Module=\emptyset$.
Otherwise, choose some $i^* \in \Iset$,
denote by $i_1,\dots,i_\iota$ the distinct elements of $\Iset\setminus \{i^*\}$ and by $j_1,\dots,j_\tau$ the distinct elements of $\Jset \setminus \Iset$, respectively.
Define
\begin{align}
\g^* &:= \tfrac{1}{B_{0,i^*}}[B_{1,i^*}, \dots, B_{\ell,i^*}]^\top \label{eq:rank_root_finding_explicit_basis_1} \\
\g^{(r)} &:= [B_{1,i_r}, \dots, B_{\ell,i_r}]^\top - \tfrac{B_{0,i_r}}{B_{0,i^*}}[B_{1,i^*}, \dots, B_{\ell,i^*}]^\top \label{eq:rank_root_finding_explicit_basis_2} 
\end{align}
for $r=1,\dots,\iota$. For $\delta=\iota+\sum_{i =1}^{\tau} (\khat-t_{j_i}+1)$, define the vectors $\g^{(\iota+1)},\dots,\g^{(\delta)} \in \SkewPolys^\ell$ as
\begin{align*}
\g^{\left(\iota+\sum_{i'=1}^{i-1}(\khat-t_{j_i}+1)+j+1\right)} = [B_{1,j_i}, \dots, B_{\ell,j_i}]^\top x^j,
\end{align*}
where $i=1,\dots,\tau$ and $j=0,\dots,\khat-t_{j_i}$.
Then, $\g^{(1)},\dots,\g^{(\delta)}$ are right linearly independent over $\Fqm$ and
\begin{align}
\Module = \g^* + \langle \g^{(1)}, \dots, \g^{(\delta)} \rangle_{\Fqm, \mathrm{right}}, \label{eq:rank_root_finding_explicit_basis_3} 
\end{align}
where $\langle \cdot \rangle_{\Fqm, \mathrm{right}}$ denotes the right $\Fqm$-span.
\end{lemma}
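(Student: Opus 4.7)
The plan is to use Lemma~\ref{lem:rootfinding_by_order_bases}, which identifies $\Module$ with the image under the map $\v \mapsto [(\MABout\v)_1,\ldots,(\MABout\v)_\ell]^\top$ of the set of $\v \in \SkewPolys^{(\ell+1)\times 1}$ satisfying $\cdeg_\t(\v) \leq \khat$ and $(\MABout\v)_0 = 1$. The degree condition forces $v_i = 0$ for $i \notin \Jset$ and $\deg v_i \leq \khat - t_i$ for $i \in \Jset$. A key structural observation is that, since $s_0 = \khat$ and $\MABout$ is in $\s$-ordered column weak-Popov form, each entry $B_{0,i}$ with $i \in \Jset$ has $\deg B_{0,i} \leq t_i - \khat \leq 0$ and is therefore a constant; furthermore $B_{0,i} \neq 0$ precisely when $i \in \Iset$, and in that case the column's $\s$-pivot lies in row $0$, so $t_i = \khat$ and hence $v_i$ is forced to be a scalar in $\Fqm$.

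Under this parametrization, the condition $(\MABout\v)_0 = 1$ collapses to the single $\Fqm$-affine equation $\sum_{i \in \Iset} B_{0,i}\, v_i = 1$ in the scalars $\{v_i : i \in \Iset\}$, with no further constraint on the coefficients of $v_i$ for $i \in \Jset \setminus \Iset$. If $\Iset = \emptyset$ this is unsolvable, so $\Module = \emptyset$. Otherwise, fixing $i^* \in \Iset$, the particular solution $v_{i^*} = B_{0,i^*}^{-1}$ with all other $v_i = 0$ maps to the vector $\g^*$ of~\eqref{eq:rank_root_finding_explicit_basis_1}. A basis of the homogeneous solutions consists of (i) the vectors $\v^{(r)}$ for $r = 1,\ldots,\iota$ with $v_{i_r} = 1$, $v_{i^*} = -B_{0,i_r}/B_{0,i^*}$, and zeros elsewhere, whose image after dropping the first coordinate is exactly $\g^{(r)}$ of~\eqref{eq:rank_root_finding_explicit_basis_2}; and (ii) for each $j_i \in \Jset \setminus \Iset$ and $0 \leq j \leq \khat - t_{j_i}$, the vector $x^j \e_{j_i}$, which trivially satisfies the homogeneous equation since $B_{0,j_i} = 0$ and yields the remaining $\g^{(r)}$ listed in the statement. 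Combining particular and homogeneous solutions establishes~\eqref{eq:rank_root_finding_explicit_basis_3}.

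For independence, suppose $\sum_{r=1}^\delta \g^{(r)} \alpha_r = 0$ with $\alpha_r \in \Fqm$. Letting $\v := \sum_r \v^{(r)} \alpha_r$ for the pre-images $\v^{(r)}$ above, every entry of $\MABout \v$ vanishes -- the first because each $\v^{(r)}$ is a homogeneous solution, the remaining $\ell$ by the assumed relation. Full rank of $\MABout$ then gives $\v = 0$. For each $r \leq \iota$, only $\v^{(r)}$ contributes to the $i_r$-coordinate of $\v$ (the other $\v^{(r')}$ with $r' \leq \iota$, $r' \neq r$, involve $i_{r'} \neq i_r$, and those with $r' > \iota$ are supported in $\Jset \setminus \Iset$), so $\alpha_r = 0$. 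The leftover relation is a sum of terms $\alpha_r x^{j} \e_{j_i}$ with distinct (column, degree) slots, whence the remaining $\alpha_r$ vanish as well.

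The only delicate step is the structural observation in the first paragraph -- that the weak-Popov form together with $s_0 = \khat$ forces $B_{0,i}$ to be constant for $i \in \Jset$ and nonzero exactly on $\Iset$ (with $t_i = \khat$ there). Once this is in place, the affine equation becomes finite-dimensional over $\Fqm$, and the rest of the proof is essentially bookkeeping: counting parameters, writing down a particular solution and a homogeneous basis, and confirming independence via the full-rank property of $\MABout$.
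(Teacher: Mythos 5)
Your proposal is correct and follows essentially the same route as the paper's proof: parametrize $\Module$ via Lemma~\ref{lem:rootfinding_by_order_bases}, translate $\cdeg_\t\v\leq\khat$ into the support/degree constraints on $\v$, reduce $f^{(0)}=1$ to the scalar affine equation $\sum_{i\in\Iset}B_{0,i}v_i=1$, and read off the particular solution and homogeneous basis. You are in fact slightly more explicit than the paper in justifying that $B_{0,i}$ is constant and $t_i=\khat$ for $i\in\Iset$ (the parenthetical about the pivot lying in row $0$ is unnecessary and not quite what follows, but the conclusion is right), and your independence argument via the injectivity of $\v\mapsto\MABout\v$ is equivalent to the paper's appeal to uniqueness of $\v$.
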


\begin{proof}
According to Lemma~\ref{lem:rootfinding_by_order_bases}, the roots contained in $\Module$ are obtained from linear combinations $[f^{(0)},\dots,f^{(\ell)}] = \MABout\v$ of the columns of $\MABout$ such that $\cdeg_\t \v \leq \khat$ and $f^{(0)}=1$.
The first condition, $\cdeg_\t \v \leq \khat$, implies that
\begin{itemize}
\item $v_i = 0$ for all $i \notin \Jset$ (since $t_i>\khat$ in this case),
\item $v_i \in \Fqm$ for all $i \in \Iset$ (since $t_i=\khat$), and
\item $\deg v_i \leq \khat-t_i$ for all $i \in \Jset \setminus \Iset$ (we write $v_i = \sum_{j=0}^{\khat-t_i} x^j \tilde{v}_{i,j}$ with $\tilde{v}_{i,j} \in \Fqm$ below).
\end{itemize}
If $\Iset=\emptyset$, we cannot have $f^{(0)}\neq0$, hence, $\Module=\emptyset$.
Else, $f^{(0)}=1$ is equivalent to $\sum_{i \in \Iset} B_{0,i} v_i = 1$.
By the elementary operations on the columns indexed by $\Iset$ (see \eqref{eq:rank_root_finding_explicit_basis_1} and \eqref{eq:rank_root_finding_explicit_basis_2}), we obtain the submatrix
\begin{align*}
\begin{bmatrix}
1 & 0 & \dots & 0 \\
\g^* & \g^{(1)} & \dots & \g^{(\iota)}
\end{bmatrix} \in \SkewPolys^{(\ell+1) \times (\iota+1)}.
\end{align*}
By combining all conditions, we have $[f^{(1)}, \dots, f^{(\ell)}]^\top \in \Module$ if and only if
\begin{align}
&\begin{bmatrix}
f^{(1)} \\
\vdots \\
f^{(\ell)}
\end{bmatrix}
= \g^* + \sum_{r=1}^{\iota} \g^{(r)} v_{i_r}' + \sum_{i=1}^{\tau} \begin{bmatrix}
B_{1,j_i} \\
\vdots \\
B_{\ell,j_i}
\end{bmatrix}
\sum_{j=0}^{\khat-t_{j_i}} \tilde{v}_{j_i,j}, \label{eq:rank_root_finding_explicit_basis_4} \\
&= \g^* + \sum_{r=1}^{\iota} \g^{(r)} v_{i_r}' + \sum_{i=1}^{\tau} \sum_{j=0}^{\khat-t_{j_i}}
\g^{\left(\iota+\sum_{i'=1}^{i-1}(\khat-t_{j_i}+1)+j+1\right)} \tilde{v}_{j_i,j}, \notag
\end{align}
with some $v_{i_r}',\tilde{v}_{j_i,j} \in \Fqm$ for all $r,i,j$. This proves \eqref{eq:rank_root_finding_explicit_basis_3}.

Since $\MABout$ is in $\s$-ordered column weak Popov form, for each root $[f^{(1)},\dots,f^{(\ell)}]^\top \in \Module$, there is a unique $\v$ with the given properties and $[f^{(1)},\dots,f^{(\ell)}]^\top = \MABout \v$. We obtain the coefficients $v_{i_r}',\tilde{v}_{j_i,j} \in \Fqm$, for $r,i,j$, of the right $\Fqm$-linear combination in \eqref{eq:rank_root_finding_explicit_basis_4} by a bijective mapping from the vector $\v$.
Hence, the linear combination in \eqref{eq:rank_root_finding_explicit_basis_4} is unique for any root and the right $\Fqm$-linearly independence of the $\g^{(i)}$ follows.
\end{proof}

Lemmas~\ref{lem:rootfinding_by_order_bases} and \ref{lem:rank_root-finding_affine_basis} imply a root-finding algorithm based on computing a right approximant basis.
We outline the procedure in Algorithm~\ref{alg:fast_root_finding} and prove its correctness and complexity in the following theorem.

\begin{algorithm}[ht]
	\caption{Fast Root-Finding Algorithm}\label{alg:fast_root_finding}
	\SetKwInOut{Input}{Input}\SetKwInOut{Output}{Output}
	\Input{Instance of Problem~\ref{prob:general_root-finding_problem}: $\ell,n \in \ZZ_{> 0}$, $\vec{k} \in \ZZ_{> 0}^\ell$, and left $\SkewPolys$-linearly independent vectors $\Q^{(1)},\dots,\Q^{(\IntParam)} \in \SkewPolys^{\ell+1} \setminus \{\0\}$ with $\deg \Q^{(i)} \leq n$ for all $i$.}
	\Output{Solution of Problem~\ref{prob:general_root-finding_problem}: if $\Module \neq \emptyset$, an affine basis $\g^*,\g^{(1)},\dots,\g^{(\delta)}$ of the right $\Fqm$-linear affine space $\Module$ as defined in \eqref{eq:affine_root_space}, i.e.,
	\begin{equation*}
	\Module = \g^* + \langle \g^{(1)}, \dots, \g^{(\delta)} \rangle_{\Fqm,\mathrm{right}}.
	\end{equation*}
	If $\Module = \emptyset$, ``no solution''}
	\BlankLine
	$\khat \gets \max_i\{k^{(i)}\}$ \\
	$\MABin \gets $ as in \eqref{eq:root-finding_MABin_matrix} \\
	$\s \gets [\khat,  \khat-k^{(1)}+1,  \dots,  \khat-k^{(\ell)}+1] $ \\
	$d \gets \max_{i,j} \left\{\deg Q_j^{(i)}\right\} + \khat$ \\
	$\MABout \gets $ right \MABnameFullStandard %
	\hfill \myAlgoComment{Algorithm~\ref{alg:rightSkewDaCAppBasis} in Section~\ref{sec:order_bases}}
	\If{$\MABout$ has a row of $\rdeg_\s \leq \khat$}{
		Compute $\g^*,\g^{(1)},\dots,\g^{(\delta)}$ as in Lemma~\ref{lem:rank_root-finding_affine_basis} \\
		\Return{$\g^*,\g^{(1)},\dots,\g^{(\delta)}$}
	}
	\Else{
		\Return{``no solution''}
	}
\end{algorithm}

\begin{theorem}\label{thm:fast_root_finding_correctness_complexity}
	Algorithm~\ref{alg:fast_root_finding} is correct. For the complexity, assume $\max_i k^{(i)} \in \Theta(n)$. Then, Algorithm~\ref{alg:fast_root_finding} has complexity
	\begin{equation*}
	\softO\!\left( \ell^\omega \OMul{n} \right)
	\end{equation*}
	operations in the base field of the cost bound $\OMul{n}$.
\end{theorem}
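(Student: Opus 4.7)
The plan is to dispatch correctness via the two preceding lemmas and then bound the running time by identifying the approximant-basis call as the bottleneck. Correctness is essentially bookkeeping: Lemma~\ref{lem:rootfinding_by_order_bases} shows that, for the shift $\s$ and order $d$ built by the algorithm, the root space $\Module$ is exactly the image under $\MABout$ (suitably truncated) of those right approximants with $\s$-degree at most $\khat$ whose zeroth coordinate is $1$; Lemma~\ref{lem:rank_root-finding_affine_basis} then turns that implicit description into the explicit affine basis $\g^*,\g^{(1)},\ldots,\g^{(\delta)}$ that the algorithm returns, and supplies the dichotomy $\Iset=\emptyset \Leftrightarrow \Module = \emptyset$. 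The only precondition I need to verify is that Algorithm~\ref{alg:rightSkewDaCAppBasis} is called on a matrix of degree strictly less than its order, which is immediate from $d \geq \max_{i,j}\deg Q_j^{(i)} + \khat$.

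For complexity, I would show that the dominant cost is the call to Algorithm~\ref{alg:rightSkewDaCAppBasis} on $\MABin \in \SkewPolys^{\ell' \times (\ell+1)}$ with $\ell' \leq \ell+1$ and $d \leq 2n$. Theorem~\ref{thm:correctness_DaCApp} bounds this by $\softO\!\left(\max\{\ell',\ell+1\}(\ell+1)^{\omega-1}\OMul{d}\right) \subseteq \softO(\ell^\omega \OMul{n})$ once I use the standing assumption $\max_i k^{(i)} \in \Theta(n)$ together with $\deg Q^{(i)}_j \leq n$ to collapse $d$ to $\Theta(n)$. The assembly of $\MABin$ from the input vectors is a re-labelling of coefficients and costs at most $O(\ell^2 n)$ operations in $\Fqm$, which is absorbed.

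The step I expect to demand the most care is the extraction of the affine basis via the explicit formulas of Lemma~\ref{lem:rank_root-finding_affine_basis}. Each of $\g^*,\g^{(1)},\ldots,\g^{(\iota)}$ arises from one $\Fqm$-inversion and a constant number of scalar--polynomial operations on at most $\ell$ entries of degree at most $\khat$, which totals $O(\ell^2 n)$ operations in $\Fqm$. The remaining basis vectors are right-shifts of columns of $\MABout$ by powers of $x$; since right multiplication by $x^j$ in $\SkewPolys$ is a coefficient shift without automorphism application, each may either be materialized in $O(\ell n)$ time or stored implicitly as a (column, shift) pair. A worst-case bound $\delta \in O(\ell n)$ then places the materialized output inside $\softO(\ell^\omega \OMul{n})$ for all the concrete cost bounds on $\OMul{n}$ recalled in Section~\ref{ssec:cost_skew_poly_operations}, and the implicit representation handles any remaining edge case. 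Combining the three blocks yields the claimed bound.
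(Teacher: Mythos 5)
Your proposal is correct and follows essentially the same route as the paper's proof: correctness is delegated to Lemmas~\ref{lem:rootfinding_by_order_bases} and \ref{lem:rank_root-finding_affine_basis}, and the complexity is dominated by the right approximant basis computation of order $d \leq n + \max_i k^{(i)} \in \Theta(n)$ bounded via Theorem~\ref{thm:correctness_DaCApp}, with the affine-basis extraction costing $O(\ell^2 n)$ operations in $\Fqm$. You are in fact more careful than the paper about the vectors $\g^{(\iota+1)},\dots,\g^{(\delta)}$, but one sub-claim there is wrong: since $\delta$ can be as large as $\Theta(\ell \khat) = \Theta(\ell n)$, fully materializing these vectors can require $\Theta(\ell^2 n^2)$ coefficients, which is \emph{not} absorbed into $\softO\!\left(\ell^\omega \OMul{n}\right)$ for the sub-quadratic bounds on $\OMul{n}$ recalled in Section~\ref{ssec:cost_skew_poly_operations} (e.g.\ $\OMul{n} \in O(n^{1.635})$ over $\Fqm$ gives $n^{1.635} \ll n^2$ for fixed $\ell$). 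Your fallback of returning these vectors implicitly as (column, shift) pairs is the correct fix, and is tacitly what the paper's $O(\ell^2 \max_i k^{(i)})$ accounting assumes, so the proof stands once that option is taken as the output convention.
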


\begin{proof}
Correctness follows from Lemmas~\ref{lem:rootfinding_by_order_bases} and \ref{lem:rank_root-finding_affine_basis}.
Complexity-wise the heaviest step is the computation of the right approximant basis, which costs $\softO\!\left( \ell^\omega \OMul{n + \max_{i}k^{(i)}} \right) \subseteq \softO\!\left( \ell^\omega \OMul{n}\right)$ operations in the base field of the cost bound $\OMul{n}$, since by assumption on the degree of $\Q^{(i)}$ in Problem~\ref{prob:general_root-finding_problem}, we have $d \leq n + \max_{i}k^{(i)} \in \Theta(n)$.
Computing the affine basis as in Lemma~\ref{lem:rank_root-finding_affine_basis} costs $O(\ell^2 \max_{i}k^{(i)}) \subseteq O(\ell^2 n)$ operations in $\Fqm$. %
\end{proof}

\section{Fast Decoding of Sum-Rank-Metric Codes}\label{sec:sum-rank}

In this section, we show how to speed up decoding of linearized Reed--Solomon codes in the sum-rank metric.
This is achieved by proposing new, faster, algorithms for the two core computational problems of the Martínez-Peñas--Kschischang decoder \cite{martinez2019reliable}, which in fact decodes a more general class of codes in a more general metric: skew Reed--Solomon codes in the skew metric.
We first state these problems and remind how the decoder works in Section~\ref{ssec:sum-rank_computational_problems}.
We then present our new algorithms for them in Sections~\ref{ssec:fast_remainder_ev_arith} and \ref{ssec:bivariate_remainder_interpolation}.

In this section, we only use the remainder evaluation (cf.~Section~\ref{ssec:evaluation_maps}) of skew polynomials.

\subsection{Computational Problems and their Relation to Decoding}\label{ssec:sum-rank_computational_problems}

To state the two computational problems, we need to first recall some notions related to the remainder evaluation of skew polynomials.
\subsubsection{Preliminaries on Remainder Evaluation}

The following notions were introduced in \cite{lam1985general,lam1988vandermonde,lam1988algebraic}, and we use the notation of \cite{martinez2019reliable}.
Let $A \subseteq \SkewPolys$, $\Omega \subseteq \Fqm$, and $a \in \Fqm$.
The \emph{zero set} of $A$ is defined by $Z(A) := \left\{ \alpha \in \Fqm \, : \, \remev{f}{\alpha} = 0 \, \forall \, f \in A \right\}$, and $I(\Omega) := \left\{ f \in \SkewPolys \, : \, \remev{f}{\alpha} = 0 \, \forall \, \alpha \in \Omega \right\}$ denotes the \emph{associated ideal} of $\Omega$.
The \emph{P-closure} (or polynomial closure) of $\Omega$ is defined by $\bar{\Omega} := Z(I(\Omega))$, and $\Omega$ is called \emph{P-closed} if $\bar{\Omega}=\Omega$.
A P-closure is always P-closed.
The elements of $\Fqm \setminus \bar{\Omega}$ are all said to be \emph{P-independent from $\Omega$}.

A set $\Bset \subseteq \Fqm$ is said to be \emph{P-independent} if any $b \in \Bset$ is P-independent from $\Bset \setminus \{b\}$.
If $\Bset$ is P-independent and $\Omega := \bar{\Bset} \subseteq \Fqm$, we say that $\Bset$ is a \emph{P-basis of $\Omega$}.
$\Omega$ may have many P-bases but they all have the same number of elements, called the \emph{P-rank} of $\Omega$, denoted $\Prk(\Omega) = |\Bset|$.

For any $\Bset \subset \Fqm$ then
$I(\Bset)$ is a left $\SkewPolys$-ideal and hence principal, so there is a unique monic skew polynomial $\MSPrem_\Bset$ of smallest degree that generates it.
We call $\MSPrem_\Bset$ the \emph{remainder annihilator polynomial} of $\Bset$ and we have $\deg \MSPrem_\Bset = \Prk(\bar\Bset)$.
In particular, $\deg \MSPrem_\Bset = |\Bset|$ if and only if $\Bset$ is P-independent.

Let $\Bset = \{ \beta_1,\dots,\beta_n \} \in \Fqm$ be P-independent\footnote{%
  Here and in the sequel, we slightly abuse notation and take this to mean $\Bset$ is an ordered set and that the $\beta_i$ are distinct.%
}.
For any $\vec r = (r_1,\dots,r_n) \in \Fqm$, there is a unique skew polynomial $\IPrem{\Bset, \vec r} \in \SkewPolys$ of degree less than $n$ such that
\begin{align*}
\remev{\IPrem{\Bset, \vec r}}{\beta_i} = r_i \quad \forall \, i=1,\dots,n.
\end{align*}
We call this the \emph{remainder interpolation polynomial} of $\vec r$ on $\Bset$.

\subsubsection{Computational Problems}

The decoder in \cite{martinez2019reliable} is based on the following computational problems.

\begin{problem}[Fast Remainder-Evaluation Operations]\label{prob:remainder_arithmetic}
  Let $\Bset = \{\beta_1,\dots,\beta_n\} \subseteq \Fqm$ be $P$-independent.
\begin{enumerate}[label={\roman*)}]
\item
  \label{itm:rem_annihilator}
  Compute $\MSPrem_{\Bset}$ (\emph{remainder annihilator polynomial}).
\item
  \label{itm:rem_MPE}
  Given $f \in \SkewPolys$ with $\deg f \leq n$, compute $\big[f[\beta_1],\dots,f[\beta_n]\big]$ (\emph{multi-point remainder evaluation}).
\item
  \label{itm:rem_IP}
  Given $\vec r \in \Fqm$, compute $\IPrem{\Bset, \vec r}$ (\emph{remainder interpolation}).
\end{enumerate}
\end{problem}

\begin{problem}[$2$D Vector Remainder Interpolation]\label{prob:bivariate_remainder_interpolation}
  Let $\Bset = \{\beta_1,\dots,\beta_n\} \subseteq \Fqm$ be $P$-independent.
  Given $D \in \ZZ_{> 0}$, $\w \in \ZZ_{\geq 0}^2$, and $\vec r \in \Fqm$, compute a non-zero $[Q_0,Q_1] \in \SkewPolys^2$ such that
\begin{align}
Q_0[\beta_i] + (Q_1 R)[b_i] &= 0 \quad \forall \, i=1,\dots,n, \label{eq:bivariate_remaider_interpolation_eval} \\
\rdeg_{\w} \begin{bmatrix}
Q_0 & Q_1
\end{bmatrix} &< D, \label{eq:bivariate_remaider_interpolation_deg}
\end{align}
where $R := \IPrem{\Bset, \vec r}$.
\end{problem}

As for Problem~\ref{prob:general_interpolation_problem} in Section~\ref{sec:rank_and_subspace}, we assume $D \in \Theta(n)$ for the complexity analysis.
This is the only case relevant for the decoding problem studied in the following. See Section~\ref{ssec:remarks_on_generality} in the conclusion for a discussion on the general case.
The previously fastest algorithms to solve Problems~\ref{prob:remainder_arithmetic} and \ref{prob:bivariate_remainder_interpolation} with $D \in \Theta(n)$ were presented in \cite{martinez2019reliable} both of which use $O(n^2)$ operations in $\Fqm$.

\subsubsection{Decoding of Skew Reed--Solomon Codes}

Let $\Bset = \{\beta_1,\dots,\beta_n\}$ be P-independent.
The \emph{skew Reed--Solomon code (w.r.t.\ $\Bset$)} \cite{boucher2014linear} of dimension $k<n$ is defined as
\begin{align*}
\mathcal{C}_{\mathrm{skew},\Bset} := \left\{ \left[\remev{f}{\beta_1}, \dots, \remev{f}{\beta_n}\right] \, : \, f \in \SkewPolys_{<k}  \right\}.
\end{align*}
The codes are designed for the skew metric, which is defined as follows.
The \emph{skew weight (w.r.t.\ $\Bset$)} \cite{martinez2018skew} is\footnote{An equivalent definition of the skew weight based on the left least common multiple (llcm) is given in~\cite{boucher2020algorithm} as $\wtB(\y)=\deg\llcm_{y_i\neq 0}\left(x-\frac{\sigma(y_i)\beta_i}{y_i}\right)$.}
\begin{align*}
\wtB \, : \, \Fqm^n &\to \ZZ_{\geq 0} \\
\y = [y_1,\dots, y_n] &\mapsto n- \Prk\left(Z\!\left(\IPrem{\Bset, \vec y}\right) \cap \, \bar{\Bset} \right).
\end{align*}
The \emph{skew distance (w.r.t.\ $\Bset$)} is defined by $\dB(\y_1,\y_2) := \wtB(\y_1-\y_2)$ for any $\y_1,\y_2 \in \Fqm^n$.
A skew Reed--Solomon code has minimum distance $d=n-k+1$ w.r.t.\ the skew metric.

The skew metric is related to the sum-rank metric (see Theorem~\ref{thm:sum-rank_relation_skew_and_sum-rank_metric} below), which is defined as follows.
As in Section~\ref{sec:rank_and_subspace}, we define the ($\Fq$-)rank weight of a row vector in $\Fqm^{1 \times n'}$ as the $\Fq$-rank of the $m \times n'$ matrix over $\Fq$ obtained by column-wise expanding each entry of the vector in a basis of $\Fqm$.
For $\vec n = [n_1,\ldots,n_\ell]$ with $n_i \in \ZZ_{>0}$ and $\sum_{i=1}^{\ell} n_i = n$, the \emph{sum-rank weight} (w.r.t.~$\vec n$) on $\Fqm^n$ \cite{nobrega2010multishot} is defined as
\begin{align*}
\wtSR \, : \, \Fqm^n &\to \ZZ_{\geq 0}, \\
\c = \big[\c^{(1)} \mid \dots \mid \c^{(\ell)}\big] &\mapsto \sum_{i=1}^{\ell} \wtR\!\big(\c^{(i)}\big),
\end{align*} 
where we divide $\c$ into subblocks $\c^{(i)} \in \Fqm^{n_i}$.
The \emph{sum-rank distance} of $\a,\b \in \Fqm^n$ is $\dSR(\a,\b) := \wtSR(\a-\b)$.

\begin{theorem}[\cite{lam1985general,lam1988vandermonde,martinez2018skew,martinez2019reliable}]\label{thm:sum-rank_relation_skew_and_sum-rank_metric}
Let $\vec n = [n_1,\ldots,n_\ell]$ with $n_i \in \ZZ_{>0}$ and $\sum_{i=1}^{\ell} n_i = n$, and let $m \in \ZZ_{>0}$ with $m \geq \max_i\{n_i\}$ and $\ell < q$ with $q$ a prime power.
Then there is a P-independent set $\Bset = \{\beta_1,\dots,\beta_n\} \subset \Fqm$ and non-zero field elements $\v = [v_1,\dots,v_n] \in (\Fqm^*)^n$ such that
\begin{align*}
\varphi_{\Bset,\v} \, : \, \left( \Fqm^n, \dB \right) &\to  \left( \Fqm^n, \dSR \right), \\
\c = [c_1,\dots,c_n] &\mapsto [c_1 v_1,\dots,c_n v_n]
\end{align*}
is an isometry (i.e., bijective, distance-preserving mapping).
\end{theorem}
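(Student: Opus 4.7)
The plan is to build $\Bset$ and $\v$ block-by-block, exploiting the decomposition of $\Fqm$ into conjugacy classes under the action $a \mapsto \sigma(c)a/c$ of $\Fqm^*$. The first step invokes the classical fact (see \cite{lam1988vandermonde}) that two elements $a,b \in \Fqm$ lie in the same conjugacy class of $\SkewPolys$ if and only if they share the same $\sigma$-norm $N_\sigma(a) := \prod_{j=0}^{m-1}\sigma^j(a)$. Since $\sigma$ generates $\Gal(\Fqm/\Fq)$, this norm maps $\Fqm^*$ surjectively onto $\Fq^*$, yielding exactly $q-1$ non-zero conjugacy classes. The hypothesis $\ell < q$ therefore allows me to pick representatives $a_1,\dots,a_\ell$ in $\ell$ pairwise distinct non-zero classes.

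Within the class of each $a_i$ I would select $n_i$ elements $\beta_{i,j} := \sigma(c_{i,j})a_i/c_{i,j}$ with $c_{i,1},\dots,c_{i,n_i} \in \Fqm^*$ chosen to be $\Fq$-linearly independent; this is possible since $n_i \leq m$. The fact (see \cite{lam1988vandermonde,martinez2018skew}) that a maximal P-independent subset of a non-zero conjugacy class has cardinality $m$ ensures this choice is P-independent within each class. I then set $\Bset$ to be the concatenation of these blocks and $v_{i,j} := c_{i,j}$. Global P-independence of $\Bset$ follows because remainder annihilator polynomials of P-independent sets living in distinct conjugacy classes multiply together with no cancellation, giving $\Prk(\Bset) = \sum_i n_i = n$.

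The core of the proof is to establish that $\varphi_{\Bset,\v}$ is an isometry between the skew and sum-rank metrics. By $\Fqm$-linearity of $\varphi_{\Bset,\v}$, it suffices to show weight preservation; by additivity of both $\wtB$ across distinct classes and $\wtSR$ across blocks, this reduces to a single-block identity: for every $\y^{(i)} \in \Fqm^{n_i}$, the skew weight of $\y^{(i)}$ with respect to $\{\beta_{i,1},\dots,\beta_{i,n_i}\}$ equals the rank weight of $[y_{i,1}v_{i,1},\dots,y_{i,n_i}v_{i,n_i}]$.

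The main obstacle is this per-block identity, and it is where the subtlety of the construction becomes essential. I would derive it from the product rule for remainder evaluation recalled in Section~\ref{ssec:evaluation_maps}: viewing a scalar $c \in \Fqm^*$ as a skew polynomial of degree $0$ gives $\remev{(fc)}{a} = \remev{f}{\sigma(c)a/c}\,c$ for any $f \in \SkewPolys$. Consequently, the zero set of $R^{(i)} := \IPrem{\{\beta_{i,1},\dots,\beta_{i,n_i}\},\y^{(i)}}$ intersected with the conjugacy class of $a_i$ corresponds, via $c \mapsto \sigma(c)a_i/c$, to the set of $c \in \Fqm^*$ with $\remev{(R^{(i)} c)}{a_i} = 0$. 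An $\Fq$-linear-algebraic computation identifies the $\Fq$-dimension of this kernel with $n_i$ minus the $\Fq$-rank of the column-vector $[y_{i,1}v_{i,1},\dots,y_{i,n_i}v_{i,n_i}]$ expanded over $\Fq$. Translating through the definition $\wtB(\y^{(i)}) = n_i - \Prk(Z(R^{(i)})\cap\overline{\{\beta_{i,1},\dots,\beta_{i,n_i}\}})$ then yields the desired rank weight. The bulk of this argument is essentially established in \cite{martinez2018skew,martinez2019reliable}; my role is to assemble these pieces with the explicit choices of $\Bset$ and $\v$.
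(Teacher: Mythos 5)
The paper does not prove this theorem at all: it is stated as a known result and attributed to \cite{lam1985general,lam1988vandermonde,martinez2018skew,martinez2019reliable}, so there is no in-paper proof to compare against. Your reconstruction is the standard argument from that literature (in particular the construction behind linearized Reed--Solomon codes in \cite{martinez2018skew}): the conjugacy classes of $\Fqm$ are the fibers of the $\sigma$-norm onto $\Fq^*$, giving $q-1$ classes and hence the condition $\ell<q$; P-rank is additive over classes and, within a class, P-independence of $\{\sigma(c_{i,j})a_i c_{i,j}^{-1}\}_j$ is equivalent to $\Fq$-linear independence of the $c_{i,j}$; and the product rule $\remev{(Rc)}{a_i}=\remev{R}{\sigma(c)a_ic^{-1}}\,c$ makes $c\mapsto\remev{(Rc)}{a_i}$ an $\Fq$-linear map whose rank on $\mathrm{span}_{\Fq}\{c_{i,j}\}$ equals $\dim_{\Fq}\langle y_{i,j}c_{i,j}\rangle_j$, which is exactly the per-block rank weight of $\varphi_{\Bset,\v}(\y)$. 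I see no gap; the proposal is a correct and appropriately complete proof sketch of the cited result.
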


For a pair $\Bset$ and $\v$ as in Theorem~\ref{thm:sum-rank_relation_skew_and_sum-rank_metric}, the linear code $\varphi_{\Bset,\v}\!\left(\mathcal{C}_{\mathrm{skew},\Bset}\right)$ is a \emph{linearized Reed--Solomon code} as introduced in \cite{martinez2018skew}.
Since $\varphi_{\Bset,\v}$ is an isometry, such a code has minimum sum-rank distance $n-k+1$ and is thus maximum distance separable in the sum-rank metric.
Having precomputed $\vec v$, the isometry can be applied or reversed in only $n$ multiplications in $\Fqm$.
Hence, any efficient decoder for skew Reed--Solomon codes in the skew metric is also an efficient decoder for linearized Reed--Solomon codes in the sum-rank metric.
As skew Reed--Solomon codes are more general and can be described in skew polynomial language, we will only treat these codes in the following.

Let $\Bset = \{ \beta_1,\dots,\beta_n \} \in \Fqm$ be P-independent.
Let
\begin{align*}
\r = (r_1,\ldots,r_n) = \c + \e \in \Fqm^n
\end{align*}
such that $\c$ is a codeword of the skew Reed--Solomon code $\mathcal{C}_{\mathrm{skew},\Bset}[n,k]$ and $\e$ is an error of skew weight $\wtB(\e)$.
The Martínez-Peñas--Kschischang decoder \cite{martinez2019reliable} finds a solution $[Q_0,Q_1] \in \SkewPolys^2$ of Problem~\ref{prob:bivariate_remainder_interpolation} with input $D=\lfloor\tfrac{n-k}{2}\rfloor+k-1$, $\w = [0,k-1]$, and $\{ (\beta_i,r_i) \}_{i=1}^n$.
It was shown in \cite[Proposition~4]{martinez2019reliable} that if the skew weight of the error $\e$ is at most $\wtB(\e) \leq \lfloor \tfrac{n-k}{2}\rfloor$, then any such solution satisfies
$-Q_0 = Q_1 f$, where $f \in \SkewPolys_{<k}$ is the unique skew polynomial (i.e., message polynomial) of degree less than $k$ with $\c = \left[\remev{f}{\beta_1}, \dots, \remev{f}{\beta_n}\right]$.
Hence, to finish decoding once $[Q_0, Q_1]$ is obtained, we simply need to divide $-Q_0$ by $Q_1$ from the left and (multi-point) evaluate the resulting polynomial to obtain the original codeword $\c$.

\begin{theorem}\label{thm:sum-rank_summary}
Decoding a skew Reed--Solomon code $\mathcal{C}_{\mathrm{skew},\Bset}$ using the decoder in \cite{martinez2019reliable} has complexity $\softO(\OMul{n})$ 
	operations in the base field of the cost bound $\OMul{n}$, if
\begin{itemize}
  \item the 2D vector remainder interpolation is implemented using Algorithm~\ref{alg:bivariate_interpolation} in Section~\ref{ssec:bivariate_remainder_interpolation} with input $D=\lfloor\tfrac{n-k}{2}\rfloor+k$, $\w = [0,k-1]$, and $\{(\beta_i,r_i)\}_{i=0}^n$;
\item the univariate remainder interpolation and remainder annihilator computation inside Algorithm~\ref{alg:bivariate_interpolation} are implemented using the algorithms implied by Theorems~\ref{thm:fast_remainder_MSP} and \ref{thm:fast_remainder_interpolation} in Section~\ref{ssec:fast_remainder_ev_arith};
\item and, if the output should be the transmitted codeword instead of the message polynomial, the re-encoding is implemented using the fast multi-point evaluation algorithm implied by Theorem~\ref{thm:fast_remainder_MPE} in Section~\ref{ssec:fast_remainder_ev_arith}.
\end{itemize}
Decoding a linearized Reed--Solomon code can be done in the same cost through the isometry $\varphi_{\Bset,\v}$.
\end{theorem}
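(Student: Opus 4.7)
The plan is a direct assembly of the complexity bounds for each of the three phases of the decoder, all of which are $\softO(\OMul{n})$ in the base field of the cost bound $\OMul{n}$. Correctness is not at issue: it follows verbatim from \cite[Proposition~4]{martinez2019reliable} applied to the output of the 2D vector remainder interpolation, together with the standard fact that once a valid $[Q_0,Q_1]$ is obtained, the message polynomial $f$ is recovered by the left division $f = -Q_1 \backslash Q_0$ (or, equivalently, by any $f$ such that $Q_1 f = -Q_0$).

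The first step I would address is the interpolation step. Feeding $D = \lfloor (n-k)/2 \rfloor + k$, $\w = [0,k-1]$, and $\{(\beta_i,r_i)\}_{i=1}^n$ into Algorithm~\ref{alg:bivariate_interpolation}, Theorem~\ref{thm:bivariate_remainder_interpolation_summary} yields a solution in $\softO(\OMul{n})$ operations, \emph{provided} that the internal calls to remainder annihilator computation and remainder interpolation are themselves of that cost; this is where Theorems~\ref{thm:fast_remainder_MSP} and \ref{thm:fast_remainder_interpolation} come in and close the gap to the previously-known $O(n^2)$ bounds.

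Second, the division step: the polynomials $-Q_0$ and $Q_1$ produced above have degree at most $D < n$, so computing $f = -Q_1 \backslash Q_0 \in \SkewPolys_{<k}$ is a single left division of two skew polynomials of degree $\leq n$, which by the results summarized in Section~\ref{ssec:cost_skew_poly_operations} takes $\softO(\OMul{n})$. Third, if re-encoding is required, we need $\remev{f}{\beta_1},\dots,\remev{f}{\beta_n}$, i.e.~a multi-point remainder evaluation at $n$ points of a polynomial of degree $<k \leq n$; Theorem~\ref{thm:fast_remainder_MPE} gives this in $\softO(\OMul{n})$ as well. Summing the three $\softO(\OMul{n})$ contributions proves the claim for skew Reed--Solomon codes.

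Finally, the linearized Reed--Solomon statement is almost immediate. By Theorem~\ref{thm:sum-rank_relation_skew_and_sum-rank_metric}, the isometry $\varphi_{\Bset,\v}$ and its inverse act coordinatewise as multiplication by fixed nonzero scalars, so they can be applied to the received word and (if desired) the re-encoded codeword in $O(n)$ multiplications in $\Fqm$, which is absorbed into $\softO(\OMul{n})$. I do not foresee a genuine obstacle here; the only minor care point is verifying that the degree bounds on the intermediate objects (in particular $\deg Q_0, \deg Q_1 < n$, used to charge the division to $\softO(\OMul{n})$ rather than some larger quantity) are consistent with the parameter choices, which is routine from the degree constraint~\eqref{eq:bivariate_remaider_interpolation_deg} with $D \in \Theta(n)$.
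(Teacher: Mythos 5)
Your proposal is correct and follows essentially the same route as the paper's own (very terse) proof, which simply cites Proposition~4 of the Mart\'inez-Pe\~nas--Kschischang paper together with Theorems~\ref{thm:fast_remainder_MSP}, \ref{thm:fast_remainder_MPE}, \ref{thm:fast_remainder_interpolation}, and \ref{thm:bivariate_remainder_interpolation_summary}. You merely spell out the assembly in more detail --- including the $\softO(\OMul{n})$ left division and the $O(n)$ cost of the isometry --- all of which is consistent with the paper's summary preceding the theorem.
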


\begin{proof}
The statement follows from \cite[Proposition~4]{martinez2019reliable} (see summary above) and Theorems~\ref{thm:fast_remainder_MSP}, \ref{thm:fast_remainder_MPE}, \ref{thm:fast_remainder_interpolation}, and \ref{thm:bivariate_remainder_interpolation_summary} (see next subsections).
\end{proof}

\subsection{New Algorithms for Operations with Remainder Evaluation}
\label{ssec:fast_remainder_ev_arith}

We present fast algorithms to solve Problem~\ref{prob:remainder_arithmetic}: computing annihilators, multi-point evaluation, and remainder interpolation.
The methods are similar to corresponding algorithms for the operator evaluation in \cite[Lemma 3.3]{caruso2017fast} (annihilator) and \cite[Sections~3.4 and 3.5]{puchinger2017fast} (multi-point evaluation and interpolation), which are in turn non-commutative adaptations of well-known algorithms over ordinary polynomial rings (see, e.g., \cite{von2013modern}).

\begin{theorem}[Fast remainder annihilator polynomial computation]\label{thm:fast_remainder_MSP}
Let $\Bset = \{\beta_1,\dots,\beta_n\}$ be P-independent.
Then $\MSPrem_{\Bset}$ can be computed in $\softO(\OMul{n})$ operations in the base field of the cost bound $\OMul{n}$.
\end{theorem}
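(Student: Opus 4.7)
The plan is a divide-and-conquer argument on $\Bset$, mirroring the fast operator-evaluation annihilator algorithms of \cite{caruso2017fast,puchinger2017fast} but driven by the product rule for the remainder evaluation. I would split $\Bset$ into disjoint halves $\Bset_L, \Bset_R$ of size $n/2$, set $G_L := \MSPrem_{\Bset_L}$, and for every $\beta \in \Bset_R$ put $c_\beta := \remev{G_L}{\beta}$; by P-independence of $\Bset$ none of the $c_\beta$ vanish. Writing $\Bset_R' := \{ \sigma(c_\beta)\beta / c_\beta : \beta \in \Bset_R \}$ and $H := \MSPrem_{\Bset_R'}$, the merging identity to establish is
\begin{equation*}
\MSPrem_{\Bset} = H \cdot G_L.
\end{equation*}
The verification is immediate from the product rule: for $\beta \in \Bset_L$ we have $\remev{G_L}{\beta}=0$ and hence $\remev{(H G_L)}{\beta}=0$; for $\beta \in \Bset_R$ the product rule gives $\remev{(H G_L)}{\beta} = \remev{H}{\sigma(c_\beta)\beta/c_\beta}\cdot c_\beta = 0$ because $H$ annihilates $\Bset_R'$. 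Both sides are monic of degree $n = \Prk(\Bset)$, so they must agree.

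Based on this identity, the recursive algorithm computes $G_L$ on $\Bset_L$, multi-point remainder evaluates $G_L$ at the $n/2$ points of $\Bset_R$ to form the conjugates and hence $\Bset_R'$, recursively computes $H$ on $\Bset_R'$, and returns $H \cdot G_L$. The final skew-polynomial multiplication costs $\OMul{n}$ and forming the conjugates costs $O(n)$ base-field operations.

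The main obstacle is that the multi-point evaluation sub-step is itself nontrivial and is the content of Theorem~\ref{thm:fast_remainder_MPE}, so the two results are best proven together. My plan is to augment the recursive annihilator procedure with a subproduct-tree construction whose internal nodes store the annihilator of the leaves below them; MPE is then realized by the classical top-down traversal that right-divides the evaluand by each stored annihilator, justified by $\remev{f}{\beta} = \remev{(f \remr G)}{\beta}$ whenever $\remev{G}{\beta}=0$ (another immediate consequence of the product rule). With the tree available, MPE of a degree-$n$ skew polynomial at $n$ points runs in $\softO(\OMul{n})$, and the joint recurrence $T(n) = O(T(n/2)) + \softO(\OMul{n})$ telescopes via the master theorem, using the convexity of $\OMul{\cdot}$, to the desired $\softO(\OMul{n})$ cost. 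The delicate technical point is the extra recursive call on the conjugated set $\Bset_R'$, which is not part of the parent subproduct tree and must be processed from scratch; keeping the per-level work proportional (up to logarithmic factors) to $\OMul{n}$ requires the standard telescoping over all $\log n$ tree levels.
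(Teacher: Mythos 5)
Your merging identity $\MSPrem_{\Bset} = \MSPrem_{\Bset_R'}\cdot \MSPrem_{\Bset_L}$ is correct, and it is the same conjugation trick the paper does use --- but for the remainder \emph{interpolation} polynomial (Theorem~\ref{thm:fast_remainder_interpolation}), not for the annihilator. (A small logical point: you cannot assert up front that both sides are monic of degree $n$; rather, $\MSPrem_{\Bset_R'}\MSPrem_{\Bset_L}$ vanishes on $\Bset$, hence is right-divisible by $\MSPrem_{\Bset}$ of degree $n$, and since its degree is at most $n$ the two must coincide --- this is also what proves the P-independence of $\Bset_R'$ needed for your recursive call to be well-posed.) The real problem is the complexity of the mutual recursion you set up between the annihilator and multi-point evaluation.

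Count the half-size subproblems per merge: (i) the tree over $\Bset_L$; (ii) the tree over $\Bset_R$, which stores the \emph{unconjugated} annihilators and is indispensable for running the MPE of $\MSPrem_{\Bset_L}$ at the points of $\Bset_R$; and (iii) the annihilator of the fresh set $\Bset_R'$, which, as you note, belongs to no existing tree and whose computation spawns its own conjugated sub-calls one level down. That is three half-size recursive calls, so the recursion tree has $3^k$ nodes at depth $k$ and the total cost is $\Omega(n^{\log_2 3})$ even if all non-recursive work were linear; your recurrence ``$T(n)=O(T(n/2))+\softO(\OMul{n})$'' hides this constant, and the master theorem only returns $\softO(\OMul{n})$ when the constant is at most $2$. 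Since one of the two best-known bounds is $\OMul{n}\in\softO(nm^{\omega-1})$ operations in $\Fq$, i.e.\ quasi-linear in $n$, $n^{\log_2 3}$ is not $\softO(\OMul{n})$ and the claimed complexity fails for that cost model (it would coincidentally survive for the $O(n^{1.635})$ bound over $\Fqm$, but the theorem must hold generically). The paper breaks the circularity by merging with the left least common multiple instead: for disjoint $\Bset_1,\Bset_2$ one has $\MSPrem_{\Bset_1\cup\Bset_2}=\llcm\big(\MSPrem_{\Bset_1},\MSPrem_{\Bset_2}\big)$, and the llcm of two skew polynomials of degree at most $n$ costs $\softO(\OMul{n})$ by \cite[Theorem~3.2.7]{caruso2017new}. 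This needs no multi-point evaluation at all, yields exactly two half-size calls ($C(n)=2C(\lceil n/2\rceil)+\softO(\OMul{n})$), and lets MPE be built on top of the annihilator afterwards rather than underneath it.
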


\begin{proof}
Recall that the llcm of two skew polynomials $f,g \in \SkewPolys$ is the unique monic skew polynomial $\llcm(f,g) \in \SkewPolys \setminus \{0\}$ of smallest degree such that there are polynomials $\chi_1,\chi_2 \in \SkewPolys$ with $\chi_1 f = \chi_2 g = \llcm(f,g)$.
Note that we have $\deg \llcm(f,g) \leq \deg f + \deg g$.

Observe that if $\Bset_1, \Bset_2 \subset \Fqm$ are disjoint, then $\llcm\left( \MSPrem_{\Bset_1} , \, \MSPrem_{\Bset_2}  \right)$ is the least-degree monic polynomial in both the left ideal spanned by $\MSPrem_{\Bset_1}$ and by $\MSPrem_{\Bset_2}$, which must therefore be $\MSPrem_{\Bset_1 \cup \Bset_2}$.
Furthermore, it is easy to see that $\MSPrem_{\{\beta\}} = x-\beta$ for any $\beta \in \Fqm$.
Recursively subdividing the initial $\Bset$ in disjoint subsets and structuring this this as a divide-\&-conquer computation, the complexity $C(n)$ of computing $\MSPrem_{\Bset}$ as a function of $n$ obeys $C(1) = O(1)$ and the recursion $C(n) = L(n) + 2C(\lceil n/2 \rceil), n > 1$, where $L(n)$ denotes the cost of computing the $\llcm$ of two skew polynomials of degree at most $n$.
By \cite[Theorem~3.2.7]{caruso2017new} $L(n) \subseteq \softO(\OMul{n})$ operations in the base field of the cost bound $\OMul{n}$, so by the master theorem, $C(n)$ is in the claimed complexity.
\end{proof}

\begin{theorem}[Fast multi-point evaluation]\label{thm:fast_remainder_MPE}
Let $\Bset = \{\beta_1,\dots,\beta_n\}$ be P-independent and $f \in \SkewPolys$ with $\deg f \leq n$.
Then, $[\remev{f}{\beta_1},\dots,\remev{f}{\beta_n}]$ can be computed in $\softO(\OMul{n})$ operations in the base field of the cost bound $\OMul{n}$.
\end{theorem}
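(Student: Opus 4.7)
The plan is to use the classical divide-and-conquer ``subproduct tree'' strategy, adapted from the commutative case and from the operator-evaluation analogue in \cite[Section~3.4]{puchinger2017fast}. The key algebraic ingredient is that for any $g \in \SkewPolys$ and $\alpha \in \Fqm$ with $\remev{g}{\alpha}=0$, we have $\remev{f}{\alpha} = \remev{(f \remr g)}{\alpha}$. Indeed, writing $f = \chi \cdot g + r$ with $r := f \remr g$, the product rule for remainder evaluation recalled in Section~\ref{ssec:evaluation_maps} gives $\remev{(\chi g)}{\alpha} = 0$ as soon as $\remev{g}{\alpha}=0$, so $\remev{f}{\alpha} = \remev{r}{\alpha}$. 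In particular, taking $g = \MSPrem_{\Bset'}$ for a $P$-independent subset $\Bset' \subseteq \Bset$, we may replace $f$ by $f \remr \MSPrem_{\Bset'}$ when evaluating at any $\beta \in \Bset'$.

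First I would split $\Bset$ into two halves $\Bset_1 = \{\beta_1,\dots,\beta_{\lceil n/2 \rceil}\}$ and $\Bset_2 = \{\beta_{\lceil n/2\rceil + 1},\dots,\beta_n\}$, each of which is $P$-independent (subsets of $P$-independent sets are $P$-independent). I would precompute the entire subproduct tree of annihilator polynomials $\MSPrem_{\Bset'}$ for all the dyadic subsets $\Bset'$ that arise in the recursion. By the same bottom-up $\llcm$-combination used in the proof of Theorem~\ref{thm:fast_remainder_MSP} and the bound $\deg \MSPrem_{\Bset'} = |\Bset'|$, the total cost of building this tree satisfies the recurrence $T(n) = 2T(\lceil n/2\rceil) + \softO(\OMul{n})$ and thus lies in $\softO(\OMul{n})$.

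Then I would perform the evaluation in a top-down pass on the same tree: at a node with set $\Bset' = \Bset_1' \cup \Bset_2'$ and ``current'' polynomial $f'$ (with $\deg f' \leq |\Bset'|$), compute $f_i' := f' \remr \MSPrem_{\Bset_i'}$ for $i=1,2$ using fast right division, and pass $f_i'$ down to the child for $\Bset_i'$. Each $f_i'$ has degree $< |\Bset_i'|$, so the recursion is well-sized, and the leaves (sets of size one) return $\remev{f'}{\beta} = f' \remr (x-\beta)$ directly. The cost at each node is one right division of degree $O(|\Bset'|)$, which is in $\softO(\OMul{|\Bset'|})$ by \cite{caruso2017new,puchinger2017fast}, so the total cost satisfies the same recurrence as above and is in $\softO(\OMul{n})$.

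The only subtlety, and hence the only real obstacle, is the correctness of the recursive step: after substituting $f$ by $f \remr \MSPrem_{\Bset_i'}$ we must know that evaluations on $\Bset_i'$ are preserved. This is exactly what the identity $\remev{f}{\alpha} = \remev{(f \remr g)}{\alpha}$ (valid whenever $\remev{g}{\alpha}=0$) provides, applied inductively along the tree. Complexity then follows from the master theorem applied to $T(n) = 2T(\lceil n/2 \rceil) + \softO(\OMul{n})$, using that $\OMul{n}/n$ is nondecreasing.
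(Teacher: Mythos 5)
Your proposal is correct and follows essentially the same divide-and-conquer route as the paper: split $\Bset$ in half, reduce $f$ right-modulo the annihilator of each half (justified by the product rule for remainder evaluation, exactly as in the paper's observation that $f - f_j$ is right-divisible by $\MSPrem_{\Bset_j}$), recurse, and apply the master theorem. The only cosmetic difference is that you precompute the full subproduct tree of annihilators up front, whereas the paper computes them on the fly inside each recursive call; the asymptotic cost is identical.
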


\begin{proof}
  Let $\Bset = \Bset_1 \sqcup \Bset_2$ be a partition of $\Bset$, and define
\begin{align*}
f_1 &:= f \remr \MSPrem_{\Bset_1}, \\
f_2 &:= f \remr \MSPrem_{\Bset_2}.
\end{align*}
Then for any $\beta \in \Bset$:
\begin{align*}
\remev{f}{\beta} =
\begin{cases}
\remev{f_1}{\beta}, &\text{if } \beta \in \Bset_1 \\
\remev{f_2}{\beta}, &\text{if }  \beta \in \Bset_2 \ .
\end{cases}
\end{align*}
Indeed for $j=1,2$, the polynomial $f-f_j$ is right-divisible by $\MSPrem_{\Bset_{j}}$ and hence $\remev{(f-f_j)}{\beta}= 0$ for $\beta \in \Bset_j$.

Thus, if we split $\Bset$ in two parts of size $\leq n' := \lceil n/2 \rceil$, we can evaluate at each $\beta \in \Bset$ by computing two remainder annihilator polynomials of degree $n'$, two right divisions of degree $n$, followed by two  recursive multi-point evaluations of polynomials of degree at most $n'$ in as many points.
In the base case, we evaluate a polynomial of degree $\leq 1$ at one point, which costs $O(1)$.
By Theorem~\ref{thm:fast_remainder_MSP} and \cite[Section~3.2.1]{caruso2017new} both the annihilator computations and divisions can be performed in $\softO(\OMul{n})$ operations in the base field of the cost bound $\OMul{n}$, and we obtain the claimed complexity using the master theorem.
\end{proof}

\begin{theorem}\label{thm:fast_remainder_interpolation}
Let $\Bset = \{\beta_1,\dots,\beta_n\}$ be P-independent and $\r \in \Fqm^n$.
Then the interpolation polynomial $\IPrem{\Bset, \vec r} \in \SkewPolys_{<n}$ can be computed in $\softO(\OMul{n})$ operations in the base field of the cost bound $\OMul{n}$.
\end{theorem}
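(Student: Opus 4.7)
The plan is to use a divide-and-conquer strategy analogous to fast Lagrange interpolation over ordinary polynomials and to Theorem~\ref{thm:fast_remainder_MPE}, adapted to the non-commutative product rule for remainder evaluation. First I would split $\Bset = \Bset_1 \sqcup \Bset_2$ with sizes $n_j \leq \lceil n/2 \rceil$, partition $\vec r$ accordingly, and recursively compute $f_1 := \IPrem{\Bset_1, \vec r^{(1)}}$; then I would compute $M_1 := \MSPrem_{\Bset_1}$ via Theorem~\ref{thm:fast_remainder_MSP} in $\softO(\OMul{n})$ operations.

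The key idea is to seek $f$ in the form $f = f_1 + g \cdot M_1$ with $\deg g < n_2$. For any $g$, the product rule combined with $\remev{M_1}{\beta} = 0$ for $\beta \in \Bset_1$ gives $\remev{(g \cdot M_1)}{\beta} = 0$, so the interpolation conditions on $\Bset_1$ are automatically preserved. For $\beta_j \in \Bset_2$, P-independence of $\Bset$ forces $c_j := \remev{M_1}{\beta_j} \neq 0$, since otherwise $\beta_j \in \bar{\Bset}_1 \subseteq \overline{\Bset \setminus \{\beta_j\}}$, contradicting P-independence. The product rule then reads $\remev{(g \cdot M_1)}{\beta_j} = \remev{g}{\beta_j'} \cdot c_j$ with the conjugate point $\beta_j' := \sigma(c_j)\beta_j/c_j$, so writing $\rho_j := \remev{f_1}{\beta_j}$, the condition $\remev{f}{\beta_j} = r_j$ becomes $\remev{g}{\beta_j'} = (r_j - \rho_j)/c_j$. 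I would obtain all $c_j$ and $\rho_j$ by two multi-point evaluations via Theorem~\ref{thm:fast_remainder_MPE} at total cost $\softO(\OMul{n})$, form the new interpolation data $(\Bset_2', \vec{r}\,')$ in $O(n)$ operations, recover $g$ by one recursive call of size $n_2$, and set $f = f_1 + g \cdot M_1$ via one skew product at cost $\OMul{n}$.

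The main technical obstacle is verifying that $\Bset_2' := \{\beta_j'\}_{\beta_j \in \Bset_2}$ is P-independent of size $n_2$, so that the recursive call is well-posed and returns a $g$ of the correct degree. Since $\MSPrem_\Bset$ vanishes on $\Bset_1$, it lies in $\SkewPolys \cdot M_1$, so $\MSPrem_\Bset = P \cdot M_1$ for a unique monic $P \in \SkewPolys$ of degree $n_2$. A direct application of the product rule shows that for every $h \in \SkewPolys$, $h \in I(\Bset_2')$ iff $h \cdot M_1$ vanishes on all of $\Bset$ (equivalently $h \cdot M_1 \in \SkewPolys \cdot \MSPrem_\Bset = \SkewPolys \cdot P \cdot M_1$), iff $h \in \SkewPolys \cdot P$ by right-cancellation in the integral domain $\SkewPolys$. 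Hence $\MSPrem_{\Bset_2'} = P$, giving $\Prk(\bar{\Bset}_2') = \deg P = n_2 = |\Bset_2'|$, which proves P-independence of $\Bset_2'$.

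Consequently the overall cost satisfies $T(n) \leq 2 T(\lceil n/2 \rceil) + \softO(\OMul{n})$ (one recursive call each for $f_1$ and $g$, plus the combine step), giving $T(n) \in \softO(\OMul{n})$ operations in the base field of the cost bound $\OMul{n}$ by the master theorem.
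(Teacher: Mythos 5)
Your proof is correct, but it takes a genuinely different route from the paper's. The paper uses a symmetric, Lagrange-style splitting: it writes $\IPrem{\Bset,\r} = \IPrem{\tilde\Bset_1,\tilde\r_1}\,\MSPrem_{\Bset_2} + \IPrem{\tilde\Bset_2,\tilde\r_2}\,\MSPrem_{\Bset_1}$, where \emph{both} halves of the point set are conjugated and both recursive interpolation calls are independent of each other; P-independence of the conjugated sets is established by a degree lower bound on $\MSPrem_{\tilde\Bset_1}\cdot\MSPrem_{\Bset_2}$ via right-divisibility by $\MSPrem_\Bset$. You instead use a Newton-style incremental ansatz $f = f_1 + g\cdot \MSPrem_{\Bset_1}$, in which only the second half of the points is conjugated and the data for the second recursive call depends on the output of the first (through the residuals $r_j - \remev{f_1}{\beta_j}$). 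Your P-independence argument is also slightly sharper: rather than only bounding $\deg \MSPrem_{\Bset_2'}$ from below, you identify $I(\Bset_2') = \SkewPolys\cdot P$ exactly, where $\MSPrem_\Bset = P\cdot\MSPrem_{\Bset_1}$, using right-cancellation in the domain $\SkewPolys$. Both approaches yield the recurrence $T(n) \leq 2T(\lceil n/2\rceil) + \softO(\OMul{n})$ and hence the same complexity; the paper's symmetric version has the minor advantage that its two recursive calls could run in parallel, while yours avoids one of the two conjugation/rescaling passes. One small point to make explicit: the multi-point evaluations of $f_1$ and $\MSPrem_{\Bset_1}$ at $\Bset_2$ invoke Theorem~\ref{thm:fast_remainder_MPE}, which requires the evaluation set to be P-independent — this holds because any subset of a P-independent set is P-independent.
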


\begin{proof}
  Let $n' = \lceil n/2\rceil$, and $I = \{ 1, \ldots, n' \}$, and $J = \{ n'+1,\ldots,n\}$ and set
  $\Bset_1 := \{\beta_i\}_{i \in I}$ and $\Bset_2 :=\{\beta_i\}_{i \in J}$.
  We claim the identity:
\begin{align*}
\IPrem{\Bset, \r} &= \IPrem{\tilde\Bset_1, \tilde\r_1} \MSPrem_{\Bset_2} + \IPrem{\tilde\Bset_2, \tilde\r_2} \MSPrem_{\Bset_1},
\end{align*}
where
\begin{align*}
  \tilde\Bset_1 &= \left\{ \tfrac{\sigma\!\left(\remev{\MSPrem_{\Bset_2}}{\beta}\right) \beta}
                            {\remev{\MSPrem_{\Bset_2}}{\beta}}
                      \mid \beta \in \Bset_1 \right\} \\
  \tilde\Bset_2 &= \left\{ \tfrac{\sigma\!\left(\remev{\MSPrem_{\Bset_1}}{\beta}\right) \beta}
                            {\remev{\MSPrem_{\Bset_1}}{\beta}}
                      \mid \beta \in \Bset_2 \right\} \\
  \tilde\r_1 &= \big( \tfrac{r_i}{\remev{\MSPrem_{\Bset_2}}{\beta_i}} \big)_{i \in I} \\
  \tilde\r_2 &= \big( \tfrac{r_i}{\remev{\MSPrem_{\Bset_1}}{\beta_i}} \big)_{i \in J} \ .
\end{align*}
Indeed: the right-hand side clearly has degree less than $n$ and remainder-evaluates to $r_i$ at $\beta_i$ for each $i \in \{1,\ldots,n\}$.
Note that the P-independence of $\Bset$ implies $\remev{\MSPrem_{\Bset_j}}{\beta_i} \neq 0$, so the $\tilde{\beta}_i$ and $\tilde{r}_i$ are well-defined.
Furthermore, $\tilde\Bset_1$ is P-independent by the following argument.
It follows from the product rule of remainder evaluation that the monic polynomial
\begin{equation*}
\MSPrem_{\tilde\Bset_1} \cdot \MSPrem_{\Bset_2}
\end{equation*}
vanishes on $\Bset$.
Hence, it must be must be right-divisible by $\MSPrem_\Bset$, which has degree $n$ by the P-independence of $\Bset$.
This implies $\deg \MSPrem_{\tilde\Bset_1} \geq |\tilde\Bset_1|$ which implies the P-independence of $\tilde\Bset_1$.
Mutadis mutandis, $\tilde\Bset_2$ is also P-independent, and the interpolation polynomials $\IPrem{\tilde\Bset_1,\tilde\r_1}$ and $\IPrem{\tilde\Bset_2,\tilde\r_2}$ are therefore well-defined.

Hence, we may compute $\IPrem{\Bset, \r}$ by computing two remainder annihilator polynomials of size $n'$, two multi-point evaluations of polynomials of degree at most $n'$ on $n'$ points, and recursively two interpolations on $n'$ points.
For the base case, we have $\IPrem{\beta, r} = (x-\beta) + r$ for any $\beta \in \Fqm^*$ and $r \in \Fqm$.
By Theorems~\ref{thm:fast_remainder_MSP} and \ref{thm:fast_remainder_MPE} and the master theorem, we obtain the desired complexity.
\end{proof}

\subsection{A New Algorithm for the 2D Vector Interpolation Problem}
\label{ssec:bivariate_remainder_interpolation}

The following statements reduce Problem~\ref{prob:bivariate_remainder_interpolation} (2D vector remainder interpolation) to computing a left approximant basis.
This will lead to a faster algorithm to solve the problem.

\begin{lemma}\label{lem:skew_metric_equvialent_problems}
Consider an instance of Problem~\ref{prob:bivariate_remainder_interpolation} and let $R := \IPrem{\Bset,\vec r}$ and $G := \MSPrem_\Bset$. Then, Condition \eqref{eq:bivariate_remaider_interpolation_eval} in Problem~\ref{prob:bivariate_remainder_interpolation} 
is equivalent to
\begin{align}
Q_0+Q_1 R \equiv 0 \quad \modr G. \label{eq:skew_RS_pade_approx_new_congruence}
\end{align}
\end{lemma}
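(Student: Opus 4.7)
The plan is to use the two basic properties of the remainder evaluation recalled at the start of Section~\ref{sec:sum-rank}---additivity and the product rule---together with the defining property of $G = \MSPrem_\Bset$ as a generator of the principal left ideal $I(\Bset) \subseteq \SkewPolys$. Throughout, I would treat the $b_i$ in \eqref{eq:bivariate_remaider_interpolation_eval} as $\beta_i$ (matching the surrounding notation).

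First I would invoke the sum rule $\remev{(f+g)}{\alpha} = \remev{f}{\alpha} + \remev{g}{\alpha}$ to rewrite the pointwise condition \eqref{eq:bivariate_remaider_interpolation_eval} as the single statement
\[
\remev{(Q_0 + Q_1 R)}{\beta_i} = 0 \quad \forall\, i=1,\dots,n,
\]
which by definition means $Q_0 + Q_1 R \in I(\Bset)$. Next I would observe, via the product rule---specifically its trivial branch $\remev{(hf)}{\alpha} = 0$ whenever $\remev{f}{\alpha} = 0$---that $I(\Bset)$ is closed under left multiplication, hence is a left $\SkewPolys$-ideal. Since $\SkewPolys$ is a left principal ideal domain, this ideal is generated by its unique monic least-degree element, which is exactly $G = \MSPrem_\Bset$ by definition. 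Hence $Q_0 + Q_1 R \in I(\Bset)$ if and only if $Q_0 + Q_1 R = \chi G$ for some $\chi \in \SkewPolys$, that is, $G$ right-divides $Q_0 + Q_1 R$; equivalently $(Q_0 + Q_1 R) \remr G = 0$, which is by definition $Q_0 + Q_1 R \equiv 0 \modr G$.

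For the converse direction (which is needed to close the ``iff''), I would unwind the same chain of equivalences: starting from $Q_0 + Q_1 R = \chi G$, evaluating at any $\beta_i \in \Bset$ and using $\remev{G}{\beta_i} = 0$ together with the product rule gives $\remev{(\chi G)}{\beta_i} = 0$, and then additivity recovers \eqref{eq:bivariate_remaider_interpolation_eval}. No step here is technically deep; the only point worth being careful about is resisting the temptation to write $\remev{(Q_1 R)}{\beta_i}$ as a product of evaluations (which is false in the non-commutative setting), since we only ever need the one-sided implication that a right factor vanishing at $\beta_i$ forces the whole product to vanish there.
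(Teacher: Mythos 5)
Your proposal is correct and follows essentially the same route as the paper: one direction uses the vanishing branch of the product rule ($\remev{G}{\beta_i}=0$ forces $\remev{(\chi G)}{\beta_i}=0$) together with additivity, and the other uses that $G=\MSPrem_\Bset$ generates the left ideal $I(\Bset)$ so that membership in $I(\Bset)$ is equivalent to right-divisibility by $G$. Your explicit warning against factoring $\remev{(Q_1R)}{\beta_i}$ into a product of evaluations is a sensible precaution but not a point the paper needs to belabor.
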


\begin{proof}
First note that $Q_0+Q_1 R \equiv 0 \modr G$ if and only if
\begin{equation*}
\exists \, \chi \in \SkewPolys \, : \, Q_0+Q_1 R = \chi G.
\end{equation*}
Due to $G[b_i] = 0$, we have for all $i=1,\dots,n$
\begin{equation*}
(Q_0 + Q_1 R)[b_i] = (\chi G)[b_i] = Q[b_i] +\underbrace{(\chi G)[b_i]}_{= \, 0} = 0,
\end{equation*}
so \eqref{eq:skew_RS_pade_approx_new_congruence} implies \eqref{eq:bivariate_remaider_interpolation_eval}.
For the other direction, we note that due to $(Q_0 + Q_1 R)]b_i] = 0$ for all $i$, we have $Q_0 + Q_1 R \in I(\Bset)$. Since $G$ generates the left ideal $I(\Bset)$, there must be a polynomial $\chi \in \SkewPolys$ with $Q_0 + Q_1 R = \chi G$.
\end{proof}

\begin{lemma}\label{lem:skew_RS_order_basis}
  Consider an instance of Problem~\ref{prob:bivariate_remainder_interpolation} and let $R := \IPrem{\Bset, \r} \in \SkewPolys$ and $G := \MSPrem_{\Bset}$.
Let $\s = [s_1,s_2,s_3] := [w_1,w_2,\min\{w_1,w_2\}]$, and $d = D + n - \min\{w_1,w_2\}$, as well as
\begin{align*}
\MABin = \begin{bmatrix}
1 \\
R \\
G
\end{bmatrix}.
\end{align*}
Let $\MABout$ be a left \MABnameFullStandard. %
Then Problem~\ref{prob:bivariate_remainder_interpolation} has a solution if and only if $\MABout$ contains at least one row of $\s$-shifted degree at most $D-1$.
Furthermore, for any such row $\v = [v_1,v_2,v_3]$, then
$[Q_0,Q_1] := [v_1,v_2]$
is a solution of Problem~\ref{prob:bivariate_remainder_interpolation}.
\end{lemma}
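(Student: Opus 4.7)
The plan is to use Lemma~\ref{lem:skew_metric_equvialent_problems} to translate the interpolation condition \eqref{eq:bivariate_remaider_interpolation_eval} into an equation involving $G$, so that a solution $[Q_0,Q_1]$ can be lifted to a length-$3$ vector in the left null-space of $\MABin$, which is an $x^d$-approximant in a trivial way. Concretely, by Lemma~\ref{lem:skew_metric_equvialent_problems}, a pair $[Q_0,Q_1]$ satisfies \eqref{eq:bivariate_remaider_interpolation_eval} if and only if there exists $\chi\in\SkewPolys$ with $Q_0 + Q_1 R - \chi G = 0$. Setting $\v := [Q_0,Q_1,-\chi]$ this is exactly $\v\MABin=0$, and in particular $\v\MABin\equiv\0\modr x^d$.

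Next I would verify the degree bookkeeping in both directions. For the forward direction: assume $[Q_0,Q_1]$ solves Problem~\ref{prob:bivariate_remainder_interpolation}, so $\deg Q_i < D - w_i$. From $\chi G = Q_0 + Q_1 R$ and $\deg R < n = \deg G$, one obtains $\deg \chi < \max\{D-w_1-n,\, D-w_2\} \leq D - \min\{w_1,w_2\} = D - s_3$, so $\rdeg_\s \v < D$. Hence $\v$ lies in the left row space of $\MABout$; writing $\v = \lambdaVec\MABout$ and applying the predictable degree property (Lemma~\ref{lem:minimality_column_wpf}, row case) with shift $\t = \rdeg_\s\MABout$, we get $\rdeg_\t\lambdaVec = \rdeg_\s\v < D$, which forces $t_i < D$ for every index $i$ with $\lambda_i \ne 0$. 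This produces the desired row of $\MABout$ of $\s$-shifted degree $\leq D-1$.

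For the converse, let $\v = [v_1,v_2,v_3]$ be a row of $\MABout$ with $\rdeg_\s\v \leq D-1$. The key observation is that the polynomial $v_1 + v_2 R + v_3 G$ has degree strictly less than $d$: indeed $\deg v_1 < D - w_1 \leq d$, $\deg(v_2 R) < (D-w_2) + n \leq d$, and $\deg(v_3 G) < (D - \min\{w_1,w_2\}) + n = d$. Since simultaneously $\v\MABin \equiv 0 \modr x^d$ and its degree is strictly smaller than $d$, the polynomial $v_1 + v_2 R + v_3 G$ must be identically zero. This gives $v_1 + v_2 R \equiv 0 \modr G$, so by Lemma~\ref{lem:skew_metric_equvialent_problems} the pair $[v_1,v_2]$ satisfies \eqref{eq:bivariate_remaider_interpolation_eval}, and clearly $\rdeg_\w[v_1,v_2] \leq \rdeg_\s\v < D$, yielding \eqref{eq:bivariate_remaider_interpolation_deg}. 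Finally, $[v_1,v_2]$ cannot be zero, since otherwise $v_3 G = 0$ forces $v_3 = 0$, contradicting that $\v$ is a nonzero row of the full-rank matrix $\MABout$.

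The main subtlety is the double-bookkeeping on the extra coordinate $v_3$: the shift component $s_3 = \min\{w_1,w_2\}$ and the order $d = D + n - \min\{w_1,w_2\}$ must be chosen to be jointly tight enough that ``$\v\MABin \equiv 0 \modr x^d$ and $\rdeg_\s \v < D$'' forces the exact identity $\v\MABin = 0$, and loose enough that every genuine solution lifts into the shifted band. Once one fixes the degree balance shown above, both directions fall out straightforwardly.
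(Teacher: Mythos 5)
Your proof is correct and follows essentially the same route as the paper's: reformulate condition \eqref{eq:bivariate_remaider_interpolation_eval} via Lemma~\ref{lem:skew_metric_equvialent_problems} as membership in the left kernel of $\MABin$, check that the shift $\s$ and order $d$ make ``kernel vector of $\s$-degree $<D$'' and ``order-$d$ approximant of $\s$-degree $<D$'' coincide, and then invoke the weak Popov property of $\MABout$ (you via the predictable degree property, the paper via the equivalent minimality of the $\s$-degrees of its rows). Your explicit verification that $[v_1,v_2]\neq\0$ is a small point the paper leaves implicit, and it is a welcome addition.
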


\begin{proof}
Due to Lemma~\ref{lem:skew_metric_equvialent_problems}, Condition~\eqref{eq:bivariate_remaider_interpolation_eval} in Problem~\ref{prob:bivariate_remainder_interpolation} is equivalent to \eqref{eq:skew_RS_pade_approx_new_congruence}. It is easy to see that some $Q_0,Q_1 \in \SkewPolys$ fulfill \eqref{eq:bivariate_remaider_interpolation_eval} if and only if there is a polynomial $\chi \in \SkewPolys$ with
\begin{align*}
Q_0 + Q_1 R + \chi G &= 0 \\
\Leftrightarrow \quad [Q_0,Q_1,\chi] \cdot \MABin &= 0.
\end{align*}
Hence, the $Q_0,Q_1$ fulfilling \eqref{eq:bivariate_remaider_interpolation_eval} correspond directly to the vectors $[Q_0,Q_1,\chi]$ in the left kernel of the matrix $\MABin$.
Furthermore, consider the shifted degree of such a $Q_0,Q_1$ which also satisfies the degree constraints of Problem~\ref{prob:bivariate_remainder_interpolation}:
\begin{align*}
\deg Q_0 + s_1 &< D, \\
\deg Q_1 + s_2 &< D, \\
\deg \chi + s_3 &= \deg (Q_0 + Q_1 R)+\min\{w_1,w_2\}-\deg G \\
&\leq \max\{ \deg Q_0,\deg Q_1 + n-1\}\\
&\quad -n+\min\{w_1,w_2\} <D \ .
\end{align*}
In other words, $\rdeg_\s [Q_0,Q_1,\chi] < D$.
Any vector $\v = [v_1,v_2,v_3] \in \SkewPolys^3$ with $\rdeg_\s \v <D$ fulfills
\begin{align*}
\deg\!\left( \v \cdot \MABin \right) < D + n-\min\{w_1,w_2\},
\end{align*}
so by the choice of $d$, any vector of this shifted degree is a left approximant of $\MABin$ of order $d$ if and only if it is in the left kernel of $\MABin$.

Hence, the solutions of Problem~\ref{prob:bivariate_remainder_interpolation} are exactly the first two entries of all non-zero left approximants of $\MABin$ of order $d$ with $\s$-shifted degree at most $D-1$. Since the rows of $\MABout$ are left approximants, any row of sufficiently small shifted degree is a solution of the problem.
Moreover, the problem has a solution if and only if the row space of $\MABout$ contains a row of sufficiently small $\s$-shifted degree. Since $\MABout$ is in $\s$-shifted weak Popov form, one of its rows has minimal $\s$-shifted degree among all vectors of the row space, i.e., at most $D-1$ if and only if the problem has a solution.
\end{proof}

Lemma~\ref{lem:skew_RS_order_basis} implies an algorithm to solve Problem~\ref{prob:bivariate_remainder_interpolation}, which we outline in Algorithm~\ref{alg:bivariate_interpolation}. We summarize its complexity in Theorem~\ref{thm:bivariate_remainder_interpolation_summary} below.

\begin{algorithm}[ht!]
\caption{Fast 2D Vector Remainder Interpolation}\label{alg:bivariate_interpolation}
\SetKwInOut{Input}{Input}\SetKwInOut{Output}{Output}
\Input{Instance of Problem~\ref{prob:bivariate_remainder_interpolation}: $\Bset = \{\beta_1,\ldots,\beta_n\} \subset \Fqm$ and P-independent, $D \in \ZZ_{> 0}$, $\w = [w_1,w_2] \in \ZZ_{\geq 0}^2$, and $\vec r \in \Fqm^n$.}

\Output{Solution $[Q_0,Q_1] \in \SkewPolys^2 \setminus \{\0\}$ if the problem has a solution, ``no solution'' otherwise.} 

\BlankLine

$G \gets \MSPrem_{\Bset}$
$R \gets \IPrem{\{(\beta_i,r_i)\}_{i=1}^{n}}$
$\s \gets [w_1,w_2,\min\{w_1,w_2\}]$ \\
$d \gets D + n - \min\{w_1,w_2\}$ \\
$\MABin \gets \begin{bmatrix}
1 \\
R \\
G
\end{bmatrix}$ \\
$\MABout \gets$ left \MABnameFullStandard \hfill \myAlgoComment{Algorithm~\ref{alg:leftSkewDaCAppBasis} in Section~\ref{sec:order_bases}}%
\If{$\MABout$ has a row $\v = [Q_0,Q_1,\chi]$ of $\rdeg_\s \v <D$}{
\Return{$[Q_0,Q_1]$}
} \Else {
\Return{``no solution''}
}
\end{algorithm}

\begin{theorem}\label{thm:bivariate_remainder_interpolation_summary}
Algorithm~\ref{alg:bivariate_interpolation} is correct. Assuming $D \in \Theta(n)$, it has complexity
\begin{align*}
\softO(\OMul{n})
\end{align*}
operations in the base field of the cost bound $\OMul{n}$.
\end{theorem}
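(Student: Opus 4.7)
My plan is to verify both claims of the theorem by essentially reading off the guarantees from the lemmas and theorems already established earlier in the paper; no fundamentally new argument is required.

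For \emph{correctness}, I would invoke Lemma~\ref{lem:skew_RS_order_basis} directly. The algorithm sets $G \gets \MSPrem_{\Bset}$, $R \gets \IPrem{\{(\beta_i,r_i)\}_{i=1}^n}$, and then constructs exactly the matrix $\MABin$, shift $\s$, and order $d$ appearing in the hypothesis of that lemma. The call to Algorithm~\ref{alg:leftSkewDaCAppBasis} produces a left $\s$-ordered weak Popov approximant basis $\MABout$ of $\MABin$ of order $d$ (correctness is guaranteed by Theorem~\ref{thm:correctness_DaCApp}). Lemma~\ref{lem:skew_RS_order_basis} then says that Problem~\ref{prob:bivariate_remainder_interpolation} has a solution iff some row of $\MABout$ has $\s$-shifted degree at most $D-1$, and in that case the first two entries of any such row form a valid solution $[Q_0,Q_1]$. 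This is precisely what the algorithm returns, so correctness follows.

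For the \emph{complexity}, I would bound each step by the already-known cost bounds. Computing the remainder annihilator polynomial $G = \MSPrem_{\Bset}$ of the P-independent set $\Bset$ of size $n$ costs $\softO(\OMul{n})$ by Theorem~\ref{thm:fast_remainder_MSP}, and computing the interpolation polynomial $R = \IPrem{\{(\beta_i,r_i)\}_{i=1}^n}$ costs $\softO(\OMul{n})$ by Theorem~\ref{thm:fast_remainder_interpolation}. The matrix $\MABin \in \SkewPolys^{3 \times 1}$ has degree at most $n$ (since $\deg G = n$, $\deg R < n$, and the other entry is a constant). Under the assumption $D \in \Theta(n)$, the order $d = D + n - \min\{w_1,w_2\} \leq D+n$ lies in $\Theta(n)$. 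Applying Theorem~\ref{thm:correctness_DaCApp} (left case) with $a = 3$, $b = 1$, and order in $\Theta(n)$ gives
\[
\softO\!\left(a^{\omega-1}\max\{a,b\}\,\OMul{d}\right) = \softO(\OMul{n})
\]
operations in the base field of the cost bound $\OMul{n}$. The final inspection of the $3$ rows of $\MABout$ for one of $\s$-shifted degree less than $D$ is a negligible $O(n)$ operations in $\Fqm$.

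Summing all contributions yields the claimed bound of $\softO(\OMul{n})$. There is no single hard step: the work was done in setting up Lemma~\ref{lem:skew_RS_order_basis} and in establishing the fast remainder-evaluation primitives and the fast skew \textsf{PM-Basis} algorithm. The only mild subtlety is checking that $\deg \MABin$ and $d$ both remain in $\Theta(n)$ so that the approximant-basis cost stays within $\softO(\OMul{n})$; once $D \in \Theta(n)$ is assumed this is immediate.
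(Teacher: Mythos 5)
Your proposal is correct and follows essentially the same route as the paper's own proof: correctness is read off from Lemma~\ref{lem:skew_RS_order_basis}, and the complexity is obtained by summing the costs of the annihilator computation (Theorem~\ref{thm:fast_remainder_MSP}), the remainder interpolation (Theorem~\ref{thm:fast_remainder_interpolation}), and the left approximant basis computation of order $d\in\Theta(n)$ via Theorem~\ref{thm:correctness_DaCApp}. Your added details (the explicit $a=3$, $b=1$ instantiation and the negligible final row inspection) are consistent with, and slightly more explicit than, the paper's argument.
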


\begin{proof}
Correctness follows directly from Lemma~\ref{lem:skew_RS_order_basis}.

Setting up the matrix $\MABin$ consists of computing a remainder annihilator polynomial of degree $n$ and an interpolation polynomial of degree $<n$.
Both operations can be done in $\softO(\OMul{n})$ operations in the base field of the cost bound $\OMul{n}$ using Theorem~\ref{thm:fast_remainder_MSP} and \ref{thm:fast_remainder_interpolation}, respectively.
The approximant basis can be computed in $\softO(\OMul{\max\{D,n\}})\subseteq \softO(\OMul{n})$ operations in the base field of the cost bound $\OMul{n}$ using Algorithm~\ref{alg:rightSkewDaCAppBasis} in Section~\ref{sec:order_bases}.
\end{proof}

\section{Conclusion}\label{sec:conclusion}

\subsection{Summary}

We have presented new algorithms for the underlying computational problems of three different decoders: interpolation-based decoding of interleaved Gabidulin codes in the rank metric, interpolation-based decoding of lifted interleaved Gabidulin codes in the subspace metric, and decoding of linearized/skew Reed--Solomon codes in the sum-rank/skew metric.
Most of these computational problems were shown to be reducible to computing a left or right approximant basis over skew polynomial rings.

For all considered computational problems, hence also all considered decoders, we obtain an improvement in the dependence of the main parameter of a problem, say $n$, of the (soft-$O$) asymptotic complexity bound from a quadratic (or larger) dependence $n^2$ over $\Fqm$ to the cost $\OMul{n}$ of multiplying two skew polynomials of degree at most $n$. Since the latter, expressed in operations in $\Fqm$, is sub-quadratic in $n$ (at least $\OMul{n} \in O(n^{1.69})$, cf.~Section~\ref{ssec:cost_skew_poly_operations}), we obtain significant speed-ups for all algorithms.
See Tables~\ref{tab:overview_decoders} and \ref{tab:overview_tools} in the introduction for a detailed summary.

On the level of decoders, in the subspace- and sum-rank-metric cases we obtain faster decoding algorithms than previously known, while in the rank-metric case, we match the fastest state-of-the-art \cite{sidorenko2014fast} for decoding interleaved Gabidulin codes with a different decoding method.

\subsection{Further Applications}\label{ssec:further_applications}

Some of the studied computational problems (cf.~Table~\ref{tab:overview_tools} in the introduction) have further applications beyond the scope of this paper, which we briefly summarize in the following. Since we have obtained faster algorithms to solve these problems, this might also influence these applications.

The vector (operator) interpolation (Problem~\ref{prob:general_interpolation_problem}) also corresponds to the interpolation steps in the decoding algorithms for Mahdavifar--Vardy \cite{mahdavifar2018algebraic}, folded Gabidulin~\cite{bartz2017algebraic}, and virtual interleaved Gabidulin~\cite{guruswami_list_2013} codes.
Hence, Algorithm~\ref{alg:fast_interpolation} immediate speeds up the interpolation steps of these decoders.
Note that root finding in these algorithms is not an instance of the vector root-finding problem (Problem~\ref{prob:general_root-finding_problem}), hence further work is necessary to improve the overall complexity of these decoding algorithms.

Encoding in a linearized or skew Reed--Solomon code corresponds to a multi-point evaluation of a message polynomial at the evaluation points. Hence, Theorem~\ref{thm:fast_remainder_MPE} implies a faster encoder.

The maximally recoverable locally repairable (also called partial MDS) codes in \cite{martinez2019universal} are defined via linearized Reed--Solomon codes.
Repairing globally with these codes corresponds to erasure decoding of these codes and can be implemented by a skew polynomial remainder interpolation (part of Problem~\ref{prob:remainder_arithmetic}). Hence, the algorithm implied by Theorem~\ref{thm:fast_remainder_interpolation} immediately speeds up the repair process of these codes.

\subsection{Remarks on Generality}\label{ssec:remarks_on_generality}

All definitions and statements in Section~\ref{sec:order_bases} (approximant bases), except for complexities, remain true when stated for skew polynomials over arbitrary finite Galois extensions $\mathbb{L}/\mathbb{K}$ instead of $\Fqm/\Fq$ and automorphisms $\sigma \in \Gal(\mathbb{L}/\mathbb{K})$ with $\mathbb{K} = \mathbb{L}^\sigma$. The complexities are as stated if we in addition assume that there is a working basis of $\mathbb{L}/\mathbb{K}$ which allows to multiply, add, and apply $\sigma$ to elements of $\mathbb{L}$ in $\softO([\mathbb{L}:\mathbb{\mathbb{K}}])$ operations in $\mathbb{K}$ (this is the same assumption as in \cite{caruso2017fast}).

The output of Algorithm~\ref{alg:fast_interpolation} has slightly more structure than required by Problem~\ref{prob:general_interpolation_problem} (vector operator interpolation problem in Section~\ref{sec:rank_and_subspace}): the found $\SkewPolys$-linearly independent vectors $\Q^{(1)},\dots,\Q^{(\ell')}$ are reduced, i.e.~the vector of $\w$-degrees is lexicographically minimal over all possible bases of $\Qspace$.

In Section~\ref{sec:rank_and_subspace}, we assumed for the complexity analysis that the input parameters $D$ and $n$ of the vector interpolation problem (Problem~\ref{prob:general_interpolation_problem}) satisfy $D \in \Theta(n)$ since this is the only case relevant for the decoding problems considered here.
It can be seen by adapting the proof of Theorem~\ref{thm:fast_interpolation_correctness_complexity} that for general $D$ and $n$, Algorithm~\ref{alg:fast_interpolation} has complexity $\softO(\ell^\omega \OMul{D+n})$ operations in the base field of the cost bound $\OMul{n}$.
Hence, for $D \ll n$ and $D \gg n$, the algorithm---as stated---is not faster than the one in \cite{xie_linearized_2013}, which has complexity $O(\ell^2 D n)$ over $\Fqm$ in general.
The details are out of the scope of this paper, but we briefly outline observations that we believe could lead to an improved cost of Algorithm~\ref{alg:fast_interpolation} for these parameter ranges:
If $n \ll D$, then the left kernel of $\MABin$ contains a basis of $\ell+1$ elements, whose degree can be bounded only in $n$ and $\w$.
Hence, it appears possible to choose the order $d$ of the sought approximant basis much smaller than $D+n$.
The case $n \gg D$ may be improved by separating the interpolation constraints into $\approx n/D$ groups of $D$ constraints each, and then chaining the minimal approximant basis computations while sifting out high-degree rows.

Analogously, we can improve the cost of solving Problem~\ref{prob:bivariate_remainder_interpolation} (2D vector remainder interpolation in Section~\ref{sec:sum-rank}) for $D \notin \Theta(n)$ by the same methods.

In Problem~\ref{prob:general_root-finding_problem} (vector root-finding problem in Section~\ref{sec:rank_and_subspace}), we assumed that $\max_{i}k^{(i)} \in \Theta(n)$.
In general, Algorithm~\ref{alg:fast_root_finding} has complexity $\softO\!\left( \ell^\omega \OMul{n + \max_{i}k^{(i)}} \right)$ operations in the base field of the cost bound $\OMul{n}$. For $n \gg \max_{i}k^{(i)}$, this may be slower than the algorithms in \cite{wachter2014list,BartzWachterZeh_ISubAMC}.
Again we believe Algorithm~\ref{alg:fast_root_finding} could enjoy modifications similar to those outlined above for Algorithm~\ref{alg:fast_interpolation} to handle these extremal parameter cases more efficiently.

\subsection{Open Problems}

The complexity bound of the new algorithm for the vector operator interpolation problem (Problem~\ref{prob:general_interpolation_problem}) has an extra term $O(\ell m n^{\omega-1})$ operations in $\Fq$ if the first components of the interpolation points are not $\Fq$-linearly independent (cf.~Theorem~\ref{thm:fast_interpolation_correctness_complexity}).
This is due to the fact that we first need to bring the interpolation point matrix into a specific form, which is algorithmically done by transforming an $n \times (\ell+1)m$ matrix over $\Fq$ into reduced row echelon form.
Given the currently fastest skew-polynomial multiplication algorithms, the term $O(\ell m n^{\omega-1})$ operations in $\Fq$ is negligible compared to the term $\softO\!\left(\ell^\omega\OMul{n}\right)$ operations in the base field of the cost bound $\OMul{n}$.
At this point, however, it is not known whether skew-polynomial multiplication \emph{could} be sped up so this term is smallest for some parameters.
It is known that square matrix multiplication and skew-polynomial multiplication are softly equivalent (i.e.~$m^\omega \in \softO(\OMul{m})$ over $\Fq$ (if $\OMul{m}$ is expressed in operations in $\Fq$) and $\OMul{m} \in \softO(m^\omega)$ operations in $\Fq$, cf.~\cite{caruso2017fast,puchinger2017fast}), and answering the above question seem to require relating square matrix multiplication with low-degree skew-polynomial multiplication.

Though we are not aware of an application, it is quite natural to generalize the 2D vector remainder interpolation problem (Problem~\ref{prob:bivariate_remainder_interpolation}) to larger dimensions, analog to the vector operator interpolation problem (Problem~\ref{prob:general_interpolation_problem}).
If the first components of the evaluation points are $P$-independent, it appears to be straightforward to adapt the methods developed in Section~\ref{ssec:interpolation_speed-up} (faster vector operator interpolation) to the $(\ell+1)$ dimensional vector remainder evaluation case.
This corresponds to the special case that the first components of the interpolation points in Problem~\ref{prob:general_interpolation_problem} are $\Fq$-linearly independent.
It is not obvious how to solve the problem if the $P$-independence assumption is dropped.

\appendices

\section{Skew M-Basis Algorithm}\label{app:M-basis}

In this section, we present right and left skew analogs of the \textsf{M-Basis} algorithm \cite[\textsf{M-Basis}]{GiorgiPolyMatrix}.
The algorithms are asymptotically slower than the skew \textsf{PM-Basis} algorithms presented in Section~\ref{ssec:right_PM-basis_recursive_step}, but might be faster for small orders $d$ since their hidden constant is smaller as they do not rely on asymptotically fast skew polynomial arithmetic (cf.~Remark~\ref{rem:M-basis}).

\begin{algorithm}[ht!]
	\caption{$\textsf{RightSkewMBasis}$}\label{alg:rightSkewMBasis}
	\SetKwInOut{Input}{Input}\SetKwInOut{Output}{Output}
	\Input{\begin{itemize}
			\item positive integer $d\in\mathbb{Z}_{>0}$,
			\item matrix $\MABin\in\SkewPolys^{a\times b}$ of degree $<d$, 
			\item shifts $\s\in\mathbb{Z}^{b}$.
	\end{itemize}}
	
	\Output{$\B \in \RMABnameShortStandard$} %
	
	\BlankLine
	
	\If{d=1}{
		\Return{\emph{$\textsf{RightSkewBaseCase}$($\MABin,\s$)} \hfill \myAlgoComment{Algorithm~\ref{alg:rightSkewLinAppBas} in Section~\ref{sec:order_bases}}}
	}
	\Else{
		$\MABout_{1}\gets\textsf{RightSkewMBasis}\left(1,\MABin\reml x,\s\right)$ \\
		$\mat{G}\gets\left(x^{-1}\MABin\MABout_1\right)\reml x^{d-1}$; $\vec{t}\gets \cdeg_\s\left(\MABout_1\right)$ \label{line:mat_mult_right_MBasis} \\
		$\MABout_2\gets\textsf{RightSkewMBasis}\left(d-1,\mat{G},\vec{t}\right)$\\
		\Return{$\MABout_1\MABout_2$\label{line:multiply_result_right_MBasis}}
	}
\end{algorithm}

\begin{algorithm}[ht!]
	\caption{$\textsf{LeftSkewMBasis}$}\label{alg:leftSkewMBasis}
	\SetKwInOut{Input}{Input}\SetKwInOut{Output}{Output}
	\Input{\begin{itemize}
			\item positive integer $d\in\mathbb{Z}_{>0}$,
			\item matrix $\MABin\in\SkewPolys^{a\times b}$ of degree $<d$, 
			\item shifts $\s\in\mathbb{Z}^{a}$.
	\end{itemize}}
	
	\Output{$\B \in \LMABnameShortStandard$} %
	
	\BlankLine
	
	\If{d=1}{
		\Return{\emph{$\textsf{LeftSkewBaseCase}$($\MABin,\s$)} \hfill \myAlgoComment{Algorithm~\ref{alg:leftSkewLinAppBas} in Section~\ref{sec:order_bases}}}
	}
	\Else{
		$\MABout_{1}\gets\textsf{LeftSkewMBasis}\left(1,\MABin\remr x^,\s\right)$ \\
		$\mat{G}\gets\left(\MABout_1\MABin x^{-1}\right)\remr x^{d-1}$; $\vec{t}\gets \rdeg_\s\left(\MABout_1\right)$ \label{line:mat_mult_left_MBasis} \\
		$\MABout_2\gets\textsf{RightSkewMBasis}\left(d-1,\mat{G},\vec{t}\right)$\\
		\Return{$\MABout_2 \MABout_1$ \label{line:multiply_result_left_MBasis}}
	}
\end{algorithm}

\begin{theorem}
Algorithms~\ref{alg:rightSkewMBasis} and \ref{alg:leftSkewMBasis} are correct.
Algorithm~\ref{alg:rightSkewMBasis} has complexity
\begin{equation*}
\softO\big(\max\{a,b\}b^{\omega-1} d^2\big)
\end{equation*}
and Algorithm~\ref{alg:leftSkewMBasis} has complexity
\begin{equation*}
\softO\big(a^{\omega-1}\max\{a,b\} d^2\big)
\end{equation*}
operations in $\Fqm$.
\end{theorem}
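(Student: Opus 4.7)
The plan is to follow the template of Theorem~\ref{thm:correctness_DaCApp}, adapted to the unbalanced recursion of the M-Basis variants.

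For correctness, I would argue by induction on $d$: the base case $d=1$ is handled by Theorems~\ref{thm:correctness_rightLinApp} and \ref{thm:correctness_leftLinApp}, and for $d>1$ the key ingredient is Lemma~\ref{lem:appBasisProduct} applied with the unbalanced split $d_1=1$, $d_2=d-1$. This is the exact same invocation used in the proof of Theorem~\ref{thm:correctness_DaCApp}, so no new combinatorial argument is required. One additionally invokes Lemma~\ref{lem:s_ordered_property_multiplication} to conclude that the final product $\MABout_1\MABout_2$ (right case) or $\MABout_2\MABout_1$ (left case) is in the appropriate ordered weak Popov form.

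For the complexity, I would set up the linear recurrence $C(k)=C(k-1)+T(k)$ with $C(1)$ equal to the base-case cost, and estimate the local work $T(k)$ at level $k$. It splits into three parts. First, the order-$1$ base case costs $O(\min\{a,b\}^{\omega-1}ab)$ operations in $\Fqm$. Second, the update $\mat{G}$ multiplies a matrix of degree $\leq 1$ with a matrix $\MABin$ of degree $<k$; expanding this coefficient-wise over $\Fqm$ reduces to $O(k)$ scalar matrix products, each of shape $(a\times b)\cdot(b\times b)$ in the right case (total $O(\max\{a,b\}b^{\omega-1}k)$) or $(a\times a)\cdot(a\times b)$ in the left case (total $O(\max\{a,b\}a^{\omega-1}k)$). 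Third, the combining product at line~\ref{line:multiply_result_right_MBasis} or \ref{line:multiply_result_left_MBasis} involves one factor of degree $1$ and one of degree $\leq k-1$, contributing $O(b^\omega k)$ (right) or $O(a^\omega k)$ (left), which is absorbed in the second term. Summing $T(k)$ for $k=1,\dots,d$ and using $\sum_{k=1}^d k\in\Theta(d^2)$ gives the claimed bounds, while the cumulated base-case contribution $O(d\cdot\min\{a,b\}^{\omega-1}ab)$ is dominated by the $d^2$-term for $d$ sufficiently large.

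The main obstacle lies in tracking intermediate degrees to confirm that each local multiplication really scales linearly in the current order $k$: namely, that at level $k$ the input matrix $\MABin$ has degree less than $k$ and the matrix $\MABout_2$ returned by the recursive call has degree at most $k-1$. Both facts follow by induction on $k$, using that $\MABout_1$ always has degree $\leq 1$ (by construction of the base case Algorithms~\ref{alg:rightSkewLinAppBas}/\ref{alg:leftSkewLinAppBas}) and that the update $\mat{G}$ is explicitly taken modulo $x^{k-1}$ on the appropriate side.
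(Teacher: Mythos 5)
Your proof is correct and follows essentially the same route as the paper's: correctness via Lemma~\ref{lem:appBasisProduct} with the unbalanced split $d_1=1$, $d_2=d-1$ together with the base-case theorems, and complexity by summing per-iteration multiplication costs over the $d$ levels to obtain the $d^2$ factor. Your per-level accounting (charging $O(\max\{a,b\}b^{\omega-1}k)$ to the update $\mat{G}$ at order $k$, since the matrix $\MABin$ entering that product has degree $<k$) is in fact slightly more careful than the paper's, which loosely treats that factor as having degree $0$; both arrive at the same total bound.
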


\begin{proof}
Correctness follows from Lemma~\ref{lem:appBasisProduct}, as well as the correctness of the base cases (Theorem~\ref{thm:correctness_rightLinApp} for Algorithm~\ref{alg:rightSkewLinAppBas} and Theorem~\ref{thm:correctness_leftLinApp} for Algorithm~\ref{alg:leftSkewLinAppBas}).

The base cases, Algorithm~\ref{alg:rightSkewLinAppBas} for the left case and Algorithm~\ref{alg:leftSkewLinAppBas} are called exactly $d$ times.
In the right case,  Lines~\ref{line:mat_mult_right_MBasis} and \ref{line:multiply_result_right_MBasis} are executed exactly $d-1$ times. Since $\Q$ has degree $0$ and $\MABout_1$ has degree $1$ (see proof of Theorem~\ref{thm:correctness_rightLinApp}), the multiplication $x^{-1}\Q \MABout_1$ costs $O(\max\{a,b\}b^{\omega-1})$ operations in $\Fqm$ and the multiplication $\MABout_1 \MABout_2$ can be done in $O(\max\{a,b\}b^{\omega-1} d)$. Overall, this costs $O(\max\{a,b\}b^{\omega-1} d^2)$ over $\Fqm$. The left case follows analogously.
\end{proof}

\section{Examples}\label{app:examples}

Here, we present some examples that are mentioned in the paper.
Example~\ref{ex:left_right_bases_different} shows that we need to treat left and right approximant bases separately over skew polynomials (cf.~Section~\ref{ssec:skew_order_bases}). This is different to the case of commutative polynomial rings.

\begin{example}\label{ex:left_right_bases_different}
Consider the field $\mathbb{F}_{2^2}$ (represented by $\mathbb{F}_{2^2} = \mathbb{F}_2[b]/(b^2+1)$), with $\sigma = \Frob{2}$, and the following $2 \times 2$ matrix containing skew polynomials
\begin{align*}
\MABin = \begin{bmatrix}
(b + 1)x^{3} + bx & x^{3} + bx^{2} + (b + 1)x \\
(b + 1)x^{3} + bx^{2} + x + b & x^{3} + x^{2} + 1
\end{bmatrix}
\end{align*}
For $\s = [0,0]$ and $d=3$, a left and a right $\s$-minimal approximant basis of $\MABin$ of order $d$ are given as 
\begin{align*}
\MABout_\mathrm{left} &= 
\begin{bmatrix}
x^{2} & 0 \\
bx + b & x
\end{bmatrix} \in \SkewPolys^{2 \times 2} \quad \text{ and} \\ %
\MABout_\mathrm{right} &= \begin{bmatrix}
x^{2} + (b + 1)x & 1 \\
x & x + b
\end{bmatrix} \in \SkewPolys^{2 \times 2},
\end{align*}
respectively. However, we have
\begin{align*}
\MABin^\top \MABout_\mathrm{left}^\top \reml x^3 &=
\begin{bmatrix}
0 & (b + 1)x \\
0 & x^{2} + (b + 1)x
\end{bmatrix}, \\
\MABout_\mathrm{right}^\top \MABin^\top \remr x^3 &=
\begin{bmatrix}
0 & x^{2} + (b + 1)x \\
x^{2} + (b + 1)x & 0
\end{bmatrix}.
\end{align*}
Hence, in contrast to the ordinary polynomial ring $\Fqm[x]$, the matrix $\MABout_\mathrm{left}^\top$ is not a right $\s$-minimal approximant basis of $\MABin^\top$ of order $d$ and $\MABout_\mathrm{right}^\top$ is not a left $\s$-minimal approximant basis of $\MABin^\top$ of order $d$.
\end{example}

Example~\ref{ex:counterexample_reduction_high_orders} shows that, in contrast to matrices of degree $0$ and order $1$, right approximant bases over skew polynomials cannot be in general computed from ones over ordinary polynomial rings using the mapping $\varphi$ (cf.~\eqref{eq:phi_mapping}). See Remark~\ref{rem:PM_base_case_proof_strategy} in Section~\ref{ssec:fast_order_bases_computation} for more details.

\begin{example}\label{ex:counterexample_reduction_high_orders}
Consider the field $\mathbb{F}_{2^2}$ (represented by $\mathbb{F}_{2^2} = \mathbb{F}_2[b]/(b^2+1)$), with $\sigma = \Frob{2}$ and the matrix
\begin{align*}
\MABin = \begin{bmatrix}
(b + 1)x^{2} + (b + 1) & bx^{2} + bx + (b + 1) \\
x + b & x^{2} + bx + b
\end{bmatrix} \\
\in \SkewPolys^{2 \times 2} \ .
\end{align*}
We want to compute an approximant basis of $\MABin$ of order $2$ with respect to the shift vector $\s = [0,0]$ (i.e., unshifted).
First, we compute 
\begin{align*}
\MABinhat &= \varphi^{-1}(\MABin) \\
&= \begin{bmatrix}
\left(b + 1\right) x^{2} + b + 1 & b x^{2} + \left(b + 1\right) x + b + 1 \\
x + b & x^{2} + \left(b + 1\right) x + b
\end{bmatrix} \\
& \in \Fqm[x]^{2 \times 2},
\end{align*}
and, using the \textsf{PM-Basis} algorithm over $\Fqm[x]$ \cite{GiorgiPolyMatrix,neiger2016bases}, an $\s$-minimal approximant basis of order $2$ of $\MABinhat$ is,
\begin{align*}
\MABouthat &= \begin{bmatrix}
x + 1 & x \\
1 & x
\end{bmatrix} \in \Fqm[x]^{2 \times 2}.
\end{align*}
However, we have
\begin{align*}
\MABin \cdot \varphi\!\left(\MABouthat\right) &= \begin{bmatrix}
(b + 1)x^{3} + x^{2} + x & x^{3} + bx^{2} \\
x & x^{3} + (b + 1)x^{2}
\end{bmatrix} \\
&\equiv \begin{bmatrix}
x & 0 \\
x & 0
\end{bmatrix} \modl x^2 \ ,
\end{align*}
so the rows of $\varphi(\MABouthat)$ are not approximants of $\MABin$ of order $2$.
\end{example}

\section{Module Description of the Vector Operator Interpolation Problem}
\label{app:module_description}

In this section we show how to find a basis for the left $\SkewPolys$-module described by condition~\eqref{eq:interpolation_problem_eval} in Problem~\ref{prob:general_interpolation_problem}.
For notational convenience, we denote the $i$-th row of the input matrix $\U \in \Fqm^{n \times (\intOrder+1)}$ of the problem as $\u_i = [U_{i,1},\dots,U_{i,\ell+1}]$.
Recall that the $\u_i$ are called interpolation points.
We define the corresponding left $\SkewPolys$ module as 
\begin{align*}
 \module{\{\u_1,\dots,\u_n\}}:=\Big\{&[Q_0,\dots,Q_{\ell}]\in\SkewPolys^{\ell+1} \, : \\
 \, &\sum_{j=1}^{\ell+1} Q_{j-1}\!\left(U_{i,j}\right)=0, \forall i=1,\dots,n \Big\}.
\end{align*} 
A basis for $\module{\{\u_1,\dots,\u_{\nReceive}\}}$ allows to solve Problem~\ref{prob:general_interpolation_problem} using the row reduction methods from~\cite{puchinger2017row}. In this section, we show how to set up such a basis in general
thereby generalizing the special case of Problem~\ref{prob:general_interpolation_problem} discussed in \cite{puchinger2017row} (first column of $\U$ linearly independent).
The following results lays the foundations for constructing a basis for the interpolation module $\module{\{\u_1,\dots,\u_{n}\}}$ recursively.

Consider a matrix $\Z\in\Fqm^{n\times (\ell+1)}$ of the form
\begin{equation}
\Z = \left[
\def\arraystretch{1.7}
\begin{array}{ccccc}
	\multicolumn{5}{c}{\Z^{(1)}} \\\hline
	\multicolumn{2}{c|}{\0} & & \Z^{(*)} &   \\
\end{array}
\right]
\end{equation}
where $\Z^{(1)}\in\Fqm^{\nu\times (\ell+1)}$ with $z_{1,1}^{(1)},\dots,z_{\nu,1}^{(1)}$ being $\Fq$-linearly independent and $\Z^{(*)}\in\Fqm^{(n-\nu)\times \ell}$.
Denote by $\z_i$ and $\z^{(*)}_i$ the $i$-th row of $\Z$ and $\Z^{(*)}$, respectively.

\begin{proposition} 
  \label{prop:recurse_basis}
  If $\L \in \SkewPolys^{\ell \times \ell}$ is a (lower-triangular) basis for $\module{\{\z^{(*)}_{1},\ldots,\z^{(*)}_{n-\nu}\}} \subseteq \SkewPolys^{\ell}$, then the following matrix is a (lower-triangular) basis for $\module{\{\z_{1},\ldots,\z_n\}} \subseteq \SkewPolys^{(\ell+1)\times (\ell+1)}$:
  \[
    \M =
    \left[\begin{array}{c|c}
      G       & \hspace*{3em}                      \\
      \hline
      \begin{array}{c}
        R_1 \\
        \vdots \\
        R_{\ell}
      \end{array} & \L \\
    \end{array}\right]
    \ ,
  \]
  where 
  \begin{equation}
  	G \gets\MSPop_{\langle z_{1,1}^{(1)},\dots,z_{\nu,1}^{(1)}\rangle}
  \end{equation}
  and each $R_j$ is the interpolation skew polynomial given by:
  \[
    R_j(z_{i,1}^{(1)}) = -\L_j(z_{i,2}^{(1)},\ldots,z_{i,\ell+1}^{(1)}) \ , \quad i=1, \ldots, \nu \ ,
  \]
  where $\L_j$ is the $j$'th row of $\L$.
\end{proposition}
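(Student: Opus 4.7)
The plan is to prove the proposition in two directions: first that every row of $\M$ lies in $\module{\{\z_1,\ldots,\z_n\}}$ (the ``easy'' containment), and then that every element of the module is a left $\SkewPolys$-linear combination of the rows of $\M$ (the generation step). The lower-triangular shape will be immediate once both are established.

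For the containment step, I would process the two blocks of rows of $\M$ separately. For the first row $[G, 0, \ldots, 0]$, evaluation against a top interpolation point $\z_i = \z_i^{(1)}$ (with $i \leq \nu$) reduces to $G(z_{i,1}^{(1)})$, which vanishes because $G = \MSPop_{\langle z_{1,1}^{(1)},\ldots,z_{\nu,1}^{(1)}\rangle}$. For a bottom interpolation point $\z_i$ (with $i > \nu$), the first coordinate is zero, and $G(0)=0$ takes care of it. For the remaining rows $[R_j, \L_j]$, evaluation at a top point $\z_i^{(1)}$ gives $R_j(z_{i,1}^{(1)}) + \L_j(z_{i,2}^{(1)},\ldots,z_{i,\ell+1}^{(1)})$, which equals zero by the interpolation property defining $R_j$. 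Evaluation at a bottom point $\z_i$ again kills $R_j$ via the leading zero, and the remaining sum is $\L_j \cdot \z^{(*)}_{i-\nu}$ evaluated coordinatewise, which vanishes because $\L_j$ is a row of a basis of $\module{\{\z^{(*)}_1,\ldots,\z^{(*)}_{n-\nu}\}}$.

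For the generation step, let $\Q = [Q_0, Q_1, \ldots, Q_\ell] \in \module{\{\z_1,\ldots,\z_n\}}$ and write $\Q' := [Q_1,\ldots,Q_\ell]$. Restricting the interpolation condition on $\Q$ to the bottom points, the $Q_0$-contribution disappears (since those points have zero first coordinate), so $\Q' \in \module{\{\z^{(*)}_1,\ldots,\z^{(*)}_{n-\nu}\}}$. By hypothesis, there exist $a_1,\ldots,a_\ell \in \SkewPolys$ with $\Q' = \sum_{j=1}^{\ell} a_j \L_j$. Subtracting $\sum_j a_j \cdot [R_j,\L_j]$ from $\Q$ yields a vector $[\tilde Q_0, 0, \ldots, 0]$ where $\tilde Q_0 := Q_0 - \sum_j a_j R_j$, and this vector still lies in $\module{\{\z_1,\ldots,\z_n\}}$ because both summands do.

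The crux of the argument is then to identify all elements of the module whose last $\ell$ coordinates are zero. Such a vector $[\tilde Q_0, 0, \ldots, 0]$ lies in the module iff $\tilde Q_0$ annihilates every first coordinate $z_{i,1}$; the bottom points give the trivial condition $\tilde Q_0(0)=0$, while the top points force $\tilde Q_0$ to vanish on all of $\langle z_{1,1}^{(1)},\ldots,z_{\nu,1}^{(1)}\rangle_{\Fq}$. By the defining property of the (operator) annihilator polynomial recalled in Section~\ref{ssec:evaluation_maps}, this is equivalent to $\tilde Q_0 = b G$ for some $b \in \SkewPolys$. Thus $\Q = b \cdot [G,0,\ldots,0] + \sum_j a_j \cdot [R_j, \L_j]$, showing $\Q$ lies in the left row span of $\M$. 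Finally, if $\L$ is lower-triangular (with nonzero diagonal, which is implicit in it being a basis with minimal-degree rows), the displayed form of $\M$ inherits this property with diagonal entries $G, L_{1,1}, \ldots, L_{\ell,\ell}$, which also yields left $\SkewPolys$-linear independence of its rows. The only subtlety I expect to need care with is the clean bookkeeping of the restriction/projection and the invocation of the annihilator characterization; both are standard, so no serious obstacle is anticipated.
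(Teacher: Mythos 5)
Your proposal is correct and follows essentially the same route as the paper's proof: show each row of $\M$ vanishes on all interpolation points (splitting into the top points $\z_i^{(1)}$ and the bottom points with zero first coordinate), then reduce an arbitrary module element by a combination of the rows $[R_j \mid \L_j]$ to a vector of the form $[\tilde Q_0, 0,\ldots,0]$ and conclude right-divisibility of $\tilde Q_0$ by $G$ via the annihilator property. The only additions beyond the paper's argument are your explicit remarks on linear independence and the lower-triangular shape, which are harmless elaborations.
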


\begin{proof}
  We first show that the rows of $\M$ are in $\module{\{\z_{1},\ldots,\z_n\}}$.
  Clearly $G(z_{i,1}) = 0$ for all $i=1,\dots,n$.
  For $1 \leq i \leq \nu$, it is similarly obvious that $(R_j \mid \L_j)(\z_i) = 0$, so remaining is only to show $(R_j \mid \L_j)(\z_i) = 0$ for $i > \nu$.
  We have $(R_j \mid \L_j)(\z_i) = 0  \iff \L_j \in \module{\{\z^{(*)}_{1},\ldots,\z^{(*)}_{n-\nu}\}}$ which is true.

  To show that $\module{\{\z_{1},\ldots,\z_n\}}$ is in the row span of $\M$, take any $\vec Q = [Q_0,\ldots,Q_{\ell}] \in \module{\{\z_{1},\ldots,\z_n\}}$.
  We have that $[Q_2,\ldots,Q_{\ell}] \in \module{\{\z^{(*)}_{1},\ldots,\z^{(*)}_{n-\nu}\}}$, so there is a $\q \in \SkewPolys^{\ell}$ such that $[Q_1,\ldots,Q_{\ell}] = \q\L$.
  Since the rows of $\vec M$ are in $\module{\{\z_{1},\ldots,\z_n\}}$, so is the following vector:
  \[
    \vec Q' = \vec Q - (0 \mid \vec q) \vec M = (T, 0,\ldots,0) \ .
  \]
  Hence $T(z_{i,1}) = 0$ for $i=1,\ldots,\nu$, and so $T$ must be right-divisible by $G$.
\end{proof}

\begin{proposition}\label{prop:special_basis_case}
 Let $\L \in \SkewPolys^{\ell \times \ell}$ be a (lower-triangular) basis for $\module{\{\z^{(*)}_{1},\ldots,\z^{(*)}_{n-\nu}\}} \subseteq \SkewPolys^{\ell}$, then the following matrix is a (lower-triangular) basis for $\module{\{(0\,|\,\z^{(*)}_{1}),\ldots,(0\,|\,\z^{(*)}_{n-\nu})\}} \subseteq \SkewPolys^{(\ell+1)\times (\ell+1)}$:
 \begin{equation}
  \M =
    \left[\begin{array}{c|c}
      1       & \hspace*{3em}                      \\
      \hline
      \begin{array}{c}
        0 \\
        \vdots \\
        0
      \end{array} & \L \\
    \end{array}\right]
 \end{equation}
\end{proposition}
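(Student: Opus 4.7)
The plan is to mirror the structure of the proof of Proposition~\ref{prop:recurse_basis}: show that (i) each row of $\M$ lies in the target module, and (ii) every element of the module is a left $\SkewPolys$-linear combination of the rows of $\M$. The key simplifying observation is that, since $\sigma$ is $\Fq$-linear and fixes $0$, we have $Q(0) = \sum_i Q_i \sigma^i(0) = 0$ for every $Q \in \SkewPolys$, so the first coordinate of each interpolation point (which is $0$) imposes no constraint at all on $Q_0$. Consequently, the interpolation condition for the points $(0 \mid \z^{(*)}_i)$ reduces exactly to $[Q_1,\dots,Q_{\ell}] \in \module{\{\z^{(*)}_{1},\ldots,\z^{(*)}_{n-\nu}\}}$, with $Q_0 \in \SkewPolys$ arbitrary.

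For step~(i), the first row $(1,0,\dots,0)$ trivially evaluates to $1(0) = 0$ at every interpolation point. Each subsequent row $(0 \mid \L_j)$ has $Q_0 = 0$, so its first-coordinate contribution vanishes, and its remaining part $\L_j$ lies in $\module{\{\z^{(*)}_{1},\ldots,\z^{(*)}_{n-\nu}\}}$ by hypothesis on $\L$. Hence every row of $\M$ satisfies the interpolation constraints.

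For step~(ii), take an arbitrary $\vec Q = [Q_0,Q_1,\dots,Q_{\ell}] \in \module{\{(0\mid\z^{(*)}_{1}),\ldots,(0\mid\z^{(*)}_{n-\nu})\}}$. By the observation above, $[Q_1,\dots,Q_{\ell}] \in \module{\{\z^{(*)}_{1},\ldots,\z^{(*)}_{n-\nu}\}}$, so by the basis property of $\L$ there exists $\q \in \SkewPolys^{1\times\ell}$ with $\q\L = [Q_1,\dots,Q_{\ell}]$. Then the vector $[Q_0 \mid \q] \in \SkewPolys^{1\times(\ell+1)}$ satisfies $[Q_0 \mid \q]\M = \vec Q$, showing that $\vec Q$ is in the left row span of $\M$.

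There is no real obstacle here; the proposition is essentially a bookkeeping statement reflecting that zero coordinates in the interpolation points decouple the corresponding module component. The lower-triangular structure is preserved by construction (the $(1,1)$ entry is $1$, the column below it is zero, and the lower-right block is the lower-triangular $\L$), so no further argument is needed for that claim.
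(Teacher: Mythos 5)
Your proof is correct and follows essentially the same route as the paper: the paper simply specializes the construction of Proposition~\ref{prop:recurse_basis} by noting $G = \MSPop_{\langle 0,\dots,0\rangle} = 1$ and $R_j = 0$, then appeals to "similar arguments", whereas you make the underlying reason explicit (operator evaluation satisfies $Q(0)=0$ for every $Q$, so the zero first coordinate imposes no constraint) and verify both containments directly. Your version is a bit more self-contained but not a different argument.
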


\begin{proof}
 We have that the first entries of the interpolation points are zero and thus not $\Fq$-linearly independent as in Proposition~\ref{prop:recurse_basis}. 
 However, the polynomials $G$ and $R_j$ from Proposition~\ref{prop:recurse_basis} are still well-defined.
 In particular, we have that $G\gets\MSPop_{\langle 0,\dots,0\rangle}=1$ and $R_j=0$ since $R_j(0)=-\L_j(z_{i,1}^{(*)},\dots,z_{i,\ell}^{(*)})=0$ for all $i=1,\dots,n-\nu$ and $j=1,\dots,\ell$. 
 Using similar arguments as in the proof of Proposition~\ref{prop:recurse_basis} we have that the rows of $\M$ vanish on all interpolation points $(0\,|\,\z^{(*)}_{1}),\ldots,(0\,|\,\z^{(*)}_{n-\nu})$ and form a basis for $\module{\{(0\,|\,\z_{1}),\ldots,(0\,|\,\z_n)\}} \subseteq \SkewPolys^{(\ell+1)\times (\ell+1)}$. 
\end{proof}

By applying the result of Proposition~\ref{prop:recurse_basis} and~\ref{prop:special_basis_case} recursively, we obtain Algorithm~\ref{alg:basis}.

\begin{remark}
 Note, that if the entries $u_{1,1}^{(1)},\dots,u_{n,1}^{(1)}$ are $\Fq$-linearly independent, the output of Algorithm~\ref{alg:basis} is a matrix $\M$ as given in~\cite[Lemma~5]{puchinger2017row} for decoding interleaved Gabidulin codes.
 Hence, Algorithm~\ref{alg:basis} handles the general case for constructing a basis for the interpolation module.
\end{remark}

\def\ModuleBasis{\ensuremath{\mathsf{ModuleBasis}}}
\begin{algorithm}[ht]
  \caption{$\ModuleBasis(\intOrder,\U)$}
  \label{alg:basis}
  \SetKwInOut{Input}{Input}\SetKwInOut{Output}{Output}
  \Input{$\intOrder \in \ZZ_{> 0}, \U\in\Fqm^{n\times (\intOrder+1)}$ containing the interpolation points $\u_1,\dots,\u_{n}$ as rows.}
  \Output{$\M \in \SkewPolys^{(\intOrder+1) \times (\intOrder+1)}$, a lower-triangular basis of $\module{\{\u_1,\dots,\u_{n}\}}$.} 
  \BlankLine
  Compute the matrix $\U'$, $\varrho$, $\nu_i$ and $a_i$ for all $i=1,\dots,\varrho$ as in Lemma~\ref{lem:intProblemSubspace_Zi_matrices} \label{line:setup_U_prime}\\
  $\M\gets\I_{(\intOrder-a_\varrho) \times (\intOrder-a_\varrho)}$ \\
  $cnt\gets\varrho$ \\
  \For{$i=1,\dots,a_\varrho+1$}{
  $\L\gets\M$ \\
  	\If{$a_\varrho-i+1\neq a_{cnt}$}{
  		$G \gets 1$ \\
  		$R_j \gets 0$ for all $j=1,\dots,\intOrder-a_\varrho+i-1$
  	}
  	\Else{
	   	$G \gets\MSPop_{\langle u_{1,1}^{(cnt)},\dots,u_{\nu_{cnt},1}^{(cnt)}\rangle}$ \label{line:anni_poly}\\
		$R_j \gets \IPop{\left\{\left(u_{\kappa,1}^{(cnt)}, \L_j(u_{\kappa,2}^{(cnt)},\dots,u_{\kappa,\intOrder+1-a_{cnt}}^{(cnt)})\right)\right\}_{\kappa=1}^{\nu_{cnt}}}$ where $\L_j$ denotes the $j$-th row of $\L$ for $j=1,\dots,\intOrder-a_{cnt}$ \label{line:int_poly}\\
		\begin{equation*}
		\M \gets
	        \left[\begin{array}{c|c}
	          G       & \hspace*{3em}                      \\
	          \hline
	          \begin{array}{c}
	            R_1 \\
	            \vdots \\
	            R_{\intOrder-a_{cnt}}
	          \end{array} & \L \\
	        \end{array}\right]
	        \ .
		\end{equation*}  
		$cnt\gets cnt-1$
	  }
  	}    
  \Return{$\M$}    
\end{algorithm}

\begin{theorem}
  Algorithm~\ref{alg:basis} is correct.
  It has computational complexity $\softO(\intOrder^2\OMul{n})$ 
	operations in the base field of the cost bound $\OMul{n}$ plus $O(\intOrder m n^{\omega-1})$ operations in $\Fq$.
\end{theorem}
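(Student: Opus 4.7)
My plan is to split the proof into a correctness argument by reverse induction over the main loop, and a complexity argument that separately accounts for the three computational bottlenecks: the setup of $\U'$, the annihilator polynomial computations, and the interpolation polynomial computations (including the required multi-point evaluations).

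For correctness, I would first note that $\U'$ in Line~\ref{line:setup_U_prime} is obtained from $\U$ via $\Fq$-linear elementary row operations; since operator evaluation is $\Fq$-linear, $\module{\{\u_1,\dots,\u_n\}} = \module{\{\u_1',\dots,\u_n'\}}$, so it suffices to produce a basis of the latter. Then I would proceed by induction on the loop iteration $i$, showing that at the start of iteration $i$, the matrix $\M$ is a lower-triangular basis of the module of all vectors in $\SkewPolys^{\intOrder - a_\varrho + i - 1}$ that vanish on the rows of $\U'$ whose first $a_\varrho - i + 1$ entries are zero (viewed as vectors of length $\intOrder - a_\varrho + i - 1$ obtained by dropping those leading zeros). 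The base case is the initial identity $\I_{(\intOrder - a_\varrho) \times (\intOrder - a_\varrho)}$, which clearly bases the trivial module associated with the empty set of constraints in the rightmost $\intOrder - a_\varrho$ coordinates. The inductive step splits according to whether the current column index corresponds to some $a_{cnt}$ (nontrivial block) or not: in the first case I would invoke Proposition~\ref{prop:recurse_basis} applied to the block $\Z^{(1)} = \U^{(cnt)}$ and $\Z^{(*)}$ given by the already-processed blocks; in the second case I would invoke Proposition~\ref{prop:special_basis_case}. At termination $i = a_\varrho + 1$, so $\M$ bases the full module $\module{\{\u_1',\dots,\u_n'\}} = \module{\{\u_1,\dots,\u_n\}}$.

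For complexity, the setup in Line~\ref{line:setup_U_prime} costs $O(\intOrder m n^{\omega-1})$ operations in $\Fq$ by Lemma~\ref{lem:intProblemSubspace_Zi_matrices}. The cumulative cost of Line~\ref{line:anni_poly} across all nontrivial iterations is $\softO\!\bigl(\sum_{cnt=1}^{\varrho}\OMul{\nu_{cnt}}\bigr) \subseteq \softO(\OMul{n})$ by Section~\ref{ssec:cost_skew_poly_operations} and convexity of $\OMul{\cdot}$, since $\sum_{cnt}\nu_{cnt} = n$. The dominant cost is Line~\ref{line:int_poly}, which I would analyze as follows. For each $j$, I would first compute all values $\L_j(u_{\kappa,2}^{(cnt)},\dots,u_{\kappa,\intOrder+1-a_{cnt}}^{(cnt)})$ for $\kappa = 1,\dots,\nu_{cnt}$ by performing one multi-point operator evaluation per entry $L_{j,k}$ of $\L_j$ and summing; then I would interpolate at the $\nu_{cnt}$ points.

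The key observation needed to reach the claimed bound is that throughout the algorithm the entries in any single row of $\L$ have total degree (summed over the row) bounded by $n$, because at the moment row $j$ was first introduced it was the row of an interpolation polynomial plus leading part whose entries were already present in $\L$, and because subsequent iterations do not modify existing rows but only prepend new columns carrying zeros into row $j$ at the newly-added positions. Granting this degree budget per row, batching the multi-point evaluations of the $O(\intOrder)$ entries of a single row $\L_j$ at $\nu_{cnt}$ points costs $\softO(\OMul{n + \nu_{cnt}}) \subseteq \softO(\OMul{n})$ in the base field of the cost bound $\OMul{n}$; summing over the $O(\intOrder)$ rows and over all iterations (and again using convexity for the interpolation step) yields $\softO(\intOrder^2 \OMul{n})$. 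The main obstacle is precisely establishing this per-row degree budget cleanly; I would formalize it by induction on the iteration, tracking how Proposition~\ref{prop:recurse_basis} fills in the new row with an annihilator polynomial $G$ of degree $\nu_{cnt}$ and interpolation entries $R_k$ of degree less than $\nu_{cnt}$, so that the $j$-th row acquires total degree increment at most $\nu_{cnt}$ in each iteration that touches row $j$ (and none thereafter), making the total bounded by $\sum_{cnt}\nu_{cnt} = n$.
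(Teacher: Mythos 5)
Your proof follows the same route as the paper's: correctness by recursive application of Propositions~\ref{prop:recurse_basis} and~\ref{prop:special_basis_case} (which the paper dispatches in one sentence), and complexity by summing the costs of setting up $\U'$, the annihilator computations, and the interpolation/multi-point-evaluation work in Line~\ref{line:int_poly}. The one substantive thing you add is the per-row degree budget for $\L$ --- that the degrees of the entries of any fixed row sum to at most $n$ --- and this is genuinely needed: bounding each entry's degree by $n$ individually would only give $\softO(\intOrder^3\OMul{n})$ for the multi-point evaluations, whereas your budget (each row gains at most $\deg G=\nu_{cnt}$, resp.\ $\deg R_j<\nu_{cnt}$, per nontrivial iteration, totalling $\sum_{cnt}\nu_{cnt}=n$) yields the claimed $\softO(\intOrder^2\OMul{n})$; the paper asserts the per-row evaluation cost without addressing the entry degrees at all. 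Two small slips to repair when writing this up: (i) your loop invariant has an off-by-one, since dropping $a_\varrho-i+1$ leading zeros leaves truncated rows of length $\intOrder-a_\varrho+i$, not $\intOrder-a_\varrho+i-1$, so the module at the start of iteration $i$ should be the one attached to rows with strictly more than $a_\varrho-i+1$ leading zeros; (ii) existing rows are not extended by zeros and a new row is not "the row of an interpolation polynomial" --- the new top row is $[G,0,\dots,0]$ and each old row $\L_j$ becomes $[R_j\mid\L_j]$ with $R_j$ generally nonzero --- though your final degree accounting (increment at most $\nu_{cnt}$ per iteration) is the correct one and is unaffected by this.
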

\begin{proof}
  The correctness of the algorithm follows by applying Proposition~\ref{prop:recurse_basis} and Proposition~\ref{prop:special_basis_case} recursively.

According to Lemma~\ref{lem:intProblemSubspace_Zi_matrices} the computation $\U'$ in Line~\ref{line:setup_U_prime} requires $O\big(\ell m n^{\omega-1}\big)$ operations in $\Fq$.
In each of the $a_\rho+1\in O(\ell)$ steps we need to construct the annihilator polynomial $G$, which requires $\softO(\OMul{\nu_i})\in\softO(\OMul{n})$ operations in the base field of the cost bound. 
Line~\ref{line:int_poly} corresponds to a multi-point evaluation of a row of $\L$ at at most $n$ points, which requires $\softO(\intOrder\OMul{n})$ operations in the base field of the cost bound, and the construction of the interpolation polynomials which requires $\softO(\intOrder\OMul{n})$ operations in the base field of the cost bound.
Hence, the Algorithm requires at most $\softO(\intOrder^2\OMul{n})$ operations in the base field of the cost bound plus $O(\intOrder m n^{\omega-1})$ operations in $\Fq$.

\end{proof}

\bibliographystyle{IEEEtran}
\bibliography{main}

\begin{IEEEbiographynophoto}{Hannes Bartz}
(S'14-M'16) was born in Trostberg, Germany, in 1985.
He received his Dipl.-Ing. and Dr.-Ing. degree from the Technical University of Munich, Germany, in 2010 and 2017, respectively.
In his dissertation (supervised by Prof. Gerhard Kramer) he developed efficient algebraic decoding schemes for error-correcting codes in subspace and rank metric.
In July 2017 he joined the Information Transmission Group within the Institute of Communications and Navigation at the German Aerospace Center (DLR).
His main research interests are code-based post-quantum cryptography and algebraic coding theory.
In 2018 he has been appointed as a Lecturer at the Institute for Communications Engineering (LNT), Technical University of Munich, Germany.
He received the Prof. Dr. Ralf Kötter memorial award in 2012.
\end{IEEEbiographynophoto}

\begin{IEEEbiographynophoto}{Thomas Jerkovits}
received the B.Sc. degree in electrical engineering from Ulm University (UUlm), Ulm, Germany and the M.Sc. degree in electrical engineering from Technical University of Munich (TUM), Munich, Germany in 2013 and 2015, respectively. He currently is working at the German Aerospace Center (DLR) as a member of the Quantum Resistance Cryptography Group. He is also pursuing the doctoral degree at the Institute for Communications Engineering of TUM.
\end{IEEEbiographynophoto}

\begin{IEEEbiographynophoto}{Sven Puchinger}
(S'14, M'19) is a postdoctoral researcher at the Technical University of Munich (TUM), Germany. He received the B.Sc. degree in electrical engineering and the B.Sc. degree in mathematics from Ulm University, Germany, in 2012 and 2016, respectively. During his studies, he spent two semesters at the University of Toronto, Canada. He received his Ph.D. degree from the Institute of Communications Engineering, Ulm University, Germany, in 2018. He has been a postdoc at the Technical University of Munich (2018--2019 and since 2021) and the Technical University of Denmark (2019--2021), Denmark. His research interests are coding theory, its applications, and related computer-algebra methods.
\end{IEEEbiographynophoto}

\begin{IEEEbiographynophoto}{Johan Rosenkilde}
holds a Master’s degree in computer science (2010) and
PhD in mathematics (2013), both from the Technical University of
Denmark. He was then a post-doc at both Ulm University, Germany and at
Inria Saclay, France. From 2015-2021 he was at the Technical University
of Denmark, first as assistant professor then as associate professor. He
is now a Research Engineer at GitHub. His algebraic research interests
include coding theory and computer algebra.
\end{IEEEbiographynophoto}

\end{document}